\title[Laurent polynomial solutions of the boundary qKZ equation]
{Laurent polynomial solutions of the boundary quantum Knizhnik--Zamolodchikov
equation}
\author[K.~Shigechi]{Keiichi~Shigechi}
\email{k1.shigechi AT gmail.com}
\date{\today}
\newcommand\linkpattern[4]{
\foreach \x/\y in {#1}
\draw[thick](0.8*\x,0)..controls (0.8*\x,-0.6*\y+0.6*\x) 
		and (0.8*\y,-0.6*\y+0.6*\x)..(0.8*\y,0);
\foreach \x/\y in {#2}
\draw[dashed,thick](0.8*\x,0)..controls (0.8*\x,-0.6*\y+0.6*\x) 
		and (0.8*\y,-0.6*\y+0.6*\x)..(0.8*\y,0);
\foreach \x/\y in {#3}
	\draw(0.8*\x,0)--(0.8*\x,-0.5)node[anchor=north]{\rm{\y}};
\foreach \x/\y in {#4}
	\draw(0.8*\x,0)--(0.8*\x,-0.5)
	node[circle,inner sep=1pt,draw,anchor=north]{\rm\y};
}
\newcommand\downa[1]{
\draw(#1,0)--(#1,-0.6)(#1-0.12,-0.6+0.12)--(#1,-0.6)--(#1+0.12,-0.6+0.12);
}
\newcommand\tikzpic[2]{
\raisebox{#1\totalheight}{
\begin{tikzpicture}
#2
\end{tikzpicture}
}}
\newtheorem{theorem}{Theorem}[section]
\newtheorem{example}[theorem]{Example}
\newtheorem{lemma}[theorem]{Lemma}
\newtheorem{defn}[theorem]{Definition}
\newtheorem{prop}[theorem]{Proposition}
\newtheorem{cor}[theorem]{Corollary}
\begin{document}
\begin{abstract}
We construct Laurent polynomial solutions of the boundary quantum 
Knizhnik--Zamolodchikov equation for $U_{q}(\widehat{\mathfrak{sl}}_{2})$ on the parabolic Kazhdan--Lusztig 
bases.
They are characterized by non-symmetric Koornwinder polynomials 
with the specialized parameters. 
As a special case, we obtain the solution of the minimal degree.
\end{abstract}
\maketitle

\section{Introduction}
The quantum Knizhnik--Zamolodchikov (qKZ) equation is 
a system of first order $q$-difference equations \cite{Smi86}, 
which is satisfied by the matrix coefficients of the 
products of the vertex operators in the representation 
theory of quantum affine algebras \cite{FreRes92}.
It was shown that the correlation functions for $XXZ$ spin 
chain with quasi-periodic boundary conditions satisfy
the qKZ equation \cite{JimMiw95,JimMiw96}. 
The qKZ equations reappeared in the Razumov--Stroganov 
correspondence \cite{DiFZJ05-1,RazStr04,RazStrPZJ07}: 
the solution is the ground state of the $XXZ$ spin chain 
at $q$ root of unity .
The qKZ equations for arbitrary root system were obtained 
by Cherednik \cite{Che91,Che92}.
We call them boundary (or reflection) qKZ equations.
In the case of type A, one can construct a solution
by using multidimensional integrals of hypergeometric type
\cite{Mim96,MiwTakTar99,VarTar95}.  
Recently, polynomial solutions were constructed 
by using the representation theory of the affine Hecke algebra.
They are level one for $U_{q}(\widehat{\mathfrak{sl}}_n)$ \cite{DiFZJ05-1,DifZJ05-2}, 
level $-1/2$ for $U_{q}(\widehat{\mathfrak{sl}}_{2})$ \cite{KasPas07} and level $\frac{k+1}{r-1}-n$ 
for $U_q(\widehat{\mathfrak{sl}}_{n})$ \cite{KasTak07}.

The integrability of a physical system with boundaries is ensured by 
the Yang--Baxter equation \cite{Bax82,Yan67} and the reflection 
(or boundary Yang--Baxter) equation \cite{Che84,Skl88}.  
The boundary qKZ equations are classified into three classes depending
on the choice of $K$-matrices (or boundary $R$-matrices).
The first class is the cases where two $K$-matrices are diagonal. 
The boundary qKZ of this class is discussed in 
\cite{deGPya10,DiF05,JimKedKojKonMiw95,JimKedKonMiw95}. 
The second class is the case where one $K$-matrix is diagonal and the other
is not diagonal. 
This class is discussed in \cite{deGPya10,PZJ07}. 
The third class is the case where two $K$-matrices are not diagonal.
In the case of $U_{q}(\widehat{\mathfrak{sl}}_2)$, the boundary qKZ equation of 
the third class has eight parameters: 
three Hecke parameters $q,q_0,q_{N}$, the shift parameter $s$, 
the Baxterization parameter $\zeta_{0}, \zeta_{N}$ and two more 
parameters $\kappa_{0}, \kappa_{N}$. 
The Laurent polynomial solutions in the case of 
$q_{0}=q_{N}=\sqrt{-1}q^{1/2}$ is discussed in \cite{Can09,deGPonShi09}.
In this paper, we consider the boundary qKZ equations of the second 
and the third classes for $U_{q}(\widehat{\mathfrak{sl}}_{2})$.
We call the boundary qKZ equation of the second (resp. third) class
the one-boundary (resp. two-boundary) case.

We construct Laurent polynomial solutions of the boundary qKZ equations
on the parabolic Kazhdan--Lusztig bases for the Hermitian symmetric 
pair $(B_{N},A_{N-1})$ \cite{Boe88,Bre09,Shi14-1} (see also \cite{KL79,Deo87}). 
The boundary qKZ equation can be regarded as compatibility conditions 
of the two representations: a representation by Kazhdan--Lusztig bases 
and a polynomial representation of the affine Hecke algebra. 
The affine Hecke algebra consists of $T_{i}$, $0\le i\le N$, and 
$Y_{i}$, $1\le i\le N$ (see Section \ref{sec-algebras}). 
The Cherednik--Noumi \cite{Nou95} $Y$-operators commute with each other and generalize
Dunkl operators.
In the polynomial representation, the non-symmetric Macdonald--Koornwinder 
polynomials \cite{Koo92} are the simultaneous eigenfunctions of the 
operators $Y_{i}$ \cite{Sah99}. 
The (double) affine Hecke algebra of type $(C^{\vee}_{N}, C_{N})$ with 
$N\ge2$ has six parameters: $q, q_{0}, q_{N}, s, \zeta_{0}$ and $\zeta_{N}$. 
The appearance of the non-symmetric Koornwinder polynomial for generic 
parameters in  the theory of the $XXZ$ spin chain is discussed in \cite{StoVla13}. 
The polynomial representations for specialized parameters are exposed in 
\cite{Kas08}. 
In this paper, we consider the specialization $q^{2(k+1)}s^{2(r'-1)}=1$ 
where 
\begin{eqnarray}
\label{spec-0}
(k,r')=
\begin{cases}
(1,r+1), &\text{for two-boundary case}, \\
(2,2r+1), & \text{for one-boundary case},
\end{cases}
\end{eqnarray}
where $r\in\mathbb{N}_{+}$. 
There, the polynomial representation of the double affine Hecke algebra 
can be non-$Y$-semisimple or reducible. 
In the case of $q^{2(k+1)}s^{2(r-1)}=1$, a polynomial representation considered 
in this paper is characterized by the so-called wheel conditions \cite{FeiJimMiwMuk03,Kas05}. 
On the other hand, there exists a Kazhdan--Lusztig basis such that the basis 
is a simultaneous eigenfunction of the $Y$-operators.
Thus we can identify the two representations by finding the non-symmetric 
Koornwinder polynomial associated with the Kazhdan--Lusztig basis.
If the paramters satisfy 
\begin{eqnarray*}
&&q_{N}^{2}=-q, \qquad\qquad\text{for one-boundary case}, \\
&&\omega_{m}^{\pm mJ/r}q^{\mp(N-1+4J/r)}(q_{0}q_{N})^{\pm1}=\kappa_{0}\kappa_{N}, 
\quad \text{for two-boundary case}
\end{eqnarray*}
where $\omega_{m}$ is a primitive $m$-th root of unity, $m=GCD(k+1,r-1)$ 
and $J,r\in\mathbb{N}_{+}$,  
we have a Laurent polynomial solution characterized by a non-symmetric Koornwinder 
polynomial with the specialized parameters.
As a special case, we obtain the solution of the minimal degree.

The plan of the paper is as follows. 
We briefly review the affine Hecke algebra and the two-boundary 
Temperley--Lieb algebra in Section \ref{sec-algebras}.
In Section \ref{sec-Reps}, we summarize the representations of the 
one- and two-boundary Temperley--Lieb algebra on the standard bases and 
on the parabolic Kazhdan--Lusztig bases. 
We collect the definitions and properties of non-symmetric Koornwinder 
polynomials under the specialization (\ref{spec-0}).
In section \ref{sec-qKZ}, we recall the boundary qKZ equations for 
one- and two-boundary cases.
In Section \ref{sec-redqKZ}, we consider the reduction of the boundary
qKZ equation. 
Since the boundary qKZ equations contain many equivalent equations, 
we extract non-trivial equations from them.
We also show the conditions which a solution of the boundary qKZ equation
satisfies.
In Section \ref{sec-Laurent}, we construct Laurent polynomial solutions of 
the boundary qKZ equation by combining the results of Section \ref{sec-Reps} 
and Section \ref{sec-redqKZ}.

\section{Algebras}
\label{sec-algebras}
\subsection{Affine Hecke algebra}

The affine root system of type $C_{N}$ can be 
realized in $\mathbb{R}^{N}\oplus\mathbb{R}\delta$
where $\delta$ is a radical element. 
The affine simple roots are given by 
\begin{eqnarray*}
\alpha_i=\epsilon_i-\epsilon_{i+1},\quad (1\le i\le N-1), \quad
\alpha_N=2\epsilon_N, \quad \alpha_0=\delta-2\epsilon_1,
\end{eqnarray*}
where $\epsilon_i$ is the standard 
orthonormal bases satisfying 
$\langle\epsilon_i,\epsilon_j\rangle=\delta_{ij}$.
We denote by 
$\alpha_{i}^{\vee}:=2\alpha_i/\langle\alpha_i,\alpha_i\rangle$
simple co-roots.

Let $W_0:=\langle s_1,\ldots,s_N\rangle$ and 
$W:=\langle s_0,\ldots,s_N\rangle$ be the finite 
and affine Weyl group of type $C_N$ respectively.
The faithful action of $W$ on $\mathbb{R}^{N}$ is given 
by 
\begin{eqnarray*}
&&s_0\cdot(v_1,\ldots,v_N)=(-1-v_1,v_2,\ldots,v_N), \\
&&s_i\cdot(v_1,\ldots,v_N)=(v_1,\ldots,v_{i-1},v_{i+1},v_{i},
v_{i+2},\ldots,v_N), \quad 1\le i\le N-1, \\
&&s_N\cdot(v_1,\ldots,v_N)=(v_1,\ldots,v_{N-1},-v_N), 
\end{eqnarray*}
The $s$-dependent action of the affine Weyl group on 
$\mathbf{z}:=(z_1,\ldots,z_N)\in(\mathbb{C}^{*})^{N}$
is given by 
\begin{eqnarray*}
s_0\mathbf{z}&:=&(s^{2}z_1^{-1},z_2,\ldots,z_N), \\
s_i\mathbf{z}&:=&(z_1,\ldots,z_{i-1},z_{i+1},z_{i},
z_{i+2},\ldots,z_{N}), \quad 1\le i\le N-1,  \\
s_N\mathbf{z}&:=&(z_1,\ldots,z_{N-1},z_{N}^{-1}).
\end{eqnarray*}

The affine Hecke algebra $\mathcal{H}_N$ is 
the unital associative $\mathbb{C}(q,q_0,q_{N})$-algebra generated 
by $T_i$ ($0\le i\le N$) and $Y_j$ ($1\le j\le N$) satisfying
the relations:
\begin{eqnarray*}
&&(T_i+q)(T_i-q^{-1})=0,\quad 1\le i\le N-1, \\ 
&&(T_N+q_N)(T_N-q_N^{-1})=0, \\
&&(T_0+q_0)(T_0-q_0^{-1})=0, \\
&&T_0T_1T_0T_1=T_1T_0T_1T_0, \\
&&T_{N}T_{N-1}T_{N}T_{N-1}=T_{N-1}T_{N}T_{N-1}T_N, \\
&&T_iT_{i+1}T_{i}=T_{i+1}T_{i}T_{i+1}, \quad 1\le i\le N-2, \\
&&T_iT_j=T_jT_i, \quad |i-j|>1, \\
&&Y_iY_j=Y_jY_i, \quad \forall i,j,\\
&&T_{i}Y_{j}=Y_jT_i, \quad \langle \alpha_i,\epsilon_j\rangle=0,\\
&&T_iY_{i+1}T_{i}=Y_{i}, \quad 1\le i\le N-1,  \\
&&T_N^{-1}Y_N=Y_{N}^{-1}T_N-(q_0-q_0^{-1}).
\end{eqnarray*}
By the Bernstein--Zelevinsky presentation of the affine Hecke algebra 
(see, {\it e.g.}, \cite{Lus89}), 
we have the correspondence 
\begin{eqnarray}
\label{BZ-TY}
Y_i\mapsto T_i\ldots T_{N-1}T_N\ldots T_0T_1^{-1}\ldots T_{i-1}^{-1}.
\end{eqnarray}
We denote by $\mathcal{H}_{N}^{0}$ the Hecke algebra of type B generated 
by $T_{i}, 1\le i\le N$.

\subsection{Two-boundary Temperley--Lieb algebra}
The {\it two-boundary Temperley--Lieb} algebra
\cite{deGNic09,deGPya04} 
is a unital associative algebra over 
$\mathbb{C}(q,q_{0},q_{N},\kappa_{0},\kappa_{N})$ 
generated by $e_i$, $0\le i\le N$, with the 
relations
\begin{eqnarray*}
&&e_i^2=-(q_i+q_i^{-1})e_i,\quad 0\le i\le N, \\
&&e_ie_{i\pm1}e_i=a_{i,i\pm1}e_i, \quad 1\le i\le N-1, \\
&&e_ie_j=e_je_i, \quad |i-j|>1,
\end{eqnarray*}
where $q_1=q_2=\ldots=q_{N-1}:=q$ and 
\begin{eqnarray}
a_{i,j}:= 
\begin{cases}
\kappa_{N}(qq_{N}^{-1}+q^{-1}q_N), & (i,j)=(N-1,N), \\
\kappa_{0}(qq_{0}^{-1}+q^{-1}q_0), &  (i,j)=(1,0), \\
1, & otherwise.
\end{cases}
\end{eqnarray}
The two-boundary Temperley--Lieb algebra is infinite dimensional.
We consider the following two conditions to make the algebra finite 
dimensional: 
\begin{eqnarray*}
I_{N}J_{N}I_{N}=\alpha I_N, \qquad
J_{N}I_{N}J_{N}=\alpha J_N
\end{eqnarray*}
where 
\begin{eqnarray*}
&&I_{2n}:=\prod_{i=0}^{n-1}e_{2i+1}, \quad I_{2n+1}:=e_{0}\prod_{i=1}^{n}e_{2i}, \\
&&J_{2n}:=e_{0}\prod_{i=1}^{n-1}e_{2i}\cdot e_{N}, \quad 
J_{2n+1}:=\prod_{i=0}^{n-1}e_{2i+1}\cdot e_{N}, \\
&&\alpha=\begin{cases}
(\kappa_{N}^{-1}+\kappa_{0}q_{0}q_{N}^{-1})
(\kappa_{0}^{-1}+\kappa_{N}q_{0}^{-1}q_{N}), & \text{for $N$  odd}, \\
(\kappa_{0}q_{N}^{-1}-\kappa_{N}^{-1}q^{-1}q_{0})
(\kappa_{N}q_{N}-\kappa_{0}^{-1}qq_{0}^{-1}), & \text{for $N$ even}.
\end{cases}
\end{eqnarray*}

The subalgebra generated by $e_1,\ldots, e_{N}$ is the 
{\it one-boundary Temperley--Lieb} algebra. 

\section{Representations}
\label{sec-Reps}
\subsection{Two-boundary Temperley--Lieb algebra}
Denote $V\cong\mathbb{C}^{2}$ be a $\mathbb{C}$-vector space with 
an ordered basis $(v_+,v_-)$. 
We consider the lexicographic order of basis in $V^{\otimes m}$.
For example, the ordered bases in $V^{\otimes2}$ are 
$(v_+\otimes v_+,v_+\otimes v_-,v_-\otimes v_+,v_-\otimes v_-)$.  
Let $\epsilon\in\{+,-\}^{N}$ be a binary string of length $N$. 
We abbreviate by $v_{\epsilon}$ a basis 
$v_{\epsilon_1}\otimes \ldots\otimes v_{\epsilon_N}$

The two-boundary Temperley--Lieb algebra has a natural representation 
in $\mathrm{End}_{\mathbb{C}}(V^{\otimes N})$
(see, {\it e.g.}, \cite{deGNic09} and references therein). 
This representation has two parameters $\kappa_0$ and $\kappa_N$.
The matrix representation of the generators are 
\begin{eqnarray*}
e_i&=&\underbrace{\mathbf{1}\otimes\ldots\otimes\mathbf{1}}_{i-1}
\otimes 
\begin{pmatrix}
0 & 0 & 0 & 0 \\
0 & -q^{-1} & 1 & 0 \\
0 & 1 & -q & 0 \\
0 & 0 & 0 & 0 
\end{pmatrix}
\otimes\underbrace{\mathbf{1}\otimes\cdots\otimes\mathbf{1}}_{N-i-1}, 
\qquad 1\le i\le N-1,\\
e_N&=&\underbrace{\mathbf{1}\otimes\ldots\otimes\mathbf{1}}_{N-1}
\otimes 
\begin{pmatrix}
-q_N^{-1} & \kappa_N \\
\kappa^{-1}_N & -q_N 
\end{pmatrix}, \\
e_0&=&
\begin{pmatrix}
-q_0 & \kappa_0^{-1} \\
\kappa_0 & -q_0^{-1} 
\end{pmatrix} 
\otimes 
\underbrace{\mathbf{1}\otimes\ldots\otimes\mathbf{1}}_{N-1}.
\end{eqnarray*}
 
We define the $R$-matrix acting on $V\otimes V$ by
\begin{eqnarray}
\check{R}_i(z):=\frac{qz-q^{-1}}{q-q^{-1}z}\mathbf{1}+\frac{z-1}{q-q^{-1}z}e_i.
\end{eqnarray}
The $R$-matrix satisfies the unitarity condition 
$\check{R}(z)\check{R}(1/z)=\mathbf{1}$
and the Yang--Baxter equation \cite{Bax82,Yan67}:
\begin{eqnarray*}
\check{R}_{i}(z)\check{R}_{i+1}(zw)\check{R}_{i}(w)
=\check{R}_{i+1}(w)\check{R}_{i}(zw)\check{R}_{i+1}(z).
\end{eqnarray*}
Similarly, we define the $K$-matrices (or boundary $R$-matrices) as 
\begin{eqnarray*}
K_N(z)&:=&
\frac{(z+q_{N}\zeta_{N})(z-q_{N}\zeta_{N}^{-1})}
{(1+q_{N}\zeta_{N}z)(1-q_{N}\zeta_{N}^{-1}z)}\mathbf{1}
+
\frac{q_{N}(z^2-1)}
{(1+q_{N}\zeta_{N}z)(1-q_{N}\zeta_{N}^{-1}z)}e_{N}, \\
K_{0}(z)&:=&
\frac{(z^{-1}+q_{0}\zeta_{0})(z^{-1}-q_{0}\zeta_{0}^{-1})}
{(1+q_{0}\zeta_{0}z^{-1})(1-q_{0}\zeta_{0}^{-1}z^{-1})}\mathbf{1}
+
\frac{q_{0}(z^{-2}-1)}
{(1+q_{0}\zeta_{0}z^{-1})(1-q_{0}\zeta_{0}^{-1}z^{-1})}e_{0}, \\
\end{eqnarray*}
where $\zeta_0$ and $\zeta_N$ are free parameters which appear 
by the Baxterization.
These $K$-matrices satisfy the unitarity equation 
$K_0(z)K_{0}(1/z)=1=K_N(z)K_{N}(1/z)$ 
and the boundary Yang--Baxter (or reflection)
equations \cite{Che84,Skl88}:
\begin{eqnarray*}
K_{N}(w)\check{R}_{N-1}(1/(zw))K_{N}(z)\check{R}_{N-1}(w/z)
&=&
\check{R}_{N-1}(w/z)K_{N}(z)\check{R}_{N-1}(1/(wz))K_{N}(w), \\
K_{0}(z)\check{R}_{1}(zw)K_{0}(w)\check{R}_{1}(w/z)
&=&
\check{R}_{1}(w/z)K_{0}(w)\check{R}_{1}(wz)K_{0}(z). \\
\end{eqnarray*}

\subsection{Kazhdan--Lusztig bases}
We consider the representation of the affine Hecke algebra 
in $V^{\otimes N}$.
The two-boundary Temperley--Lieb algebra can be regarded 
as the affine Hecke algebra with quotient relations 
through $T_i\mapsto e_i+q_i^{-1}$. 
We will define three types of (parabolic) Kazhdan--Lusztig 
bases of $\mathcal{H}_{N}^{0}$ following \cite{Deo87,KL79,Lus03,Shi14-1}. 
We call these bases type BI, BII and BIII respectively. 
The Hecke parameters satisfy $q_{N}=q^{M}$, $M\in\mathbb{Z}_{\ge1}$,  
for type BI and $q_N$ and $q$ are algebraically independent 
for type BII and BIII. 
We consider the abelian groups 
$A^{I}:=\{q^{i}\kappa_N^{j}|i\in\mathbb{Z}, j\in\mathbb{Z}_{\ge0}\}$ 
for type BI and 
$A^{II}:=\{q^{i}q_N^{j}\kappa_N^{k}|i,j\in\mathbb{Z},k\in\mathbb{Z}_{\ge0}\}
=:A^{III}$ for type BII and BIII.
The lexicographic order of $A^{X}$ (X=I, II, III) is defined 
by $A^{X}=A_{+}^{X}\cup\{\kappa_{N}^{i}| i\in\mathbb{Z}_{\ge0}\}\cup A_{-}^{X}$ 
where 
\begin{eqnarray*}
A^{I}_{+}&:=&\{q^{i}\kappa_N^{j}|i>0, j\ge0\}, \\
A^{II}_{+}
&:=&\{q^{i}q_{N}^j\kappa_N^{k}|i>0, j\in\mathbb{Z}, k\ge0\}\cup\{q_N^{i}\kappa_N^{j}|i>0, j\ge0\}, \\
A^{III}_{+}
&=&\{q^{i}q_{N}^{j}\kappa_N^{k}|i\in\mathbb{Z}, j>0, k\ge0\}\cup\{q^{i}\kappa_N^{j}|i>0, j\ge0\}. 
\end{eqnarray*}
We define the involutive ring automorphism of $\mathcal{H}^{0}_N$, 
known as the bar involution, via $\overline{T_{i}}=T_{i}^{-1}$, 
$\overline{q_i}=q_{i}^{-1}$ for $1\le i\le N$. 
The action of $\mathcal{H}^{0}$ on $V^{\otimes N}$ can be realized by 
$T_{i}\mapsto e_{i}+q_{i}^{-1}$ for $0\le i\le N$. 
On the module $V^{\otimes N}$, we define 
$\overline{\kappa_N}=\kappa_N$ and 
$\overline{v_{b_+}}=v_{b_+}$ where $b_+=(+\ldots+)$.

The (parabolic) Kazhdan--Lusztig bases $C_{\epsilon}^{X}$, $\epsilon\in\{\pm\}^{N}$, 
$X=I,II$ or $III$, satisfy 
\begin{enumerate}
\item $\overline{C^{X}_{\epsilon}}=C^{X}_{\epsilon}$, 
\item 
$C^{X}_{\epsilon}=v_{\epsilon}+\sum_{\epsilon'<\epsilon}
\mathbb{Z}(A_{-}^{X})v_{\epsilon'}$ for $X=I,II$ or $III$.
\end{enumerate}
If we set $\kappa_N=1$, we have standard parabolic 
Kazhdan--Lusztig bases studied in \cite{Shi14-1}. 

We briefly review a graphical presentation of Kazhdan--Lusztig 
bases following \cite{Shi14-1}. 
Let $b$ be a binary string $b\in\{\pm\}^{N}$. 
We place an up arrow (resp. a down arrow) from left to right 
according to $b_i=+$ (resp. $b_i=-$). 
First we make a pair between a down arrow and an up arrow
which are next to each other and in this order. 
We connect this pair of arrows via an arc.
Repeat this procedure until all the up arrows are to the 
left to down arrows.
We have three cases according to the type of the lexicographic
order of $A^{X}$.
\paragraph{\bf Type BI}
We put an integer $p$, $1\le p\le M$, on the $(M+1-p)$-th down arrow
from right.
For remaining down arrows, we make a pair of adjacent down arrows from 
right to left. Then we connect the pair via a dashed arc.

\paragraph{\bf Type BII}
We put a vertical line with a mark e (resp. o) on the $2i$-th (resp.
$(2i-1)$-th) down arrow from right. 

\paragraph{\bf Type BIII}
We put a vertical line with a circled integer $i$ on the 
$i$-th down arrow from right to left.

A partial diagram corresponds to a vector in $V^{\otimes N}$ as follows: 
\begin{eqnarray*}
\raisebox{-0.6\totalheight}{
\begin{tikzpicture}
\draw(0,0)..controls(0,-0.6)and(0.8,-0.6)..(0.8,0);
\end{tikzpicture}}\ 
&=&v_{-+}-q^{-1}v_{+-}, \\
\raisebox{-0.6\totalheight}{
\begin{tikzpicture}
\draw[dashed](0,0)..controls(0,-0.6)and(0.8,-0.6)..(0.8,0);
\end{tikzpicture}}\ 
&=&v_{--}-\kappa_N^{2}q^{-1}v_{++}, \\
\raisebox{-0.6\totalheight}{
\begin{tikzpicture}
\draw(0,0)--(0,-0.6)node[anchor=north]{p};
\end{tikzpicture}}\ 
&=&v_{-}-\kappa_{N}q^{-p}v_{+},  \\
\raisebox{-0.6\totalheight}{
\begin{tikzpicture}
\draw(0,0)--(0,-0.6)node[anchor=north]{o};
\end{tikzpicture}}\ 
&=&v_{-}-\kappa_{N}q_{N}^{-1}v_{+},  \\
\raisebox{-0.6\totalheight}{
\begin{tikzpicture}
\draw(0,0)--(0,-0.6)node[anchor=north]{e};
\end{tikzpicture}}\ 
&=&v_{-}+\kappa_{N}q^{-1}q_{N}v_{+},  \\
\raisebox{-0.6\totalheight}{
\begin{tikzpicture}
\draw(0,0)--(0,-0.6)node[circle,inner sep=0pt,draw,anchor=north]{p};
\end{tikzpicture}}\ 
&=&v_{-}-\kappa_{N}q^{p-1}q_{N}^{-1}v_{+},  \\
\end{eqnarray*}
and an unpaired up (resp. down) arrow corresponds to $v_{+}$ (resp. $v_{-}$).
Since a diagram for $b\in\{\pm\}^{N}$ can be regarded as a tensor product 
of the building blocks, we obtain a vector in $V^{\otimes N}$ by 
tensoring the above expressions.

\begin{example}
Let $\epsilon=(--+-)$. 
\begin{eqnarray*}
C_{\epsilon}^{I}&=&
\tikzpic{-0.6}{
\linkpattern{0.8/1.6}{}{2.1/1}{};
\downa{0.3};
}=v_{--+-}-q^{-1}v_{-+--}-\kappa_{N}q^{-1}v_{--++}
+\kappa_{N}q^{-2}v_{-+-+}, \\
C_{\epsilon}^{II}&=&
\tikzpic{-0.6}{
\linkpattern{0.8/1.6}{}{2.1/o,0.3/e}{};
}  \\ 
&=&v_{--+-}-q^{-1}v_{-+--}-\kappa_{N}q_N^{-1}v_{--++}
+\kappa_Nq^{-1}q_Nv_{+-+-}+\kappa_{N}q^{-1}q_N^{-1}v_{-+-+} \\
&&
-\kappa_Nq^{-2}q_Nv_{++--}-\kappa_N^{2}q^{-1}v_{+-++}
+\kappa_N^{2}q^{-2}v_{++-+}, 
\\
C_{\epsilon}^{III}&=&
\tikzpic{-0.6}{
\linkpattern{0.8/1.6}{}{}{2.1/1,0.3/2}
} \\
&=&v_{--+-}-q^{-1}v_{-+--}-\kappa_Nq_N^{-1}v_{--++}
-\kappa_{N}qq_N^{-1}v_{+-+-}+\kappa_{N}q^{-1}q_{N}^{-1}v_{-+-+} \\
&&+\kappa_{N}q_N^{-1}v_{++--}+\kappa_N^{2}qq_N^{-2}v_{+-++}
-\kappa_{N}^{2}q_N^{-2}v_{++-+}.
\end{eqnarray*}
\end{example}

The actions of the two-boundary Temperley--Lieb algebra in the case of 
$\kappa_{0}, \kappa_{N}=1$ are summarized in \cite[Section 3]{Shi14-3}. 
Due to the existence of $\kappa_{0}$ and $\kappa_{N}$, a slight modification is 
necessary. 
For example, the action of $e_i$, $1\le i\le N-1$, on the Kazhdan--Lusztig basis 
of type BII is given by 
\begin{eqnarray*}
e_{i}\left(\tikzpic{-0.5}{
\linkpattern{}{}{0/e,0.6/o}{}
}\right)&=&
k_{N}(qq^{-1}_{N}+q^{-1}q_{N})\tikzpic{-0.5}{
\linkpattern{0/0.8}{}{}{}
}, \\
e_{i}\left(\tikzpic{-0.5}{
\linkpattern{}{}{0/o,0.6/e}{}
}\right)&=&
-k_{N}(q_{N}+q^{-1}_{N})\tikzpic{-0.5}{
\linkpattern{0/0.8}{}{}{}
}, \\
e_{N}\left(\tikzpic{-0.5}{
\linkpattern{0/0.8}{}{}{}
}\right)&=&k_{N}^{-1}\tikzpic{-0.5}{
\linkpattern{}{}{0/e,0.6/o}{}
}.
\end{eqnarray*}

\subsection{One-boundary Temperley--Lieb algebra}
The space $V^{\otimes N}$ is reducible as a representation 
of the one-boundary Temperley--Lieb algebra. 
Let $\mathcal{B}_N$ be a set of binary strings 
$b=(b_1\ldots b_N)\in\{\pm\}^{N}$
satisfying $\sum_{i=1}^{j}b_j\le0$ for all $1\le j\le N$.
Let $\mathcal{L}_{N}$ be a vector space spanned by 
$\{C^{II}_{\epsilon}| \epsilon\in\mathcal{B}_{N}\}$.
Then, the space $\mathcal{L}_{N}$ is irreducible as a representation of 
the one-boundary Temperley--Lieb algebra.
The dimension of $\mathcal{L}_{N}$ is 
$\displaystyle\mathrm{dim}(\mathcal{L}_{N})=\genfrac{(}{)}{0pt}{}{N}{\lfloor N/2\rfloor}$.

In the case of Type BII, a basis $C^{II}_{\epsilon}\in\mathcal{L}_{N}$ 
is studied as a link pattern with a boundary in literatures (see, {\it e.g.},
\cite{deG05,PZJ07}).

\subsection{The non-symmetric Koornwinder polynomials}
Let $\mathbb{K}:=\mathbb{C}(s,q,q_N,q_0,\zeta_N,\zeta_0)$ 
and $P_N:=\mathbb{K}[z_1^{\pm1},\ldots,z_N^{\pm1}]$ be the ring of 
$N$-variable Laurent polynomials.
Following \cite{Nou95}, we define linear operators $\hat{T}_{0},\ldots,\hat{T}_{N}$ 
on $P_N$ by 
\begin{eqnarray*}
\hat{T}_{0}&:=&-q_0-q_0^{-1}
\frac{(1-sq_0\zeta_{0}^{-1}z_1^{-1})(1+sq_0\zeta_0z_{1}^{-1})}{1-s^{2}z_1^{-2}}(s_0-1)  \\
\hat{T}_i&:=&-q-q^{-1}\frac{1-q^2z_{i}z_{i+1}^{-1}}{1-z_{i}z_{i+1}^{-1}}(s_i-1),  
\quad 1\le i\le N-1,  \\
\hat{T}_N&:=&-q_N-q_N^{-1}
\frac{(1+q_N\zeta_Nz_{N})(1-q_N\zeta_N^{-1}z_{N})}{1-z_N^{2}}(s_N-1).
\end{eqnarray*}
The map $T_i\mapsto \hat{T}_i, Y_i\mapsto \hat{Y}_i$ 
gives the polynomial representation of the (double) affine Hecke algebra.

Fix an element $\lambda\in\mathbb{Z}^{N}$.
Let $\lambda^{+}$ be a unique dominant element in $W_0\lambda$ 
and $w_{\lambda}^{+}$ be the shortest element in $W_0$ such 
that $w_{\lambda}^{+}\lambda^{+}=\lambda$.
We define $\rho(\lambda):=w_{\lambda}^{+}\rho$ where 
$\rho=(N-1,N-2,\ldots,0)$ and 
$\sigma(\lambda)=(\mathrm{sign}(\lambda_1),\ldots,\mathrm{sign}(\lambda_N))$.
We define the dominance order $\ge$ and a partial order $\succeq$ 
on $\mathbb{Z}^{N}$ as follows.
We denote by $\lambda\ge\mu$ if 
$\lambda-\mu\in\sum_{i=1}^{N}\mathbb{Z}_{\ge0}\alpha_{i}^{\vee}$ 
and by $\lambda\succeq\mu$ if $\lambda^{+}>\mu^{+}$, or 
$\lambda^{+}=\mu^{+}$ and $\lambda\ge\mu$.

Set $\lambda\in\mathbb{Z}^{N}$ and 
$\mathbf{z}^{\lambda}:=z_1^{\lambda1}\ldots z_{N}^{\lambda_N}$. 
The {\it non-symmetric Macdonald--Koornwinder} polynomial \cite{Sah99} 
$E_{\lambda}(\mathbf{z};s^2,q^2)$ with parameters $s^{2}$
and $q^{2}$ is a Laurent polynomial satisfying 
\begin{eqnarray*}
\hat{Y}_{i}E_{\lambda}&=&y(\lambda)_iE_\lambda, \\
E_{\lambda}&=&\mathbf{z}^{\lambda}
+\sum_{\mu\prec\lambda}c_{\lambda\mu}\mathbf{z}^{\mu}, 
\quad c_{\lambda\mu}\in\mathbb{K},
\end{eqnarray*}
where 
\begin{eqnarray*}
y(\lambda)_i:=s^{2\lambda_i}q^{2\rho(\lambda)_i}(q_0q_N)^{\sigma(\lambda)_i}.
\end{eqnarray*}

In this paper, we are interested in a specialization of parameters \cite{Kas05,Kas08}: 
\begin{eqnarray}
\label{spec}
s^{2(r'-1)}q^{2(k+1)}=1,
\end{eqnarray}
where $N\ge k+1\ge0$ and $r'-1\ge1$.
More precisely, we consider 
\begin{eqnarray*}
s^{2(r'-1)/m}q^{2(k+1)/m}=\omega_{m}, 
\end{eqnarray*}
where $m=GCD(k+1,r-1)$ and $\omega_{m}$ is a 
primitive $m$-th root of unity. 
Recall that we have the one- and two-boundary Temperley--Lieb
algebras as quotient algebras of the affine Hecke algebra of type C.
We consider the following specialization:  
\begin{eqnarray*}
(k,r')&=&(1,r+1) \text{ for two-boundary case}, \\
(k,r')&=&(2,2r+1) \text{ for one-boundary case},
\end{eqnarray*}
where $r\ge\mathbb{N}_{+}$. 


For a positive integer $J\in\mathbb{N}_{+}$, we define 
$\nu^{J,\pm}:=(\nu^{J,\pm}_{1},\ldots,\nu^{J,\pm}_{N})\in\mathbb{Z}^{N}$ 
and $\xi^{0},\xi^{1}\in\mathbb{Z}^{N}$ 
by 
\begin{eqnarray*}
\nu^{J,+}_{i}&:=&J+r(N-i), \text{ for $1\le i\le N$}\\
\nu^{J,-}_{i}&:=&-J-r(i-1),\text{ for $1\le i\le N$} \\
\xi^{0}_{i}&:=&
\begin{cases}
2(\lfloor(N+1)/2\rfloor-i)r, & 1\le i\le\lfloor(N+1)/2\rfloor, \\
(2N-2i+1)r, & \lfloor(N+1)/2\rfloor+1\le i\le N,
\end{cases}\\
\xi^{1}_{i}&:=&
\begin{cases}
(N-2i)r, & 1\le i\le (N-1)/2, \\
(2N-2i)r, & (N+1)/2\le i\le N  
\end{cases}
\text{ for $N$ odd},
\end{eqnarray*}
Note that the dominant element 
$\xi^{+}:=(\xi^{+}_{1},\ldots,\xi^{+}_{N})\in\mathbb{Z}_{\ge0}$ 
associated with $\xi^{0}$ and $\xi^{1}$ is  
$\xi^{+}_{i}=(N-i)r$.

Recall the definition of admissibility in the case of $s^{2(r'-1)}q^{2(k+1)}=1$: 
\begin{defn}[Definition 4.3 in \cite{Kas08}]
\label{Def-adm}
Fix $(k,r')$ and $\lambda\in\mathbb{Z}^{N}$. 
A pair $(i,j)$ is said to be a $(k+1,r'-1)$-{\it neighbourhood} if 
$(i,j)$ satisfies the following three conditions:
\begin{enumerate}
\item $|\rho(\lambda)_i|-|\rho(\lambda)_j|=k$, 
\item $|\lambda_i|-|\lambda_j|\le r'-1$, 
\item if $|\lambda_i|-|\lambda_j|=r'-1$, then the sign 
$\sigma(\lambda)$ satisfies one of the following three conditions:
\begin{enumerate}
\item $(\sigma(\lambda)_i,\sigma(\lambda)_j)=(+,+)$ and $i>j$, 
\item $(\sigma(\lambda)_i,\sigma(\lambda)_j)=(-,-)$ and $i<j$,
\item $(\sigma(\lambda)_i,\sigma(\lambda)_j)=(-,+)$.
\end{enumerate}
\end{enumerate}
We call $\lambda$ admissible if there is no neighbourhood pairs 
in $\lambda$.
\end{defn}
\begin{defn}
Take two elements $\lambda_1,\lambda_2\in\mathbb{Z}^{N}$. 
The element $\lambda_2$ is said to be connected to $\lambda_1$ if and 
only if there exist a sequence of admissible elements 
$\lambda^{(0)}=\lambda_{1}, \lambda^{(1)}, \ldots, \lambda^{(l)}=\lambda_2$
and a sequence of integers $i_1,\ldots, i_l$ such that 
$\lambda^{(j)}=s_{i_j}\lambda^{(j-1)}$ with $s_{i_j}\in W_{0}$.
\end{defn}

Suppose that an element $\mu$ is admissible.
\begin{defn}
We denote by $I^{(k,r')}(\mu)$ a vector space spanned by 
\begin{eqnarray*}
\{E_{\lambda}|\lambda\in W_0\mu^{+},
 \lambda \text{ is admissible and connected to $\mu$}\}
\end{eqnarray*} 
at $s^{2(r'-1)}q^{2(k+1)}=1$.
\end{defn}

\begin{theorem}[Theorem 4.6 in \cite{Kas08}]
\label{thrm-AHA}
The space $I^{(k,r')}(\mu)$ is a representation 
of the affine Hecke algebra.
\end{theorem}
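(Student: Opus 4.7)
The plan is to check that the spanning set $\{E_\lambda : \lambda \in W_0\mu^+,\ \lambda \text{ admissible and connected to } \mu\}$ is closed under every generator of the affine Hecke algebra at the specialization $s^{2(r'-1)}q^{2(k+1)}=1$. Stability under the generators $Y_j$, $1 \le j \le N$, is immediate: since $\hat{Y}_j E_\lambda = y(\lambda)_j E_\lambda$, each $\hat{Y}_j$ acts on a spanning element by a scalar, and the scalar is finite at the specialization because $y(\lambda)$ is a monomial in $s,q,q_0,q_N$. All the content therefore lies in the action of the $\hat{T}_i$ for $0 \le i \le N$.

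For $1 \le i \le N-1$ I would invoke the Knop--Sahi intertwiner identity, which expresses
$$\hat{T}_i E_\lambda = A_i(\lambda)\,E_\lambda + B_i(\lambda)\,E_{s_i\lambda},$$
with $A_i(\lambda), B_i(\lambda)$ explicit rational functions of the ratio $y(\lambda)_i/y(\lambda)_{i+1}$; the boundary analogues for $\hat{T}_0, \hat{T}_N$ are similar but involve $q_0,q_N,\zeta_0,\zeta_N$ and the signs $\sigma(\lambda)$. Three cases then arise for an admissible $\lambda$ connected to $\mu$. If $s_i\lambda = \lambda$, then $\hat{T}_i E_\lambda$ is a scalar multiple of $E_\lambda$ and the conclusion is trivial. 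If $s_i\lambda \ne \lambda$ and $s_i\lambda$ is itself admissible, then $s_i\lambda$ is connected to $\mu$ (extend the defining sequence by $s_i$), so $E_{s_i\lambda} \in I^{(k,r')}(\mu)$ and the right-hand side is already in the subspace.

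The substantive third case, which I expect to be the main obstacle, is when $s_i\lambda$ fails to be admissible. Here $E_{s_i\lambda}$ is not in the spanning set, so one must prove that $B_i(\lambda) = 0$ at the specialization. Nonadmissibility of $s_i\lambda$ forces the pair $(i,i+1)$ (or its boundary analogue when $i=0,N$) to constitute a $(k+1,r'-1)$-neighbourhood of $s_i\lambda$ in the sense of Definition \ref{Def-adm}. I would translate each of the three subcases in Definition \ref{Def-adm}(3) into an explicit equation on $y(\lambda)_i/y(\lambda)_{i+1}$ using
$$y(\lambda)_j = s^{2\lambda_j}q^{2\rho(\lambda)_j}(q_0q_N)^{\sigma(\lambda)_j};$$
combined with $s^{2(r'-1)}q^{2(k+1)}=1$, these conditions pin the ratio to precisely the zero of the numerator of $B_i(\lambda)$. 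A completely parallel check, using $q_0,q_N,\zeta_0,\zeta_N$ in place of the interior ratio, handles $\hat{T}_0$ and $\hat{T}_N$. The delicate bookkeeping is matching every combinatorial subcase of the neighbourhood condition against the correct factor in $B_i(\lambda)$; this is essentially the wheel-type vanishing that underlies the representation-theoretic status of specialised Koornwinder polynomials.

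Finally, one must rule out the possibility that $A_i(\lambda)$ develops a pole precisely where $B_i(\lambda)$ vanishes. This is a short L'H\^opital-type computation: the pole in $A_i$, if present, would come from the same denominator $1 - y(\lambda)_i/y(\lambda)_{i+1}$ that drives the zero in $B_i$, and a direct Laurent expansion at the specialization shows the limit is finite. Collecting the three cases for every $i$ and every admissible $\lambda$ connected to $\mu$ then establishes that $\hat{T}_i\,I^{(k,r')}(\mu) \subseteq I^{(k,r')}(\mu)$, which together with the trivial $Y$-invariance gives the claimed affine Hecke algebra structure on $I^{(k,r')}(\mu)$.
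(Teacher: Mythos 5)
First, a point of comparison: the paper does not prove this statement at all --- it is quoted verbatim as Theorem 4.6 of \cite{Kas08}, so there is no internal proof to measure you against. That said, your strategy (diagonal action of the $\hat{Y}_j$, the intertwiner expansion $\hat{T}_iE_{\lambda}=A_i(\lambda)E_{\lambda}+B_i(\lambda)E_{s_i\lambda}$, and the vanishing of $B_i(\lambda)$ at the specialization precisely when $s_i\lambda$ fails the neighbourhood conditions of Definition~\ref{Def-adm}) is the standard route, and it is the same mechanism the paper itself imports from \cite{Kas08} (its Lemma 4.7) in the proof of Lemma~\ref{lemma-constraint}. So the outline is the right one.

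Two points are genuine gaps rather than bookkeeping. First, you treat $B_i(\lambda)\to 0$ as sufficient, but at the specialization the generic-parameter polynomial $E_{s_i\lambda}$ for a non-admissible $s_i\lambda$ need not stay finite: its expansion coefficients $c_{\lambda\mu}$ are rational functions of $s,q$ that can have poles on the locus $s^{2(r'-1)}q^{2(k+1)}=1$. What must actually be shown is that the product $B_i(\lambda)E_{s_i\lambda}$ tends to zero, i.e.\ that the zero of $B_i$ dominates any pole of $E_{s_i\lambda}$, and, separately, that $E_{\lambda}$ is itself pole-free for every admissible $\lambda$ so that the spanning set is even well defined at the specialization. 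This regularity analysis is the real content of Kasatani's proof and is absent from your sketch; your L'H\^opital remark addresses only the pole of $A_i(\lambda)$, which is the easy half. Second, the ``completely parallel check'' for $\hat{T}_0$ does not work as stated: $s_0$ is the affine reflection $\lambda_1\mapsto -1-\lambda_1$, so $s_0\cdot\lambda$ leaves the orbit $W_0\mu^{+}$ and $E_{s_0\cdot\lambda}$ is not in your spanning set, so admissibility of $s_0\cdot\lambda$ is not the relevant dichotomy. The fix is cheap: by the Bernstein--Zelevinsky presentation (\ref{BZ-TY}), $T_0$ lies in the subalgebra generated by $T_1,\dots,T_N$ and $Y_1$, so closure under the finite $T_i$ and the $Y_j$ already forces closure under $T_0$. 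With these two repairs your argument becomes the proof given in \cite{Kas08}.
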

The space spanned by $\{E_{\lambda}|\lambda \text{ is admissible}\}$ 
is an irreducible representation of the double affine Hecke 
algebra \cite[Theorem 4.6]{Kas08}.
In general, the space $I^{(k,r')}(\mu)$ may be reducible as a 
representation of the affine Hecke algebra. 
However, one can construct an irreducible representation from 
$I^{(k,r')}(\mu)$. 

\subsubsection{Two-boundary case}
\begin{prop}
\label{prop-dim}
The dimension of the space $I^{(1,r+1)}(\nu^{J,+})$ is $2^{N}$.
\end{prop}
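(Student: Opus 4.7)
The orbit $W_{0}\nu^{J,+}$ consists of $2^{N}N!$ signed permutations of $(J+r(N-1),\ldots,J+r,J)$, and the plan is to identify the admissible elements connected to $\nu^{J,+}$ with the sign patterns $\sigma\in\{\pm\}^{N}$, yielding exactly $2^{N}$. Parameterize each $\lambda$ in the orbit by its sign pattern $\sigma=\sigma(\lambda)$ and the rank permutation $\tau\in S_{N}$ defined by $|\lambda_{i}|=J+r(N-\tau(i))$. Writing the unique $w_{\lambda}^{+}\in W_{0}$ carrying $\nu^{J,+}$ to $\lambda$ as a signed permutation $(\epsilon,\pi)$ gives $\pi=\tau$ and hence $|\rho(\lambda)_{i}|=\rho_{\tau(i)}=N-\tau(i)$. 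Therefore the condition $|\rho(\lambda)_{i}|-|\rho(\lambda)_{j}|=1$ from Definition \ref{Def-adm}(1) is equivalent to $\tau(j)=\tau(i)+1$, which in turn forces $|\lambda_{i}|-|\lambda_{j}|=r=r'-1$. Condition (2) is thus automatically an equality at any pair satisfying (1), and admissibility reduces to the sign-and-position condition (3) applied to every consecutive-rank pair.

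The three forbidden configurations in (3), namely $(+,+)$ with $i>j$, $(-,-)$ with $i<j$, and $(-,+)$, impose the following on the sign sequence read along the rank order $\tau^{-1}(1),\ldots,\tau^{-1}(N)$: no $-$ may be immediately followed by a $+$, so positive positions of $\sigma$ precede negative ones in the rank order; within the positive block, positions must appear in increasing order of index, and within the negative block in decreasing order of index. This determines $\tau$ uniquely from $\sigma$: the positive positions receive the top $n_{+}=\#\{i:\sigma_{i}=+\}$ absolute values in left-to-right order, while the negative positions receive the remaining absolute values in right-to-left order. We thus obtain a unique admissible $\lambda(\sigma)\in W_{0}\nu^{J,+}$ for each $\sigma\in\{\pm\}^{N}$, and no other element of the orbit is admissible.

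For connectedness, a direct check shows that $s_{i}\lambda(\sigma)$ is admissible in exactly two situations: if $1\le i\le N-1$ and $\sigma_{i}\ne\sigma_{i+1}$, giving $\lambda(\sigma')$ with $\sigma'$ obtained by swapping the entries at positions $i$ and $i+1$; or if $i=N$, giving $\lambda(\sigma')$ with $\sigma'_{N}=-\sigma_{N}$. Starting from $\lambda(+\cdots+)=\nu^{J,+}$, any $\sigma$ with $k$ minus signs can be reached by $k$ applications of $s_{N}$ interleaved with left-transports via $s_{N-1},\ldots,s_{1}$ that carry each newly introduced $-$ to its intended position, every intermediate step again being some $\lambda(\sigma'')$ and hence admissible. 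Combined with the counting above, this proves $\dim I^{(1,r+1)}(\nu^{J,+})=2^{N}$.

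The main technical hurdle is the identity $|\rho(\lambda)_{i}|=N-\tau(i)$ combined with the sign-sensitive case analysis in (3), which must be merged carefully to single out the unique admissible $\tau$ for each $\sigma$. One must also respect the "admissible intermediates" clause in the definition of connected: already for $N=2$ the naive move $s_{1}\nu^{J,+}=(J,J+r)$ leaves the admissible set, so the connecting sequence has to be built via the transport strategy above rather than through any shortest-word expression of $w_{\lambda(\sigma)}^{+}$.
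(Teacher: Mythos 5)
Your proof is correct, but it takes a genuinely different route from the paper's. The paper argues by induction on $N$: it shows that any admissible $\nu\in W_0\nu^{J,+}$ must have $\nu_1=J+(N-1)r$ or $\nu_1=-J$, and that each choice extends an admissible element of length $N-1$, so the count doubles at each step. You instead give a closed-form classification: using $|\rho(\lambda)_i|=N-\tau(i)$ you reduce admissibility to the sign condition (3) on consecutive-rank pairs, and read off that each sign pattern $\sigma\in\{\pm\}^N$ determines a unique admissible element (top absolute values to the positive positions left-to-right, the rest to the negative positions right-to-left). This is more work locally but buys two things. First, it makes the bijection with $\{\pm\}^N$ explicit — you have essentially re-derived the map $\varphi_\pm$ and the edge structure of $\Gamma(\nu^{J,+})$ that the paper only introduces later, in the lead-up to Proposition \ref{prop-diagram}. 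Second, you explicitly verify connectedness to $\nu^{J,+}$ through admissible intermediates, which the definition of $I^{(1,r+1)}(\nu^{J,+})$ requires but the paper's proof silently assumes when it equates the dimension with the number of admissible elements; your observation that already for $N=2$ the naive move $s_1\nu^{J,+}$ leaves the admissible set shows this step is not vacuous. The two checks worth keeping an eye on in your write-up — that flipping $\sigma_N$ changes only the sign of the $N$-th coordinate of $\lambda(\sigma)$, and that swapping adjacent opposite signs realizes $s_i\lambda(\sigma)=\lambda(s_i\sigma)$ — both hold because no other positive (resp.\ negative) position lies strictly between the two indices involved, so the rank assignments of all other positions are unaffected.
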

\begin{proof}
Since $I:=I^{(1,r+1)}(\nu^{J,+})$ is spanned by admissible elements 
such that $\nu\in W_0\nu^{J,+}$, the dimension of $I$
is equal to the number of admissible elements.
We prove Proposition by induction on $N$. 
When $N=2$, admissible elements are $(J+r,J), (J+r,-J), (-J,J+r)$ 
and $(-J,-J-r)$, which implies $\mathrm{dim}(I)=4$.
We assume that Proposition holds true up to $N-1$ and 
$\mathrm{dim}(I)=2^{N-1}$.

Let $\nu\in W_0\nu^{J,+}$. 
Suppose $\nu_1$ is neither $J+(N-1)r$ nor $-J$. 
When $\mathrm{sign}(\nu_1)=+$, there exists $i$, $2\le i\le N$ 
such that $|\nu_i|=|\nu_1|+r$.
From Definition \ref{Def-adm}, the pair $(1,i)$ is a 
neighbourhood.
Similarly, when $\mathrm{sign}(\nu_1)=-$, there exists $i$
such that $|\nu_i|=|\nu_1|-r$. Then, the pair $(1,i)$ is 
a neighbourhood.
Thus if $\nu$ is admissible, then $\nu_1=J+(N-1)r$ or $\nu_{1}=-J$.

Let $\tilde{\nu}$ be an admissible element of length $N-1$ in 
$W_0\nu^{J,+}$. 
Suppose $\nu_1=J+(N-1)r$. There exists $i$ such that $|\nu_i|=J+(N-2)r$.
From Definition \ref{Def-adm}, the pair $(1,i)$ is not a 
neighbourhood. 
Set $\nu_1=J+(N-1)r$ and $\nu_i=\tilde{\nu}_{i-1}$ for $2\le i\le N$.
Since all the pairs $(j,k)$ with $2\le j<k\le N$ are not 
neighbourhood, $\nu$ is admissible.
Suppose $\nu_1=-J$. The pair $(1,i)$, $2\le i\le N$, is not 
a neighbourhood.   
Set $\nu_1=-J$ and 
$\nu_i=\mathrm{sign}(\tilde{\nu}_{i-1})(|\tilde{\nu}_{i-1}|+r)$ 
for $2\le i\le N$. 
Since the pairs $(i,j)$ with $2\le i<j\le N$ are not neighbourhood,
$\nu$ is admissible. 
From the construction of admissible elements, the dimension 
of $I^{(1,r+1)}(\nu^{J,+})$ is given by $2^{N}$.
\end{proof}

\begin{lemma}
\label{lemma-constraint}
Set $\nu^{\pm}:=\nu^{J,\pm}$. At $s^{2r}q^{4}=1$, we have 
\begin{eqnarray*}
(\hat{T}_i-q^{-1})E_{\nu^{\pm}}=0, \quad 1\le i\le N-1.
\end{eqnarray*}
\end{lemma}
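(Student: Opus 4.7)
The plan is to combine the Cherednik intertwining formula for non-symmetric Koornwinder polynomials with the admissibility framework of Definition~\ref{Def-adm} and the affine Hecke module closure of Theorem~\ref{thrm-AHA}.

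First I would compute the ratio $y(\nu^{\pm})_i/y(\nu^{\pm})_{i+1}$ at the specialization. Since $\nu^{J,+}$ is dominant, $\rho(\nu^{J,+})=\rho$ and $\sigma(\nu^{J,+})=(+,\ldots,+)$, so $y(\nu^{J,+})_i/y(\nu^{J,+})_{i+1}=s^{2r}q^{2}$. The element $\nu^{J,-}$ is the image of $\nu^{J,+}$ under the signed permutation that reverses positions and negates signs, which carries $\rho$ to $(0,-1,\ldots,-(N-1))$; combined with $\sigma(\nu^{J,-})=(-,\ldots,-)$ this again yields $y(\nu^{J,-})_i/y(\nu^{J,-})_{i+1}=s^{2r}q^{2}$. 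At $s^{2r}q^{4}=1$ both ratios become $q^{-2}$, i.e.\ $y(\nu^{\pm})_{i+1}=q^{2}\,y(\nu^{\pm})_i$.

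Next I would invoke the intertwining formula derived from $\hat T_i \hat Y_{i+1} \hat T_i = \hat Y_i$: for generic parameters,
\[
\hat T_i E_{\lambda}=\frac{(q-q^{-1})\,y(\lambda)_i}{y(\lambda)_{i+1}-y(\lambda)_i}E_{\lambda}+\beta_i(\lambda)\,E_{s_i\lambda},
\]
and the Hecke quadratic relation forces
\[
\beta_i(\lambda)\,\beta_i(s_i\lambda)=\frac{\bigl(y_{i+1}-q^{2}y_i\bigr)\bigl(y_{i+1}-q^{-2}y_i\bigr)}{(y_{i+1}-y_i)^{2}}.
\]
Substituting $y_{i+1}=q^{2}y_i$ collapses the scalar coefficient to $(q-q^{-1})/(q^{2}-1)=q^{-1}$, so $(\hat T_i-q^{-1})E_{\nu^{\pm}}=\beta_i(\nu^{\pm})\,E_{s_i\nu^{\pm}}$, while simultaneously $\beta_i(\nu^{\pm})\,\beta_i(s_i\nu^{\pm})=0$.

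The hard part is identifying $\beta_i(\nu^{\pm})$ as the vanishing factor. I would do this via admissibility: a direct check from Definition~\ref{Def-adm} shows that $\nu^{J,\pm}$ is admissible while $s_i\nu^{J,\pm}$ is not. For $s_i\nu^{J,+}$ the pair $(i+1,i)$ is a $(2,r)$-neighbourhood, since the signs are $(+,+)$, $|\rho(s_i\nu^{J,+})_{i+1}|-|\rho(s_i\nu^{J,+})_i|=1$, $|(s_i\nu^{J,+})_{i+1}|-|(s_i\nu^{J,+})_i|=r$ and $i+1>i$ (condition (3)(a)); for $s_i\nu^{J,-}$ the pair $(i,i+1)$ is a neighbourhood via condition (3)(b). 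Hence $E_{s_i\nu^{\pm}}\notin I^{(1,r+1)}(\nu^{J,\pm})$, and since $\hat T_i E_{\nu^{\pm}}$ must remain in this affine Hecke module by Theorem~\ref{thrm-AHA}, we conclude $\beta_i(\nu^{\pm})=0$ and therefore $(\hat T_i-q^{-1})E_{\nu^{\pm}}=0$.

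The principal obstacle is this last disentangling step: the Hecke quadratic relation alone gives only the product $\beta_i(\nu^{\pm})\beta_i(s_i\nu^{\pm})=0$, so one cannot tell which factor vanishes without additional input. Invoking Theorem~\ref{thrm-AHA} resolves this cleanly; the alternative of extracting $\beta_i(\nu^{\pm})$ directly from the leading-coefficient analysis of $E_{\nu^{\pm}}$ is significantly more computational.
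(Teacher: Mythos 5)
Your proof follows essentially the same route as the paper's: both hinge on the computation $y(\nu^{\pm})_{i+1}/y(\nu^{\pm})_{i}=q^{2}$ at $s^{2r}q^{4}=1$, which turns the intertwiner $\phi_i=T_i+\frac{q^{-1}-q}{Y_{i+1}/Y_i-1}$ into $\hat{T}_i-q^{-1}$ on $E_{\nu^{\pm}}$, and both reduce the statement to the vanishing of the off-diagonal (intertwining) term, which is traced back to the non-admissibility of $s_i\nu^{\pm}$. The one place you diverge is in how that vanishing is justified: the paper simply quotes Lemma 4.7 of \cite{Kas08}, which asserts directly that $(\phi_iE_{\nu^{\pm}})|_{s^{2r}q^{4}=1}=0$ when $s_i\nu^{\pm}$ is not admissible, whereas you deduce $\beta_i(\nu^{\pm})=0$ from the factorized product $\beta_i(\lambda)\beta_i(s_i\lambda)$ together with closure of $I^{(1,r+1)}(\nu^{J,\pm})$ under $\hat{T}_i$ (Theorem~\ref{thrm-AHA}) and a leading-term independence argument. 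Your admissibility checks (the pair $(i+1,i)$ for $s_i\nu^{J,+}$ via condition (3)(a), the pair $(i,i+1)$ for $s_i\nu^{J,-}$ via (3)(b)) are correct, and the module-closure argument is a reasonable substitute for the citation. The one caveat is that your last step tacitly assumes $E_{s_i\nu^{\pm}}$ is regular at $s^{2r}q^{4}=1$, so that $\beta_i\to0$ implies $\beta_iE_{s_i\nu^{\pm}}\to0$; if $E_{s_i\nu^{\pm}}$ acquired a pole at the specialization, the product could have a nonzero limit even though $\beta_i$ vanishes. Controlling exactly this degeneration is the technical content of \cite[Lemma 4.7]{Kas08}, which is why the paper invokes it rather than rederiving the vanishing from Theorem~\ref{thrm-AHA}.
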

\begin{proof}
The (anti-)dominant element $\nu^{\pm}$ is admissible and 
$s_i\cdot\nu^{\pm}$ $1\le i\le N-1$ is not admissible. 
There exist linear operators (called intertwiner) $\phi_i$, 
$0\le i\le N$ which send $E_{\lambda}$ to $E_{s_i\cdot\lambda}$, 
{\it i.e.}, $E_{s_i\cdot\lambda}=\phi_iE_{\lambda}$ 
(see Definition 2.4 in \cite{Kas08}). 
Especially, we have
\begin{eqnarray*}
\phi_i:=T_{i}+\frac{q^{-1}-q}{Y_{i+1}/Y_{i}-1}. 
\end{eqnarray*}
From Lemma 4.7 in \cite{Kas08}, we have $(\phi_iE_{\nu^{\pm}})|_{s^{2r}q^{4}=1}=0$ 
for $1\le i\le N-1$.
At $s^{2r}q^{4}=1$, the ratio of eigenvalues of $\hat{Y}$ is 
$y(\nu^{\pm})_{i+1}/y(\nu^{\pm})_{i}=q^{2}$.
Thus the action of $\phi_i$ on $E_{\nu^{\pm}}$ is equal to the 
action of $T_i-q^{-1}$.
\end{proof}

We consider a graph whose vertices are labelled by an admissible 
elements in $\mathbb{Z}^{N}$. 
We connect two vertices labelled by $\lambda$ and $\mu$ if 
and only if $\mu=s_{i}\lambda$ with $s_i\in W_0$. 
We put the integer $i$ on the edge connecting vertices labelled 
by $\lambda$ and $\mu$.
When an element $\lambda$ is admissible, 
we call this graph $\Gamma(\lambda)$. 

\begin{example}
Set $N=3$ and take $\nu^{1,+}$.  
We have eight admissible elements and the 
graph $\Gamma(\nu^{1,+})$ is as follows:
\begin{eqnarray*}
\begin{tikzpicture}
\path (0,0)node[shape=rectangle](a){$(3,2,1)$}
      (2.1,0)node[shape=rectangle](b){$(3,2,-1)$}
      (4.3,0)node[shape=rectangle](c){$(3,-1,2)$}
      (6.5,1)node[shape=rectangle](d){$(3,-1,-2)$}
      (6.5,-1)node[shape=rectangle](e){$(-1,3,2)$}
      (8.9,0)node[shape=rectangle](f){$(-1,3,-2)$}
      (11.4,0)node[shape=rectangle](g){$(-1,-2,3)$}
      (14,0)node[shape=rectangle](h){$(-1,-2,-3)$};
\draw(a.east)--(b.west)(b.east)--(c.west)
(c.east)--(d.west)(c.east)--(e.west)(d.east)--(f.west)
(e.east)--(f.west)(f.east)--(g.west)(g.east)--(h.west);
\draw(1,0)node[anchor=south]{$3$}
     (3.2,0)node[anchor=south]{$2$}
     (5.4,0.5)node[anchor=south east]{$3$}
     (5.4,-0.5)node[anchor=north east]{$1$}
     (7.7,0.5)node[anchor=south west]{$1$}
     (7.7,-0.5)node[anchor=north west]{$3$} 
     (10.15,0)node[anchor=south]{$2$} 
     (12.6,0)node[anchor=south]{$3$};
\end{tikzpicture}.
\end{eqnarray*}
\end{example}

\subsubsection{One-boundary case}
\begin{prop}
\label{prop-dim2}
The dimension of the space $I^{(2,2r+1)}(\xi^{0})$ is 
$\displaystyle2^{\lfloor N/2\rfloor}\genfrac{(}{)}{0pt}{}{N}{\lfloor N/2\rfloor}$. 
\end{prop}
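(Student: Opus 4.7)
My plan is to prove this by induction on $N$, mirroring the approach of Proposition~\ref{prop-dim} for the two-boundary case but adapted to the specialization $(k, r') = (2, 2r+1)$. First I would handle the base cases $N = 1, 2$ directly: for $N = 1$, the unique element $(0)$ is admissible, matching $2^{0}\binom{1}{0} = 1$; for $N = 2$, the $\rho$-values lie in $\{0, 1\}$, so no pair $(i,j)$ can satisfy $|\rho(\lambda)_i| - |\rho(\lambda)_j| = 2 = k$, hence every element of $W_0 \xi^+$ is trivially admissible, giving the required $4 = 2^{1}\binom{2}{1}$.

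For the inductive step, I would analyze, for an admissible $\nu \in W_0 \xi^+$, the possible values of a distinguished coordinate (say $\nu_1$): its absolute value lies in $\{0, r, 2r, \ldots, (N-1)r\}$, and its sign is constrained by the $(3, 2r)$-neighbourhood rules of Definition~\ref{Def-adm} whenever there exists $j$ with $|\rho(\nu)_1| - |\rho(\nu)_j| = 2$. For each valid first-coordinate configuration, I would show that $(\nu_2, \ldots, \nu_N)$, after a suitable relabelling of $\rho$-positions and of the remaining absolute values, forms an admissible element of a smaller instance of the same counting problem, so that the inductive hypothesis applies; summing the contributions over the allowed first coordinates should yield $2^{\lfloor N/2\rfloor}\binom{N}{\lfloor N/2\rfloor}$. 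In parallel I need to verify that every admissible $\nu$ is connected to $\xi^0$ in $\Gamma(\xi^0)$, so that no admissible element is dropped when forming $I^{(2, 2r+1)}(\xi^0)$; this should follow by building an explicit chain of simple reflections from $\nu$ to $\xi^0$ through admissible intermediaries, using the intertwiner $\phi_i$ appearing in Lemma~\ref{lemma-constraint} to certify admissibility preservation at each step.

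The main obstacle I anticipate is the case analysis in the inductive step. In contrast with the two-boundary case, where $k = 1$ rigidly forced $\nu_1$ into one of two extremal values, the relaxation $k = 2$ permits substantially richer configurations for the first coordinate, and the zero coordinate of $\xi^+$ complicates the sign bookkeeping for indices with $|\nu_i| = 0$. A clean accounting probably requires grouping coordinates by the parity of their $\rho$-position (since $k = 2$ naturally couples positions whose $|\rho|$-values differ by $2$) and tracking the two parity classes separately; I expect this parity split to produce the factorization of the dimension into $\binom{N}{\lfloor N/2 \rfloor}$ link-pattern-type combinatorial choices and an independent factor $2^{\lfloor N/2 \rfloor}$ of sign choices on one parity class, matching the right-hand side of the proposition.
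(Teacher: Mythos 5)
Your primary route---induction on $N$ by peeling off the first coordinate, as in Proposition~\ref{prop-dim}---has a gap that you partly anticipate but do not resolve. For $(k,r')=(2,2r+1)$, working out Definition~\ref{Def-adm} at position $1$ shows that an admissible $\nu\in W_0\xi^+$ with $N\ge3$ must have $\nu_1\in\{(N-1)r,\,(N-2)r,\,-r\}$: a positive $\nu_1$ of rank $\le N-3$ forms a neighbourhood with the value two ranks above it (condition (3)(a) or (3)(c) applies whatever the other sign is), and a negative $\nu_1$ of rank $\ge2$ forms one with the value two ranks below it (condition (3)(b) or (3)(c)); the same chain of implications forces every even multiple of $r$ to be non-negative, since $-2r$ would require a negative value of rank $0$ to its left and $0$ cannot carry a minus sign. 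The trouble is that only the branch $\nu_1=(N-1)r$ reduces to the same counting problem in size $N-1$. Deleting $(N-2)r$ or $r$ removes a value of rank $N-2$ or $1$, which breaks one of the two rank-parity chains in the middle: the residual value set is not $W_0$-conjugate to a $\xi^+$ of length $N-1$, and the surviving couplings are not those of the smaller instance. Already for $N=3$ the three branches contribute $4+1+1=6$, so there is no recursion of the form $a_N=c\,a_{N-1}+\cdots$; a correct induction would have to carry the lengths of the two parity chains as separate parameters, which is no longer the ``same counting problem'' you invoke.

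The parity split you float in your final paragraph is not merely a bookkeeping device---it is the paper's entire proof, stated globally rather than inductively. Since condition (1) of Definition~\ref{Def-adm} demands a rank difference of exactly $2$, there are no constraints at all between the even-rank values $0,2r,4r,\dots$ and the odd-rank values $r,3r,\dots$, so admissibility decouples. The even chain must be entirely non-negative and arranged with larger values to the left, hence has a unique internal arrangement; the odd chain is combinatorially identical to the chain $\nu^{J,+}$ of Proposition~\ref{prop-dim} with $J=r$ and spacing $2r$, hence contributes $2^{\lfloor N/2\rfloor}$ signed arrangements; and the two chains interleave freely, contributing $\binom{N}{\lfloor N/2\rfloor}$. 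I would therefore drop the induction on $N$ and make the decoupling argument the proof itself. Your base cases are fine, and your concern about connectivity to $\xi^0$ is legitimate but not something the paper's own proof treats in more detail than you propose.
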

\begin{proof}
Recall that the dominant element $\xi^{+}$ is $((N-1)r, (N-2)r, \ldots, r ,0)$ and this 
element is admissible and connected to $\xi^{0}$ and $\xi^{1}$. 
Suppose an element $\xi\in\mathbb{Z}_{\ge0}^{N}$ is admissible. 
From the definition of admissibility, $(N-i)r$ is left to $(N-i-2)r$ in $\xi$. 
The cardinality of $\xi$'s satisfying $\xi\in\mathbb{Z}_{\ge0}^{N}$ is 
$\genfrac{(}{)}{0pt}{}{N}{\lfloor N/2\rfloor}$. 
Take a sub-element $\xi'$ of length $\lfloor N/2\rfloor$ satisfying 
$\xi'\in W_{0}(2\lfloor N/2\rfloor-1,\ldots,3r,r)$. 
By a similar argument to Proposition~\ref{prop-dim}, the cardinality 
of admissible $\xi'$'s is $2^{\lfloor N/2\rfloor}$.
Thus the total dimension of $I^{(2,2r+1)}(\xi^{+})$ is 
$2^{\lfloor N/2\rfloor}\genfrac{(}{)}{0pt}{}{N}{\lfloor N/2\rfloor}$. 
\end{proof}

We denote by $I_{+}$ a vector space spanned by 
$\{E_{\lambda}| \lambda\in\mathbb{Z}_{\ge0}^{N}\}$. 
We define 
\begin{eqnarray*}
I_{+}^{(k,r')}=I^{(k,r')}\cap I_{+}.
\end{eqnarray*}
Let $\xi\in W_{0}\xi^{+}$ be an admissible element for $(k,r')=(2,2r+1)$ and $\xi_N=r$. 
We have intertwiners $\phi_{i}$, $0\le i\le N$, which send a non-symmetric 
Koornwinder polynomial to another one (see {\it e.g.} \cite[Definition 2.4]{Kas08}). 
We have $\phi_{N}E_{\xi}=c_{N,\xi}E_{s_{N}\cdot\xi}$. 
By a straightforward computation with Definition 2.4 in \cite{Kas08}, we have 
$c_{N,\xi}=0$ if $q_{N}^{2}=-q$. 
The space $I_{+}^{(2,2r+1)}$ is closed under the action of $Y_{i}$, $1\le i\le N$, 
and $T_{i}$, $0\le i\le N$, at $q_2^{2}=-q$.  
Thus, we have 
\begin{prop}
\label{prop-dim-I}
Suppose that $q_N^{2}=-q$. 
The space $I_{+}^{(2,2r+1)}$ is an irreducible representation of 
the affine Hecke algebra.
The dimension of $I_{+}^{(2,2r+1)}$ is 
$\genfrac{(}{)}{0pt}{}{N}{\lfloor N/2\rfloor}$.  
\end{prop}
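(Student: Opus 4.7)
The statement splits into three parts: the dimension count, closure of $I_+^{(2,2r+1)}$ under the affine Hecke algebra action, and irreducibility.

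The dimension is read off from the proof of Proposition~\ref{prop-dim2}: the admissible elements of $W_0\xi^+\cap\mathbb{Z}_{\geq0}^N$ are exactly the rearrangements of $\xi^+=((N-1)r,(N-2)r,\ldots,r,0)$ in which $(N-i)r$ precedes $(N-i-2)r$ for every $i$, and their number is $\genfrac{(}{)}{0pt}{}{N}{\lfloor N/2\rfloor}$. Since $I_+^{(2,2r+1)}$ is by definition the span of the $E_\lambda$ for such $\lambda$, the dimension formula follows.

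For closure under the generators of the affine Hecke algebra, I would argue case by case. The operators $Y_i$ act diagonally on $\{E_\lambda\}$, so preserve the span. For $T_i$ with $1\le i\le N-1$, the intertwiner identity $\phi_iE_\lambda\in\mathbb{K}\,E_{s_i\lambda}$ together with $\phi_i=T_i+(q^{-1}-q)/(Y_{i+1}/Y_i-1)$ gives $T_iE_\lambda\in\mathrm{span}(E_\lambda,E_{s_i\lambda})$; since $s_i$ merely swaps consecutive coordinates we have $s_i\lambda\in\mathbb{Z}_{\geq0}^N$, so we remain in $I_+^{(2,2r+1)}$. The boundary case $T_N$ is the crucial one: for $\lambda_N=0$ one has $s_N\lambda=\lambda$, while for $\lambda_N>0$ the discussion preceding the proposition shows $c_{N,\lambda}=0$ at $q_N^2=-q$, forcing $T_NE_\lambda$ to be a scalar multiple of $E_\lambda$. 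Closure under $T_0$ then follows from the Bernstein--Zelevinsky presentation (\ref{BZ-TY}), which expresses $T_0$ as a word in $Y_1$ and $T_1,\ldots,T_N$.

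For irreducibility, the simultaneous $Y$-eigenvalues $y(\lambda)$ separate the basis vectors, so any nonzero subrepresentation is the span of some subset of admissible elements of $\mathbb{Z}_{\geq0}^N$. It therefore suffices to show that the graph on $W_0\xi^+\cap\mathbb{Z}_{\geq0}^N$ with edges $\lambda\sim s_i\lambda$ for $1\le i\le N-1$ (whenever both endpoints are admissible) is connected, and that the corresponding intertwiner coefficients $c_{i,\lambda}$ are nonzero. The nonvanishing is automatic, since a zero $c_{i,\lambda}$ would exhibit a $(k+1,r'-1)$-neighbourhood in $\lambda$, contradicting admissibility. The main obstacle will be to confirm the connectedness, which I would establish by an induction on $N$ parallel to Proposition~\ref{prop-dim2}, after identifying which positions can hold the value $0$ and which can hold $(N-1)r$ in an admissible configuration and then splicing together the admissible chains provided by the inductive hypothesis.
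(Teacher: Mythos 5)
Your proposal is correct and follows essentially the same route as the paper: the dimension is the count of non-negative admissible rearrangements from Proposition~\ref{prop-dim2}, and closure hinges on the vanishing of the boundary intertwiner coefficient $c_{N,\xi}$ at $q_{N}^{2}=-q$, which is exactly the computation the paper records immediately before the statement. The paper offers no further detail beyond that computation (closure under all $T_{i}$ and irreducibility are simply asserted), so your extra steps --- handling $T_{0}$ via the Bernstein--Zelevinsky presentation (\ref{BZ-TY}) and deducing irreducibility from $Y$-eigenvalue separation together with connectivity of the admissible graph --- only make explicit what the paper leaves implicit.
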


\begin{prop}
At $s^{4r}q^{6}=1$ and $q_{N}^{2}=-q$, we have 
\begin{eqnarray*}
(\hat{T}_{i}-q^{-1})E_{\xi^{0}}&=&0,\qquad i\neq\lfloor(N+1)/2\rfloor, \\
(\hat{T}_{N}-q_{N}^{-1})E_{\xi^{0}}&=&0.
\end{eqnarray*}
\end{prop}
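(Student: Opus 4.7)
The plan is to imitate Lemma~\ref{lemma-constraint}, invoking the intertwiners $\phi_{i}$ of Definition~2.4 in \cite{Kas08} together with the vanishing criterion of Lemma~4.7 of \cite{Kas08}: at the specialization, $\phi_{i}E_{\lambda}=0$ whenever $s_{i}\cdot\lambda$ fails to be admissible. Once this vanishing is established, the Hecke relations follow by identifying $\phi_{i}=T_{i}+(q^{-1}-q)/(Y_{i+1}/Y_{i}-1)$ with $\hat{T}_{i}-q^{-1}$ (and the analogous $\phi_{N}$ with $\hat{T}_{N}-q_{N}^{-1}$) via the $Y$-eigenvalue formula $y(\lambda)_{i}=s^{2\lambda_{i}}q^{2\rho(\lambda)_{i}}(q_{0}q_{N})^{\sigma(\lambda)_{i}}$.

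First I would check directly that $\xi^{0}$ is itself admissible for $(k,r')=(2,2r+1)$ by computing $\rho(\xi^{0})=w^{+}_{\xi^{0}}\rho$: its consecutive differences $\rho(\xi^{0})_{i+1}-\rho(\xi^{0})_{i}$ equal $\pm 2=\pm k$ precisely for $i\neq\lfloor(N+1)/2\rfloor$, and while the corresponding entries satisfy $|\xi^{0}_{i}|-|\xi^{0}_{i+1}|=\pm 2r=\pm(r'-1)$, the orientation of the indices in $\xi^{0}$ violates condition~(3)(a) of Definition~\ref{Def-adm}, so no pair is a neighborhood. For each $i\in\{1,\ldots,N-1\}\setminus\{\lfloor(N+1)/2\rfloor\}$, the swap $s_{i}$ reverses exactly this orientation, so the pair $(i+1,i)$ in $s_{i}\cdot\xi^{0}$ now meets all three conditions and becomes a neighborhood; Lemma~4.7 of \cite{Kas08} then gives $\phi_{i}E_{\xi^{0}}=0$. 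Since $\xi^{0}\in\mathbb{Z}_{\ge 0}^{N}$ the sign factors cancel, so $y(\xi^{0})_{i+1}/y(\xi^{0})_{i}=s^{2(\xi^{0}_{i+1}-\xi^{0}_{i})}q^{2(\rho(\xi^{0})_{i+1}-\rho(\xi^{0})_{i})}=s^{-4r}q^{-4}=q^{2}$ at $s^{4r}q^{6}=1$; substituting this into $\phi_{i}$ produces $\hat{T}_{i}-q^{-1}$ on $E_{\xi^{0}}$, yielding the first claim.

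For the $i=N$ statement I would reuse the computation carried out in the proof of Proposition~\ref{prop-dim-I}: for any admissible $\xi$ with $\xi_{N}=r$, the scalar $c_{N,\xi}$ in $\phi_{N}E_{\xi}=c_{N,\xi}E_{s_{N}\cdot\xi}$ vanishes at $q_{N}^{2}=-q$. Since $\xi^{0}_{N}=r$, this applies to $\xi=\xi^{0}$, giving $\phi_{N}E_{\xi^{0}}=0$; matching $\phi_{N}$ with $\hat{T}_{N}-q_{N}^{-1}$ via the $Y_{N}$-eigenvalue at the joint specialization completes the argument. The main obstacle is the combinatorial verification that $s_{i}\cdot\xi^{0}$ is inadmissible for every $i\neq\lfloor(N+1)/2\rfloor$: this needs a case split depending on whether $i$ lies within the ``even-multiples'' initial block or the ``odd-multiples'' final block of $\xi^{0}$, and the treatment of the zero entry $\xi^{0}_{\lfloor(N+1)/2\rfloor}=0$, where the convention for $\sigma(0)$ matters, deserves particular care.
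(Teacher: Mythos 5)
Your proposal is correct and follows essentially the same route as the paper: for $i\neq\lfloor(N+1)/2\rfloor$ it reduces to the inadmissibility of $s_{i}\cdot\xi^{0}$ together with the intertwiner vanishing of Lemma 4.7 in \cite{Kas08} and the eigenvalue ratio $y(\xi^{0})_{i+1}/y(\xi^{0})_{i}=q^{2}$ (exactly the mechanism of Lemma~\ref{lemma-constraint}), while for $i=N$ it uses that $s_{N}\cdot\xi^{0}$ \emph{is} admissible but the coefficient $c_{N,\xi^{0}}$ vanishes at $q_{N}^{2}=-q$, followed by evaluating $\phi_{N}$ at $Y_{N}=q^{-1}q_{0}q_{N}$. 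Your explicit admissibility check and the caveat about $\sigma(0)$ merely flesh out details the paper leaves implicit.
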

\begin{proof}
From the definition of the admissibility, an element 
$\xi':=s_{i}\cdot\xi^{0}$, $1\le i\le N-1$, $i\neq\lfloor(N+1)/2\rfloor$,  
is not admissible. 
Therefore, by a similar argument to the proof of Lemma~\ref{lemma-constraint},
we have $(\hat{T}_{i}-q^{-1})E_{\xi^{0}}=0$.

Let $\xi':=s_{N}\cdot\xi^{0}$. 
Since the element $\xi'$ is admissible, we have 
$\phi_{N}E_{\xi^{0}}=c_{N,\xi^{0}}E_{\xi'}$ with the intertwiner $\phi_{N}$ and
a rational function $c_{N,\xi^{0}}$ in $\mathbb{K}$ (see \cite[Proposition 2.5]{Kas08}). 
By a straightforward calculation, we have $c_{N,\xi^{0}}=0$ at $q_2^{2}=-q$. 
The intertwiner $\phi_{N}$ is written as 
\begin{eqnarray*}
\phi_{N}=\hat{T}_{N}+\frac{(q^{-1}_{N}-q_{N})+(q_0^{-1}-q_{0})Y_{N}^{-1}}{Y_{N}^{-2}-1}.
\end{eqnarray*}
Inserting $Y_N=q^{-1}q_0q_{N}$ at $s^{4r}q^{6}=1$ and $q_N^{2}=-q$, 
we obtain the action of $\phi_N$ is equal to the action of $\hat{T}_{N}-q_N^{-1}$ 
on $E_{\xi^{0}}$. 
This completes the proof.
\end{proof}

\section{Boundary quantum Knizhnik--Zamolodchikov equation}
\label{sec-qKZ}
We introduce boundary quantum Knizhnik-Zamolodchikov equations 
associated with the one- and two-boundary Temperley--Lieb algebras.

\subsection{Two-boundary case}
Let $\mathbf{z}:=(z_1,\ldots,z_N)$ and 
$\Psi(\mathbf{z})$ be a function taking values in 
$V^{\otimes N}$, {\it i.e.},
$\Psi(\mathbf{z}):=\sum_{b}\Psi_{b}(\mathbf{z})C_b$ 
where $C_b$ is the Kazhdan--Lusztig bases.
We define the scattering matrices $S_{i}, 1\le i\le N-1$,  
by 
\begin{eqnarray}
\label{def-scatter}
S_i(\mathbf{z})&:=&
\check{R}_{i}(z_i/(s^2z_{i+1}))\check{R}_{i+1}(z_i/(s^2z_{i+2}))
\ldots \check{R}_{N-1}(z_i/(s^2z_{N}))
K_{N}(s^2/z_i) \\
&&\times\check{R}_{N-1}(s^{-2}z_{i}z_N)\ldots\check{R}_{1}(s^{-2}z_{i}z_1)
K_{0}(s^{-1}z_{i}) \nonumber \\
&&\times\check{R}_{1}(z_i/z_1)\ldots\check{R}_{i-1}(z_i/z_{i-1})
\nonumber
\end{eqnarray}

The boundary quantum Knizhnik--Zamolodchikov equation \cite{Che91,Che92,FreRes92}
is 
\begin{eqnarray}
\label{qKZ-scatter}
S_i(\mathbf{z})\Psi(\mathbf{z})=
\Psi(z_1,\ldots,z_{i-1},s^{-2}z_{i},z_{i+1},\ldots,z_{N}), 
\quad 1\le i\le N.
\end{eqnarray}
Suppose that the function $\Psi$ satisfies 
\begin{eqnarray}
\label{qKZ-factor1}
\Psi(s_0\mathbf{z})&=&K_0(s^{-1}z_1)\Psi(\mathbf{z}), \\
\label{qKZ-factor2}
\Psi(s_i\mathbf{z})&=&\check{R}_{i}(z_{i+1}/z_{i})\Psi(\mathbf{z}), \\
\label{qKZ-factor3}
\Psi(s_N\mathbf{z})&=&K_{N}(z_{N})\Psi(\mathbf{z}).
\end{eqnarray}
Then, it is easy to show that $\Psi$ is a solution of the boundary qKZ 
equation. 
Hereafter, we solve the set of equations (\ref{qKZ-factor1}), 
(\ref{qKZ-factor2}) and (\ref{qKZ-factor3}) and 
call them boundary qKZ equation.

Define a linear operator $\hat{e}_{i}:=\hat{T}_{i}-q^{-1}$. 
The boundary quantum Knizhnik--Zamolodchikov equation is rewritten as 
\begin{eqnarray}
\label{qKZ-e}
e_i\Psi(\mathbf{z})=\hat{e}_{i}\Psi(\mathbf{z}), \quad
0\le i\le N.
\end{eqnarray}
Recall that $\Psi=\sum_{b}\Psi_{b}(\mathbf{z})C_b$. 
In Eqn.(\ref{qKZ-e}), the generator $e_i$ acts on a basis 
$C_b$ and the operator $\hat{e}_i$ acts on a function 
$\Psi_{b}(\mathbf{z})$.
We call Eqn.(\ref{qKZ-e}) with $1\le i\le N$ non-affine part of 
the boundary quantum Knizhnik--Zamolodchikov equation.

\subsection{One-boundary case}
Let $\Psi(\mathbf{z}):=\sum_{b\in\mathcal{B}_{N}}\Psi_{b}(\mathbf{z})C_b$ 
be a function taking values 
in $\mathcal{L}_{N}$. 
We define scattering matrices $\tilde{S}_{i}(\mathbf{z})$, $1\le i\le N$,  
by replacing $K_0(s^{-1}z_{1})$ with the identity in Eqn.(\ref{def-scatter}).
The boundary quantum Knizhnik--Zamolodchikov equation is of the form Enq.(\ref{qKZ-scatter}) by 
replacing $S_{i}(\mathbf{z})$ with $\tilde{S}_{i}(\mathbf{z})$. 
The factorized form of the boundary quantum Knizhnik--Zamolodchikov equation 
is Eqns.(\ref{qKZ-factor2}), (\ref{qKZ-factor3}) and 
\begin{eqnarray*}
\Psi(\mathbf{z})=\Psi(s_0\mathbf{z}).
\end{eqnarray*}

\section{Reduction of the boundary qKZ equation}
\label{sec-redqKZ}
\subsection{Two-boundary case}
Suppose that a binary string $\alpha\in\{+,-\}^{N}$ satisfies
$\alpha_{i}\ge\alpha_{i+1}$ for some $i$. 
There is no $C_{\beta}$ such that the expansion of $e_iC_{\beta}$
contains the term $C_{\alpha}$.
Therefore, we have $\hat{e}_{i}\Psi_{\alpha}(\mathbf{z})=0$.
Similarly, suppose that a binary string $\alpha$ satisfies 
$\alpha_{N}=+$. 
Then, there is no $C_{\beta}$ such that the expansion of 
$e_{N}C_{\beta}$ contains the term $C_{\alpha}$.
Thus, we have $\hat{e}_{N}\Psi_{\alpha}=0$.
We have non-trivial equations for $\hat{e}_i\Psi_{\alpha}$ 
when $\alpha_{i}<\alpha_{i+1}$ for $1\le i\le N-1$ or 
$\alpha_N=-$ for $i=N$.
Explicitly, non-trivial equations are written as 
\begin{eqnarray}
\label{qKZ-non-1}
(\hat{e}_{i}+q+q^{-1})\Psi_{\alpha}
=\Psi_{s_{i}\cdot\alpha}+
\sum_{\beta>\alpha}c^{i}_{\alpha\beta}\Psi_{\beta}, 
\quad 1\le i\le N-1,
\end{eqnarray}
where $(\alpha_{i},\alpha_{i+1})=(-,+)$, $\beta>\alpha$ is 
the lexicographic order, $\beta_i\ge\beta_{i+1}$ 
and $c_{\alpha\beta}^{i}\in\mathbb{K}$. 
Similarly, we also have 
\begin{eqnarray}
\label{qKZ-non-2}
(\hat{e}_{N}+q_{N}+q_{N}^{-1})\Psi_{\alpha}
=\Psi_{s_{N}\cdot\alpha}+
\sum_{\beta>\alpha}c^{N}_{\alpha\beta}\Psi_{\beta}, 
\end{eqnarray}
where $\alpha_{N}=-$, $\beta_{N}=+$ and $c^{N}_{\alpha\beta}\in\mathbb{K}$.

Set 
$b_j:=(\underbrace{-\ldots-}_{j-1}+\underbrace{-\ldots-}_{N-j})$ 
for $1\le i\le N$ and $b_0:=(-\ldots-)$ 
and $\Psi_{i}:=\Psi_{b_{i}}$.
\begin{lemma}
\label{lemma-red}
The non-affine part of quantum KZ equations 
$e_i\Psi(\mathbf{z})=\hat{e}_{i}\Psi(\mathbf{z})$, $1\le i\le N$
is equivalent to the following set of equations: 
Eqns.(\ref{qKZ-non-1}), (\ref{qKZ-non-2}) 
and 
\begin{eqnarray}
\label{qKZ-non-3}
\hat{e}_{i}\Psi_{0}=0,\quad 1\le i\le N-1.
\end{eqnarray}
\end{lemma}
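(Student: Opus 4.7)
My approach is to expand the vector identity $e_{i}\Psi(\mathbf{z})=\hat{e}_{i}\Psi(\mathbf{z})$ in the Kazhdan--Lusztig basis and equate coefficients of $C_{\alpha}$ for each $\alpha\in\{+,-\}^{N}$ and each $i$. By the structural observation recalled just before the lemma, $(e_{i}C_{\beta})_{\alpha}=0$ whenever $\alpha$ is \emph{bad} for $i$ -- meaning $\alpha_{i}\ge\alpha_{i+1}$ for $1\le i\le N-1$, or $\alpha_{N}=+$ for $i=N$. The coefficient equations therefore split into two classes. For \emph{good} pairs $(\alpha,i)$, i.e.\ $(\alpha_{i},\alpha_{i+1})=(-,+)$ or $\alpha_{N}=-$, only $C_{\alpha}$ itself and $C_{s_{i}\cdot\alpha}$ plus higher lexicographic terms contribute; reading off the scalars $(e_{i}C_{\alpha})_{\alpha}=-(q_{i}+q_{i}^{-1})$ and $(e_{i}C_{s_{i}\cdot\alpha})_{\alpha}=1$ from the action of $e_{i}$ on the Kazhdan--Lusztig basis, one recovers exactly (\ref{qKZ-non-1}) and (\ref{qKZ-non-2}). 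For bad pairs $(\alpha,i)$ the equation reduces to $\hat{e}_{i}\Psi_{\alpha}=0$. The forward direction of the lemma is then immediate, since (\ref{qKZ-non-3}) is the special case of the bad equations with $\alpha=b_{0}$.

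For the converse, I must show that every bad equation $\hat{e}_{i}\Psi_{\alpha}=0$ with $\alpha\ne b_{0}$ is a consequence of (\ref{qKZ-non-1}), (\ref{qKZ-non-2}), and (\ref{qKZ-non-3}). I plan to argue by induction on $\alpha$ in the lexicographic order controlling the Kazhdan--Lusztig triangularity, with $\alpha=b_{0}$ as the base (or end) of the induction. In the inductive step, given a bad pair $(\alpha,i)$ with $\alpha\ne b_{0}$, I pick an index $j$ such that $\alpha=s_{j}\cdot\alpha'$ with $(\alpha',j)$ a good pair; such a $j$ exists because $\alpha\ne b_{0}$ contains at least one $+$ sign that can be moved across a neighbouring $-$, the boundary sub-case $\alpha_{N}=+$ being handled analogously via (\ref{qKZ-non-2}). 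Solving (\ref{qKZ-non-1}) for $\Psi_{\alpha}=\Psi_{s_{j}\cdot\alpha'}$ gives
\[
\Psi_{\alpha}=(\hat{e}_{j}+q+q^{-1})\Psi_{\alpha'}-\sum_{\beta>\alpha'}c^{j}_{\alpha'\beta}\Psi_{\beta},
\]
and applying $\hat{e}_{i}$ to both sides, then commuting $\hat{e}_{i}$ past $\hat{e}_{j}$ using the Hecke relations $\hat{e}_{i}^{2}=-(q+q^{-1})\hat{e}_{i}$, $[\hat{e}_{i},\hat{e}_{j}]=0$ for $|i-j|>1$, and the braid identity for $|i-j|=1$, rewrites $\hat{e}_{i}\Psi_{\alpha}$ as a sum of terms $\hat{e}_{k}\Psi_{\gamma}$ for indices $\gamma$ earlier in the induction order -- vanishing by the inductive hypothesis -- plus scalar corrections that reassemble into another instance of (\ref{qKZ-non-1}) or (\ref{qKZ-non-2}) whose vanishing is hypothesised. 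Iterating this descent terminates at $\hat{e}_{i}\Psi_{0}=0$, which is precisely (\ref{qKZ-non-3}).

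\textbf{Main obstacle.} The technically delicate part is the bookkeeping in the commutation step: after pushing $\hat{e}_{i}$ through the right-hand side of the good equation, one must verify that every residual $\hat{e}_{k}\Psi_{\gamma}$ is indexed by a $\gamma$ strictly earlier in the induction order, so that the descent is well-founded, and that the correction terms $c^{j}_{\alpha'\beta}\Psi_{\beta}$ coming from the Kazhdan--Lusztig tail do not produce obstructions falling outside this framework. This hinges on the triangularity of the Kazhdan--Lusztig basis (only $\beta>\alpha'$ appear in the tail) together with a judicious choice of $j$ at each step; once the descent scheme is secured, the remaining manipulations are routine Hecke-algebra calculations.
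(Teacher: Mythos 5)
Your proposal is correct and follows essentially the same route as the paper's proof: identify the non-trivial (``good'') coefficient equations as (\ref{qKZ-non-1})--(\ref{qKZ-non-2}), then establish the redundant (``bad'') equations $\hat{e}_{i}\Psi_{\alpha}=0$ by induction in the lexicographic order, substituting the good equations to express $\Psi_{\alpha}$ through lower components and killing the result with $\hat{e}_{i}^{2}=-(q+q^{-1})\hat{e}_{i}$, the commutation relations, and the braid relation. The only difference is that the paper actually carries out the case analysis you defer as ``routine bookkeeping'' (its Cases 1, 2-a through 2-c and 4, tracking the Kazhdan--Lusztig tail terms $\theta(P;\cdot)$, $\theta(Q;\cdot)$ explicitly), but the strategy and all key ingredients coincide.
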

\begin{proof}
We prove Proposition in the case of type BI. 
One can prove Proposition for other types by a similar argument.  
To prove Proposition, it is enough to show that 
one can obtain $\hat{e}_{i}\Psi_{\alpha}=0$ with 
$\alpha_{i}\ge\alpha_{i+1}$ for $1\le i\le N-1$ and 
with $\alpha_{N}=+$ for $i=N$ from Eqns.(\ref{qKZ-non-1}), 
(\ref{qKZ-non-2}) and (\ref{qKZ-non-3}).

We prove Proposition by induction in the reversed lexicographic 
order. 
Since $e_NC_{-\ldots-+}=\kappa^{-1}C_{-\ldots-}+\ldots$, 
we have  
\begin{eqnarray*}
\hat{e}_{i}\Psi_{N}&=&\hat{e}_{i}(\hat{e}_{N}+q_N+q_N^{-1})\Psi_{0} \\
&=&(\hat{e}_{N}+q_N+q_N^{-1})\hat{e}_{i}\Psi_{0} \\
&=&0
\end{eqnarray*}
where $1\le i\le N-2$ and we have used the commutation relation of $\hat{e}_{i}$.

Suppose that Proposition holds true up to $\beta>\alpha$. 
We have three cases for $(\alpha_{i},\alpha_{i+1})$: 1) $(+,-)$ 
2)$(-,-)$ and 3) $(+,+)$. We also have a case for $\alpha_N$: 
4) $\alpha_N=+$.
We consider only the case 1, 2 and 4 since one can apply a similar 
argument to case 3.

\paragraph{\bf Case 1}
Let $P_i$ and $Q_i$ be the statements:  
\begin{enumerate}
\item[($P_i$)] The $i$-th and $(i+1)$-th site are connected via a dashed arc,
\item[($Q_{i})$] The $i$-th arrow is the down arrow with a star.
\end{enumerate}
For a binary string $\epsilon\in\{\pm\}^{N}$, we define 
$\theta(R;\epsilon)=\Psi_{\epsilon}$ if the diagram for $\epsilon$ 
satisfies the statement $R$, and $\theta(R;\epsilon)=0$ otherwise.
Eqn.(\ref{qKZ-non-1}) is explicitly written as follows.
When $(\alpha_1,\alpha_2)=(+,-)$ (the first and second sites are 
underlined), we have 
\begin{eqnarray}
\label{eqn-non-1}
\Psi_{\underline{+-}+\ldots}
&=&(\hat{e}_1+[2])\Psi_{\underline{-+}+\ldots}
-\kappa_{N}^2\theta(P_2;\underline{--}-\ldots), \\
\label{eqn-non-2}
\Psi_{\underline{+-}-\ldots}
&=&(\hat{e}_1+[2])\Psi_{\underline{-+}-\ldots}
-\Psi_{\underline{--}+\ldots}
-\kappa_N\theta(Q_2;\underline{--}-\ldots). 
\end{eqnarray}
When $(\alpha_{N-1},\alpha_N)=(+,-)$ (the $(N-1)$-th 
and $N$-th sites are underlined), we have 
\begin{eqnarray}
\label{eqn-non-3}
\Psi_{\ldots+\underline{+-}}
&=&(\hat{e}_{N-1}+[2])\Psi_{\ldots+\underline{-+}}
-\Psi_{\ldots-\underline{++}}
-\kappa_{N}\theta(Q_N;\ldots+\underline{--}), \\
\label{eqn-non-4}
\Psi_{\ldots-\underline{+-}}
&=&(\hat{e}_{N-1}+[2])\Psi_{\ldots-\underline{-+}}
-\kappa_{N}\Psi_{\ldots-\underline{--}}.
\end{eqnarray}
In general, if $(\alpha_i,\alpha_{i+1})=(+-)$ 
(the $i$-th and $(i+1)$-th sites are underlined)  
we have 
\begin{eqnarray}
\label{eqn-non-5}
&&\Psi_{\ldots+\underline{+-}+\ldots}
=(\hat{e}_{i}+[2])\Psi_{\ldots+\underline{-+}+\ldots}
-\Psi_{\ldots-\underline{++}+\ldots}
-\kappa_{N}^{2}\theta(P_{i+1};\ldots+\underline{--}-\ldots), \\
\label{eqn-non-6}
&&\Psi_{\ldots+\underline{+-}-\ldots}
=(\hat{e}_i+[2])\Psi_{\ldots+\underline{-+}-\ldots}
-\Psi_{\ldots+\underline{--}+\ldots}
-\Psi_{\ldots-\underline{++}-\ldots} 
-\kappa_{N}\theta(Q_{i+1};\ldots+\underline{--}-\ldots), \\
\label{eqn-non-7}
&&\Psi_{\ldots-\underline{+-}+\ldots}
=(\hat{e}_{i}+[2])\Psi_{\ldots-\underline{-+}+\ldots}
-\kappa_{N}^2\theta(P_{i+1};\ldots-\underline{--}-\ldots), \\
\label{eqn-non-8}
&&\Psi_{\ldots-\underline{+-}-\ldots}
=(\hat{e}_i+[2])\Psi_{\ldots-\underline{-+}-\ldots}
-\Psi_{\ldots-\underline{--}+}
-\kappa_{N}\theta(Q_{i+1};\ldots-\underline{--}-\ldots).
\end{eqnarray}
From Eqns.(\ref{eqn-non-1}) to (\ref{eqn-non-8}), 
we have $\hat{e}_i\Psi_{\alpha}=0$ by 
$\hat{e}_i^2=-[2]\hat{e}_i$ and the induction assumption 
$\hat{e}_i\Psi_{\beta}=0$ for $\beta_i\ge\beta_{i+1}$.

\paragraph{\bf Case 2}
Let $h$ be the largest integer such that $\alpha_h=+$ 
and $\alpha_{j}=-$ for $h+1\le j\le i-1$. 
There are three cases for $h$: a) there is no such $h$, 
b) $h=1$ and c) $h\ge2$.

\paragraph{Case 2-a}
Since the binary string $\alpha$ satisfies $\alpha_j=-$
with $1\le j\le i+1$, we have 
\begin{eqnarray*}
\Psi_{\alpha}=\sum_{k\ge0}\sum_{i_1,i_2,\ldots,i_k}
c_{i_1,\ldots,i_k}
\hat{e}_{i_1}\ldots\hat{e}_{i_k}\Psi_{0} 
\end{eqnarray*}
where $c_{i_1,\ldots,i_k}\in\mathbb{K}$ and 
$i_l\ge i+2$. 
Since $[\hat{e}_i,\hat{e}_{i_l}]=0$ and $\hat{e}_{i}\Psi_{0}=0$, 
we have $\hat{e}_i\Psi_{\alpha}=0$.

\paragraph{Case 2-b}
We have three subcases for $i$: i) $i=2$, ii) $i=3$, 
and iii) $i\ge4$.
We consider the case i) and ii) since one can apply a 
similar argument to the case iii). 

\paragraph{Case 2-b-i}
From Eqn.(\ref{eqn-non-2}), (\ref{eqn-non-7}) and (\ref{eqn-non-8}), 
we obtain 
\begin{eqnarray*}
\hat{e}_2\Psi_{\alpha}
&=&\hat{e}_2(\hat{e}_1+[2])(\hat{e}_2+[2])\Psi_{--+\ldots}
-\hat{e}_{2}\Psi_{--+\ldots} \\
&=&(\hat{e}_2\hat{e}_1\hat{e}_2-\hat{e}_{2})\Psi_{--+\ldots}  \\
&=&(\hat{e}_1\hat{e}_2\hat{e}_1-\hat{e}_{1})\Psi_{--\ldots}  \\
&=&0,
\end{eqnarray*}
where we have used the induction assumption and the braid relation 
for $\hat{e}_i$.

\paragraph{Case 2-b-ii}
We have 
\begin{eqnarray*}
\hat{e}_3\Psi_{+-\underline{--}\ldots}
&=&\hat{e}_3(\hat{e}_1+[2])\Psi_{-+\underline{--}\ldots}
-\kappa_N\hat{e}_3\theta(Q_2;--\underline{--})
-\hat{e}_3\Psi_{--\underline{+-}\ldots} \\
&=&0,
\end{eqnarray*}
where we have used the induction assumption and 
the commutation relations for $\hat{e}_i$.

\paragraph{Case 2-c}
We have two cases for $\alpha_{h-1}$: i) $\alpha_{h-1}=+$, 
and ii) $\alpha_{h-1}=-$. 
We consider the case i) only since one can apply a similar 
argument to the case ii.

\paragraph{Case 2-c-i}
We have three cases for $i$: $i=h+1$, $i=h+2$ and 
$i\ge h+3$. 
We consider the $i=h+1$ case since one can apply a similar 
argument to other cases.
\begin{eqnarray*}
\hat{e}_{i}\Psi_{\alpha}
&=&\hat{e}_{i}(\hat{e}_{i-1}+[2])\Psi_{+-\underline{+-}}
-\hat{e}_{i}\Psi_{\ldots+-\underline{-+}} \\
&=&\hat{e}_{i}(\hat{e}_{i-1}+[2])(\hat{e}_{i}+[2])\Psi_{+-\underline{-+}}
-\hat{e}_{i}\Psi_{\ldots+-\underline{-+}} \\
&=&0,
\end{eqnarray*}
where we have used the induction assumption and the braid 
relation for $\hat{e}_{i}$.

\paragraph{\bf Case 4}
Since we have 
$\hat{e}_N\Psi_{N}=\hat{e}_N(\hat{e}_N+q_N+q_N^{-1})\Psi_{0}=0$, 
Proposition holds true for $b_{N}$. 
We assume that Proposition is true up to $\beta$ with $\beta>\alpha$.
Since $\alpha_N=+$, from Eqn.(\ref{qKZ-non-2}), we have 
\begin{eqnarray*}
\hat{e}_N\Psi_{\alpha}
&=&\hat{e}_N(\hat{e}_N+q_N+q_N^{-1})\Psi_{s_N\cdot\alpha}
-\sum_{\beta>\alpha}c_{\alpha\beta}^{N}\hat{e}_N\Psi_{\beta} \\
&=&0,
\end{eqnarray*}
where we have used the induction assumption $\hat{e}_N\Psi_{\beta}=0$.
\end{proof}

We consider a graph whose vertices are labelled by a binary string 
in $\{\pm\}^{N}$. 
We connect two vertices labelled by $\alpha$ and $\beta$ 
if and only if $\alpha=s_{i}\beta$, $s_{i}\in W_0$ and 
$\alpha<\beta$ in the lexicographic order. 
Further, we put the integer $i$ on the edge connecting the vertices 
$\alpha$ and $\beta$.
We denote this graph by $\Gamma'$.
\begin{prop}
\label{prop-numofqKZ}
Let $D_{N}$ be the number of edges in $\Gamma'$. 
Then, $D_N$ satisfies the following recurrence 
relation: 
\begin{eqnarray*}
D_N=2D_{N-1}+2^{N-2}
\end{eqnarray*} 
with $D_2=3$.
\end{prop}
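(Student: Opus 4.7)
The plan is to derive the closed-form expression $D_{N} = (N+1)\cdot 2^{N-2}$ by counting the edges of $\Gamma'$ according to their labels, and then to read off the stated recurrence as a one-line algebraic identity.

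First I would check that each edge of $\Gamma'$ carries a unique simple-reflection label $i\in\{1,\ldots,N\}$: if $s_{i}\beta=s_{j}\beta\ne\beta$ for $i\ne j$, then $s_{j}s_{i}$ fixes $\beta$, and a direct inspection of the explicit $W_{0}$-action on $\{\pm\}^{N}$ (coordinate transpositions $s_{i}$ for $i\le N-1$ and the sign flip $s_{N}$ of the last entry) rules this out by a short case analysis on $|i-j|$ and on whether $N\in\{i,j\}$. For example, if $\{i,j\}=\{N-1,N\}$ then equating the $(N-1,N)$-coordinates of $s_{N-1}\beta$ and $s_{N}\beta$ forces $\beta_{N-1}=\beta_{N}=-\beta_{N-1}$, which is impossible; the other cases force $s_{i}\beta=\beta$ and therefore $\alpha=\beta$. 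Consequently the total edge count is obtained by summing over $i$ the number of edges with label $i$, with no double counting.

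Next I would count the edges with each label. For $1\le i\le N-1$, the condition $s_{i}\beta\ne\beta$ is equivalent to $\beta_{i}\ne\beta_{i+1}$; there are $2\cdot 2^{N-2}$ such strings $\beta$, and each unordered pair is counted twice, giving $2^{N-2}$ edges with label $i$. For $i=N$, the sign flip $s_{N}$ always produces a distinct string, so the pairs $\{\beta,s_{N}\beta\}$ are parametrized by the first $N-1$ entries, yielding $2^{N-1}$ edges. Summing,
\begin{eqnarray*}
D_{N}=(N-1)\cdot 2^{N-2}+2^{N-1}=(N+1)\cdot 2^{N-2},
\end{eqnarray*}
so $D_{2}=3$, and the recurrence follows from
\begin{eqnarray*}
2D_{N-1}+2^{N-2}=2\cdot N\cdot 2^{N-3}+2^{N-2}=(N+1)\cdot 2^{N-2}=D_{N}.
\end{eqnarray*}

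I do not expect a serious obstacle: the argument reduces to a labelled enumeration followed by a trivial algebraic check. The only place that requires any care is the well-definedness of the edge label, which is handled by the short case analysis described above; no input from the Kazhdan--Lusztig bases or the qKZ equation is needed for this proposition.
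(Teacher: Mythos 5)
Your proof is correct, but it takes a genuinely different route from the paper. You count the edges of $\Gamma'$ label by label — $2^{N-2}$ edges for each transposition label $i\le N-1$ and $2^{N-1}$ for the sign-flip label $N$ — to get the closed form $D_N=(N+1)\,2^{N-2}$, and then verify the recurrence as an algebraic identity. The paper instead works directly with the recurrence: it writes $D_N=\sum_\alpha d_\alpha$, where $d_\alpha$ counts the positions $i$ with $(\alpha_i,\alpha_{i+1})=(-,+)$ plus the contribution of $\alpha_N=-$, and splits the sum according to the last entry $\alpha_N=\pm$, obtaining the partial sums $2^{N-1}+(D_{N-1}-2^{N-2})$ and $D_{N-1}$ without ever producing a closed form. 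Your approach buys more (an explicit formula, from which the corollary $D_N\ge 2^N-1$ is immediate), at the price of the small well-definedness check that no unordered pair is joined by two distinct $s_i$; that check is sound as you sketch it, though in the subcase $i\le N-2$, $j=N$ the equality $s_i\beta=s_N\beta$ is outright impossible (the $N$-th coordinates already disagree) rather than forcing $s_i\beta=\beta$ — a harmless imprecision. Both arguments count the same objects, namely the non-trivial equations (\ref{qKZ-non-1}) and (\ref{qKZ-non-2}), and neither needs any input from the Kazhdan--Lusztig bases.
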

\begin{proof}
From the construction of $\Gamma'$, the number of edges 
in $\Gamma'$ is equal to the number of non-trivial 
equations (\ref{qKZ-non-1}) and (\ref{qKZ-non-2}).
Let $d_{\alpha}$ be the number of $i$, $1\le i\le N$,  
such that $(\alpha_{i},\alpha_{i+1})=(-,+)$ or 
$\alpha_N=-$.
Then we have $D_N=\sum_{\alpha}d_{\alpha}$. 
The number of $\alpha$ with $\alpha_N=-$ is 
$2^{N-1}$.
When $\alpha_{N}=-$, the partial sum 
$\sum_{\alpha,\alpha_N=-}d_{\alpha}$
is equal to $2^{N-1}+(D_{N-1}-2^{N-2})$.
When $\alpha_{N}=+$, the partial sum 
$\sum_{\alpha,\alpha_N=+}d_{\alpha}$ 
is equal to $D_{N-1}$. 
Therefore, we have 
\begin{eqnarray*}
D_N&=&2^{N-1}+D_{N-1}+(D_{N-1}-2^{N-2}) \\
&=&2D_{N-1}+2^{N-2}.
\end{eqnarray*}
By a direct computation, we have $D_{2}=3$. 
This completes the proof.
\end{proof}
By a straightforward computation, one can show that
\begin{cor}
We have $D_{N}\ge 2^{N}-1$. 
\end{cor}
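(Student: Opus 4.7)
The plan is to derive the inequality $D_N \ge 2^N - 1$ directly from the recurrence $D_N = 2 D_{N-1} + 2^{N-2}$ and the base case $D_2 = 3$ established in Proposition~\ref{prop-numofqKZ}. The cleanest route is induction on $N$, with the base case $D_2 = 3 = 2^2 - 1$ giving equality.

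For the inductive step, I would assume $D_{N-1} \ge 2^{N-1} - 1$ and substitute into the recurrence:
\begin{eqnarray*}
D_N \;=\; 2 D_{N-1} + 2^{N-2}
\;\ge\; 2(2^{N-1} - 1) + 2^{N-2}
\;=\; 2^N - 2 + 2^{N-2}
\;\ge\; 2^N - 1,
\end{eqnarray*}
where the last inequality uses $2^{N-2} \ge 1$ for $N \ge 2$. This gives the desired bound immediately.

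If a sharper statement is preferred, one can solve the recurrence in closed form: the homogeneous solution is $C \cdot 2^N$, and trying a particular solution of the form $A \cdot N \cdot 2^{N-2}$ yields $A = 1$. The initial condition $D_2 = 3$ then fixes the general solution as $D_N = (N+1)\,2^{N-2}$, from which $D_N \ge 2^N - 1 = 4\cdot 2^{N-2} - 1$ is manifest for all $N \ge 2$ (with equality only at $N = 2$). There is no real obstacle here; the only care needed is bookkeeping at the base case, since the inequality becomes tight precisely at $N = 2$ and there is no slack to spare.
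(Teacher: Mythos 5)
Your induction from the recurrence $D_N = 2D_{N-1} + 2^{N-2}$ with base case $D_2 = 3$ is correct, and the closed form $D_N = (N+1)2^{N-2}$ checks out against the recurrence. The paper offers no explicit proof beyond calling it ``a straightforward computation'' from Proposition~\ref{prop-numofqKZ}, and your argument is exactly that computation, so the approaches coincide.
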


We define a map $\varphi_{\pm}$ from an admissible element $\nu$ in 
$W_0\nu^{J,+}$ to  a binary string $\alpha$ of length $N$:   
\begin{eqnarray*}
\varphi_{\pm}:\nu_i\mapsto\alpha_i=\mathrm{sign}(\pm\nu_i), \quad 1\le i\le N.
\end{eqnarray*}
The inverse $\varphi_{\pm}^{-1}$ is obtained by the following algorithm:
\begin{eqnarray*}
\varphi_{+}: 
\alpha_i\mapsto\nu_{i}&=&
\begin{cases}
\max(S_{i}), & \text{for\ } \alpha_i=+, \\
-\min(S_{i}), & \text{for\ } \alpha_i=-, 
\end{cases} \\
\varphi_{-}: 
\alpha_i\mapsto\nu_{i}&=&
\begin{cases}
-\min(S_{i}), & \text{for\ } \alpha_i=+, \\
\max(S_{i}), & \text{for\ } \alpha_i=-, 
\end{cases} \\
S_{i+1}&=&S_{i}\setminus{|\nu_{i}|}
\end{eqnarray*}
with $S_{1}:=\{J,J+r,\ldots,J+(N-1)r\}$.
From the explicit construction, $\varphi_{\pm}$ is a bijection. 

\begin{prop}
\label{prop-diagram}
We have $\Gamma(\nu^{J,+})=\Gamma'$ as a graph.
\end{prop}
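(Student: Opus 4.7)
By Proposition~\ref{prop-dim}, the number of admissible elements in $W_0\nu^{J,+}$ equals $2^N = |\{\pm\}^N|$, and the map $\varphi_+$ defined just above identifies the two vertex sets of $\Gamma(\nu^{J,+})$ and $\Gamma'$. The plan is to show that $\varphi_+$ intertwines the edge structures with labels.

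First I would observe that for any admissible $\lambda$ and any $1 \le i \le N$, $s_i\lambda \ne \lambda$: the entries of $\nu^{J,+}$ have distinct nonzero absolute values, so every $s_i$ strictly changes $\lambda$. Hence an edge with label $i$ at $\lambda$ in $\Gamma(\nu^{J,+})$ exists precisely when $s_i\lambda$ is also admissible.

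The main step is a lemma: for admissible $\lambda \in W_0\nu^{J,+}$ with $\alpha = \varphi_+(\lambda)$, the element $s_i\lambda$ is admissible if and only if either $i = N$, or $1 \le i \le N-1$ and $\alpha_i \ne \alpha_{i+1}$; moreover, whenever $s_i\lambda$ is admissible one has $\varphi_+(s_i\lambda) = s_i\alpha$. To establish this I would first extract from the iterative definition of $\varphi_+^{-1}$ the structural observation that at each step $j$ the set $S_j$ is a contiguous sub-block of the arithmetic progression $\{J, J+r, \ldots, J+(N-1)r\}$, since each step removes either the current max (when $\alpha_j = +$) or the current min (when $\alpha_j = -$). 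Granting this, I would directly compare $\varphi_+^{-1}(\alpha)$ and $\varphi_+^{-1}(s_i\alpha)$ for $1 \le i \le N-1$: when $\alpha_i \ne \alpha_{i+1}$, positions $j < i$ and $j > i+1$ receive the same values in both constructions (because $S_{i+2}$ equals $S_i$ with both its max and its min removed, regardless of the order in which they are taken), while positions $i$ and $i+1$ receive swapped values, so $\varphi_+^{-1}(s_i\alpha) = s_i\lambda$, which is therefore admissible. When $\alpha_i = \alpha_{i+1}$, the element $s_i\lambda$ has the same sign vector $\alpha$ as $\lambda$ but is distinct from $\lambda$, so by injectivity of $\varphi_+$ on admissible elements it cannot be admissible. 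The case $i = N$ is analogous: $S_N$ is a singleton, so $|s_N\lambda_N| = |\lambda_N|$, and a direct check gives $\varphi_+^{-1}(s_N\alpha) = s_N\lambda$.

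Given the lemma, the edge correspondence is immediate: edges of $\Gamma(\nu^{J,+})$ between admissible $\lambda$ and $s_i\lambda$ map bijectively under $\varphi_+$ onto pairs $\{\alpha, s_i\alpha\}$ with $s_i\alpha \ne \alpha$, which are precisely the edges of $\Gamma'$; labels $i$ are preserved, and the lexicographic orientation imposed in the definition of $\Gamma'$ is irrelevant for the underlying graph equality. The main obstacle is the careful bookkeeping in the lemma, but the contiguous-block observation cleanly reduces it to a short case analysis in $\alpha_i\alpha_{i+1} \in \{+-,-+,++,--\}$ with no further input beyond the definitions.
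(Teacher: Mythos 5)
Your proposal is correct, but it proves the edge correspondence by a genuinely different mechanism than the paper. Both proofs use $\varphi_{+}$ as the vertex bijection and reduce the claim to ``$\alpha$ has an edge labelled $i$ in $\Gamma'$ if and only if $s_{i}\lambda$ is admissible.'' The paper establishes this equivalence by checking Definition~\ref{Def-adm} directly: for $\alpha_{i}=\alpha_{i+1}$ it exhibits the pair $(i,i+1)$ of $s_{i}\mu$ as a neighbourhood, and for $\alpha_{i}\neq\alpha_{i+1}$ (and for $i=N$) it runs a case analysis verifying that no neighbourhood is created. You instead never touch the neighbourhood conditions: the ``no edge'' direction follows from injectivity of $\varphi_{+}$ (since $s_{i}\lambda\neq\lambda$ has the same sign vector as $\lambda$ when $\alpha_{i}=\alpha_{i+1}$), and the ``edge'' direction follows from your contiguous-block observation, which gives $\varphi_{+}^{-1}(s_{i}\alpha)=s_{i}\lambda$ and hence admissibility of $s_{i}\lambda$ because $\varphi_{+}^{-1}$ always outputs admissible elements. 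This is cleaner and also yields the label-compatibility $\varphi_{+}(s_{i}\lambda)=s_{i}\alpha$ explicitly, which the paper leaves implicit. The trade-off is that your argument rests entirely on the statement, asserted but not proved in detail in the paper (``From the explicit construction, $\varphi_{\pm}$ is a bijection''), that the algorithm for $\varphi_{\pm}^{-1}$ is a two-sided inverse onto the set of admissible elements of $W_{0}\nu^{J,+}$; in particular you need both injectivity of $\varphi_{+}$ and admissibility of every algorithm output. These facts do follow from the recursive construction in the proof of Proposition~\ref{prop-dim}, so your proof is sound, but if one wanted a self-contained argument one would have to supply that verification, which is roughly the work the paper's direct case analysis performs.
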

\begin{proof}
The number of vertices in $\Gamma(\nu^{J,+})$ and $\Gamma'$ 
is $2^{N}$. 
We have a bijection $\varphi_{\pm}$ from vertices in 
$\Gamma(\nu^{J,+})$ to vertices in $\Gamma'$. 
To prove $\Gamma(\nu^{J,+})=\Gamma'$, it is enough to 
show that vertices $\mu$ and $\nu$ are connected by 
an edge with an integer $i$ in $\Gamma(\nu^{J,+})$  
if and only if vertices $\alpha=\varphi_{\pm}(\mu)$
and $\beta=\varphi_{\pm}(\nu)$ are connected by an edge 
with the integer $i$ in $\Gamma'$.
We prove Proposition for $\varphi_{+}$ since one can apply 
the same argument to $\varphi_{-}$.

We first show that if the vertex $\alpha=\varphi_{+}(\mu)$ 
does not have an edge with integer $i$ in $\Gamma'$, then 
$s_i\mu$ is not admissible. 
We have $\alpha_{i}=\alpha_{i+1}$. 
From the explicit construction of $\varphi_{\pm}^{-1}$, 
we have $\mu_{i+1}=\mu_{i}-1$ for $\alpha_i=+$ 
and $\mu_{i+1}=\mu_{i}+1$ for $\alpha_i=-$. 
From Definition \ref{Def-adm}, the pair $(i,i+1)$ in 
$s_i\mu$ is a neighbourhood. 
Thus $s_{i}\mu$ is not admissible.

Below, we will show that if $\Gamma'$ has an edge with the integer 
$i$, then $\Gamma(\nu^{J,+})$ has the corresponding edge.
Suppose that $\alpha=s_i\beta$ with $\alpha<\beta$.
Then, in $\Gamma'$, the two vertices $\alpha$ and 
$\beta$ is connected and the edge has the integer $i$.
We have two cases for $i$: 1) $1\le i\le N-1$, and 
2) $i=N$. 

\paragraph{Case 1}
Let $\nu=\varphi_{+}(\beta)$. 
Since $(\beta_{i},\beta_{i+1})=(-,+)$, 
we have $|\nu_{i+1}|-|\nu_{i}|\ge2$ for $1\le i\le N-2$.
In this case, $s_i\nu$ is admissible since $s_i\nu$ does 
not have a neighbourhood.  
Suppose that $i=N-1$. 
We have $(\nu_{N-1},\nu_{N})=(-p,p+1)$ for some $p\in\mathbb{Z}_{>0}$.
Let $\mu=s_{i}\nu$. 
We have $|\rho(\mu)_{N-1}|-|\rho(\mu)_{N}|=1$, 
$|\mu_{N-1}|-|\mu_{N}|=1$ and $(\sigma(\mu)_{N-1},\sigma(\mu)_{N})=(+,-)$. 
From the definition of neieghbourhood (see Definition \ref{Def-adm}), 
$\mu$ is admissible. Thus we have an edge with the integer $i$.

\paragraph{Case 2}
Let $\nu=\varphi_{+}(\beta)$ and $\mu=s_{i}\nu$. 
Since $\beta_N=-$, we have $\nu_{N}=-p$ with some $p\in\mathbb{Z}_{>0}$.
From the explicit construction of $\varphi_{+}$, there exists no
integer $i$, $1\le i\le N-1$, such that $v_{i}=-(p+1)$ or $v_{i}=(p-1)$. 
We have two cases for $v_i$ for some $1\le i\le N-1$: 
a) $v_i=p+1$ and b) $v_i=-(p-1)$.
We consider only case a since we can apply the essentially same 
argument to case b.

\paragraph{Case 2-a}
We have $|\rho(\mu)_i|-|\rho(\mu)_N|=1$, 
$|\mu_i|-|\mu_{N}|=1$ and $(\sigma(\mu)_i,\sigma(\mu)_N)=(+,+)$. 
From Definition \ref{Def-adm}, $(i,N)$ is not a neighbourhood. 
Thus $\mu$ is admissible. 
In $\Gamma(\nu^{J,+})$,  vertices $\mu$ and $\nu$ are connected 
by an edge with the integer $i$.
This completes the proof.
\end{proof}

From Proposition~\ref{prop-numofqKZ}, we have $D_N$ non-trivial 
equations.
Note that if $\alpha=s_i\beta$ with $\alpha<\beta$, we have 
a non-trivial equation (\ref{qKZ-non-1}) or (\ref{qKZ-non-2}).
An edge of the graph $\Gamma'$ encodes these non-trivial equations. 
Since the graph $\Gamma'$ is connected, for any $v\in\{\pm\}^{N}$, 
there exists a sequence of vertices 
$\mathbf{v}:=(v_0,v_1,\ldots,v_l)$ 
and a sequence of integers $\mathbf{i}:=(i_1,\ldots,i_l)$ such that 
$v_{j-1}>v_{j}$ and the vertices $v_{j-1}$ and $v_j$ are connected by an edge 
with an integer $i_j$. 
We call the doublet $(\mathbf{v,i})$ a path from $v_0$ to 
$v_l$.

Suppose that $p=(\mathbf{v,i})$ with $\mathbf{v}=(v_0,v_1,v_2)$ and 
$\mathbf{i}=(p,q)$ is a path from $v_0$ to $v_2$ with $|p-q|>1$. 
Since $v_2=s_qs_pv_0=s_ps_qv_0$, we have another path 
$p'=(\mathbf{v',i'})$ with $\mathbf{v}'=(v_0,v'_1,v_2)$ and $\mathbf{i}'=(q,p)$. 
Although We have two non-trivial equations (\ref{qKZ-non-1}) 
corresponding to paths, these two equations are compatible 
with $\hat{e}_{p}\hat{e}_q=\hat{e}_{q}\hat{e}_{p}$ and
$e_{p}e_q=e_{q}e_{p}$.  
By a local change of a path, we mean that we change a partial 
path $p$ to $p'$. 
Notice that there is no partial path $(\mathbf{v,i})$ with 
$\mathbf{v}=(v_0,v_1,v_2,v_3)$ and $\mathbf{i}=(p,p+1,p)$ 
in $\Gamma'$. 

Suppose that there are several paths from $b_0$ to $v$. 
Fix a path $p_{v}=(\mathbf{v,i})$ form $b_0$ to $v$. 
One can obtain all the other paths from $p$ by successive 
local changes of $p$.
Therefore, we have one non-trivial equation which is 
associated with the path $p_v$. 
We obtain
\begin{prop}
\label{prop-redqKZ}
The non-affine part of boundary qKZ equation is equivalent 
to the following set of $2^{N}+N-2$ equations:
\begin{enumerate}
\item $\hat{e}_i\Psi_0=0, 1\le i\le N-1$. 
\item Given $v<b_0$, fix a path $p_v:=(\mathbf{v,i})$ with 
$\mathbf{i}=(i_1,\ldots,i_l)$. Then, 
\begin{eqnarray}
\label{redqKZ}
e_{i_l}\ldots e_{i_1}\Psi(\mathbf{z})
=\hat{e}_{i_l}\ldots \hat{e}_{i_1}\Psi(\mathbf{z}). 
\end{eqnarray}
\end{enumerate}

\end{prop}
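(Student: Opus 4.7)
The plan is to establish the proposition by reducing to Lemma~\ref{lemma-red} and then exploiting the combinatorial structure of the graph $\Gamma'$. The forward direction (that the $2^{N}+N-2$ equations follow from the non-affine qKZ) is essentially routine: iterating $e_i\Psi=\hat{e}_i\Psi$ along a path $p_v=(\mathbf{v},\mathbf{i})$ and using that $\hat{e}_i$ (which acts on scalar coefficients $\Psi_b$) commutes with $e_j$ (which acts on the Kazhdan--Lusztig basis vectors $C_b$) yields (\ref{redqKZ}); equation (i) of the proposition is precisely (\ref{qKZ-non-3}) of Lemma~\ref{lemma-red}.

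For the converse, by Lemma~\ref{lemma-red} it suffices to derive every scalar equation (\ref{qKZ-non-1}) and (\ref{qKZ-non-2}) from the given $2^{N}-1$ path equations together with $\hat{e}_i\Psi_0=0$. I plan an induction on vertices of $\Gamma'$ in reverse lexicographic order. The base case (length-one paths $p_{v_1}$) recovers the edge equation for $b_0\to v_1$ directly by extracting the $C_{v_1}$-component of (\ref{redqKZ}). For the inductive step, given an edge $\alpha\to\beta$ with index $i$ that does not lie on the chosen path $p_\beta$, I consider the alternative path $p_\alpha\cdot i$ obtained by appending the edge to $p_\alpha$. The vector equation (\ref{redqKZ}) for $p_\alpha\cdot i$, together with the edge equations along $p_\alpha$ that are already known by induction, isolates the missing edge equation for $\alpha\to\beta$.

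The decisive step is to show that the two paths $p_\alpha\cdot i$ and $p_\beta$ yield equivalent vector equations. As the authors remark, a partial path of the form $(v_0,v_1,v_2,v_3)$ with indices $(p,p+1,p)$ never occurs in $\Gamma'$: such a triple would require a chain $v_0>v_1>v_2>v_3$ in the lexicographic order with the edge condition $(\alpha_{i},\alpha_{i+1})=(-,+)$ (or the boundary condition at $i=N$) realised at each successive step, and a direct inspection of all binary-string configurations rules this out. Consequently, any two paths in $\Gamma'$ between common endpoints are related by a sequence of local swaps $(p,q)\leftrightarrow(q,p)$ with $|p-q|>1$ alone, with no braid moves ever needed. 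The equivalence of the associated vector equations then follows from the commutation relations $[e_p,e_q]=0=[\hat{e}_p,\hat{e}_q]$ for $|p-q|>1$, which hold in the two-boundary Temperley--Lieb algebra and the polynomial representation respectively. A final count gives $(2^{N}-1)+(N-1)=2^{N}+N-2$, as claimed.

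I expect the path-independence argument to be the main obstacle: both the structural fact that $(p,p+1,p)$ subpaths do not arise in $\Gamma'$, and the consequent reduction of the equivalence among path equations to pure commutations, will require careful case analysis of the edge types (\ref{qKZ-non-1}) versus (\ref{qKZ-non-2}) near the boundary site $N$, and of how the correction terms $\sum_{\beta>\alpha}c^{i}_{\alpha\beta}\Psi_{\beta}$ on the right-hand sides of (\ref{qKZ-non-1})--(\ref{qKZ-non-2}) propagate when indices are swapped. The remaining steps---the base case, the induction, and the counting---should follow routinely once path-independence is in hand.
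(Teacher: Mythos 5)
Your plan follows essentially the same route as the paper: the paper's ``proof'' is just the discussion preceding the proposition (connectivity of $\Gamma'$, path-independence via local swaps of commuting indices, the observation that no $(p,p+1,p)$ subpath occurs so braid moves are never needed, and the count $(2^N-1)+(N-1)$), and your proposal reproduces each of these ingredients, while being somewhat more explicit than the paper about how the converse direction recovers the individual edge equations (\ref{qKZ-non-1})--(\ref{qKZ-non-2}) by induction from the path equations. The steps you flag as needing careful case analysis are precisely the ones the paper asserts without detailed verification, so you have correctly identified both the argument and where its weight lies.
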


Let $W_{i}^{p}$ be the following statement: 
\begin{center}
($W_{i}^{p}$) The $i$-th arrow is the down arrow
with the integer $p$.
\end{center}
For a binary string $\epsilon\in\{\pm\}^{N}$, we define 
$\theta(W_{i}^{p};\epsilon):=\Psi_{\epsilon}$ if the 
diagram for $\epsilon$ satisfies the statement $W_{i}^{p}$,
and $\theta(W_{i}^{p};\epsilon):=0$ otherwise.
Set $\widetilde{\Psi}_{i}:=\Psi_{i}-q^{-1}\Psi_{i+1}$ 
for $1\le i\le N-1$ and $\widetilde{\Psi}_{N}:=\Psi_{N}$. 

\begin{lemma}
\label{lemma-act-0}
For type BI, we have 
\begin{eqnarray}
\label{act-0-0}
&&\hat{T}_{0}\Psi_{0}=\kappa_{0}\widetilde{\Psi}_{1} 
-\kappa_{0}\kappa_{N}q^{-p}\theta(W_{1}^{p};b_0), \\
\label{act-0-1}
&&\hat{T}_{i}\widetilde{\Psi}_{i}
=\widetilde{\Psi}_{i+1}
-\kappa_{N}q^{-1}\theta(W_{i+1}^{1};b_{0}), \quad 1\le i\le N-1,  \\
\label{act-0-3}
&&\hat{T}_{N}(\Psi_{N}-\kappa_{N}q^{-M}\Psi_{0})
=\kappa_{N}\Psi_{0}.
\end{eqnarray}
\end{lemma}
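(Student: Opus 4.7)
The plan is to derive each identity by extracting the coefficient of an appropriate parabolic Kazhdan--Lusztig basis vector $C_{\beta}^{I}$ from both sides of the boundary qKZ equation $e_{i}\Psi=\hat{e}_{i}\Psi$ of (\ref{qKZ-e}), using $\hat{T}_{i}=\hat{e}_{i}+q_{i}^{-1}$ (with $q_{i}=q_{0},q,q_{N}$ for $i=0$, $1\le i\le N-1$, $N$ respectively). The principal technical input I would exploit is the triangularity $C_{\epsilon}^{I}=v_{\epsilon}+\sum_{\epsilon'<\epsilon}c^{\epsilon}_{\epsilon'}v_{\epsilon'}$: since $b_{0}=(-)^{N}$ is the unique lex maximum, $[v:C_{b_{0}}^{I}]=[v:v_{b_{0}}]$ for every $v\in V^{\otimes N}$, and for lower $\beta$ the KL-coefficient is recovered iteratively via $[v:C_{\beta}^{I}]=[v:v_{\beta}]-\sum_{\beta'>\beta}c^{\beta'}_{\beta}[v:C_{\beta'}^{I}]$.

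For (\ref{act-0-0}) I would project the qKZ equation at $i=0$ onto $C_{b_{0}}^{I}$. Since $e_{0}$ acts only on site $1$ with rank-one image, the matrix element $[e_{0}C_{b}^{I}:v_{b_{0}}]$ can be nonzero only when $C_{b}^{I}$'s standard-basis expansion contains $v_{b_{0}}$ or $v_{b_{1}}$. A short case analysis over the block decomposition of type BI diagrams (unpaired up arrow, solid arc, dashed arc, or labeled down arrow at site $1$) restricts the contributing vectors to $C_{b_{0}}^{I}$, $C_{b_{1}}^{I}$, and $C_{b_{2}}^{I}$; the key value is $c^{b_{0}}_{b_{1}}=-\kappa_{N}q^{-p}$, nonzero precisely when site $1$ carries the label $p$. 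Summing the three contributions and adding $q_{0}^{-1}\Psi_{0}$ reproduces the right-hand side.

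For (\ref{act-0-3}) I would run the analogous procedure at $i=N$, the crucial extra input being $c^{b_{0}}_{b_{N}}=-\kappa_{N}q^{-M}$ from site $N$ of $b_{0}$ being labeled $M$. After computing the local standard-basis coefficients $[e_{N}\Psi:v_{b_{0}}]$ and $[e_{N}\Psi:v_{b_{N}}]$ and then applying the triangularity correction for $b_{N}$ (the second-largest string), a cancellation yields $\hat{e}_{N}\Psi_{N}=0$; forming $\hat{T}_{N}\Psi_{N}-\kappa_{N}q^{-M}\hat{T}_{N}\Psi_{0}$ then collapses to $\kappa_{N}\Psi_{0}$ after using $q_{N}=q^{M}$. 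For (\ref{act-0-1}) I would project at $i\in\{1,\ldots,N-1\}$ onto $C_{b_{i}}^{I}$; since $e_{i}$ has rank-one local image on sites $(i,i+1)$, the contributions are restricted to $C_{b_{i}}^{I}$, $C_{b_{i+1}}^{I}$, and $C_{b_{0}}^{I}$ (the last only through its $v_{b_{i}}$ and $v_{b_{i+1}}$ lower terms, which are present only when sites $i$ or $i+1$ are labeled), and assembling them yields the recursion with the $\theta(W_{i+1}^{1};b_{0})$ term nonzero precisely when site $i+1$ carries label $1$.

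The main obstacle will be the combinatorial verification that no other $C_{b}^{I}$ contributes. For any $b$ with two or more $+$ signs, each extra $+$ is either unpaired or absorbed into a solid arc with a neighbouring $-$; since the solid arc factor $v_{-+}-q^{-1}v_{+-}$ has no $v_{-}v_{-}$ component and the dashed arc factor $v_{--}-\kappa_{N}^{2}q^{-1}v_{++}$ has no $v_{+}v_{-}$ or $v_{-}v_{+}$ component, the required all-$-$ pattern at positions outside those on which $e_{i}$ acts cannot be produced, killing the contribution. Ruling out these potential contributions systematically via the block decomposition is the step that makes the sums finite and that precisely matches the lemma's identities.
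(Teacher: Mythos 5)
Your overall strategy -- project the componentwise qKZ equation $e_i\Psi=\hat e_i\Psi$ onto a chosen Kazhdan--Lusztig basis vector, compute the needed structure constants $[e_iC^{I}_{b}:C^{I}_{\beta}]$ from the block decomposition of the type BI diagrams together with the triangularity $[v:C^{I}_{b_0}]=[v:v_{b_0}]$, and then add back $q_i^{-1}$ to pass from $\hat e_i$ to $\hat T_i$ -- is exactly the paper's proof, which records precisely the coefficients $e_0(C_{b_1})=\kappa_0C_{b_0}+\ldots$, $e_0(C_{b_2})=-\kappa_0q^{-1}C_{b_0}+\ldots$ and the diagonal term on $C_{b_0}$ and leaves the rest to ``a similar argument.'' Your treatment of (\ref{act-0-0}) and (\ref{act-0-3}) is correct and in fact more explicit than the paper's about why only $C_{b_0},C_{b_1},C_{b_2}$ (resp.\ $C_{b_0},C_{b_N}$) can contribute, and about the role of the identity $\hat e_N\Psi_N=0$ (which the paper establishes separately from $(b_N)_N=+$).

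There is, however, a concrete gap in your sketch for (\ref{act-0-1}). Projecting at index $i$ onto $C^{I}_{b_i}$ yields only the trivial relation $\hat e_i\Psi_i=0$: the image of $e_i$ lies in the span of Kazhdan--Lusztig vectors whose diagram has an arc at $(i,i+1)$, i.e.\ those $C_\alpha$ with $(\alpha_i,\alpha_{i+1})=(-,+)$, and $b_i$ has the opposite local pattern. The nontrivial input must come from projecting onto $C^{I}_{b_{i+1}}$, and there the contributing vectors are \emph{four}, not three: $C_{b_i}$, $C_{b_{i+1}}$, $C_{b_{i+2}}$ and $C_{b_0}$. You omit $C_{b_{i+2}}$ (its arc at $(i+1,i+2)$ is carried by $e_i$ to the arc at $(i,i+1)$ with coefficient $1$), and this term is indispensable: it is precisely what produces the $-q^{-1}\Psi_{i+2}$ summand of $\widetilde\Psi_{i+1}$ on the right-hand side of (\ref{act-0-1}). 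With the projection target corrected to $C_{b_{i+1}}$, the list of contributors completed, and the identity $\hat e_i\Psi_i=0$ used to simplify $\hat T_i\widetilde\Psi_i=-q^{-1}\hat e_i\Psi_{i+1}+q^{-1}\widetilde\Psi_i$, your argument closes and agrees with the paper's.
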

\begin{proof}
We show Eqn.(\ref{act-0-0}) since one can apply a similar 
argument to other cases.
We have $e_0(C_{b_1})=\kappa_{0}C_{b_0}+\ldots$ and 
$e_0(C_{b_{2}}	)=-\kappa_{0}q^{-1}C_{b_{0}}+\ldots$. 
When the diagram for $b_0$ satisfies the statement $W_{1}^{p}$, 
we have $e_0C_{b_0}=-\kappa_0\kappa_Nq^{-p}C_{b_0}+\ldots$.
This implies Eqn.(\ref{act-0-0}).
\end{proof}

\begin{lemma}
\label{lemma-act-1}
For type BII, we have 
\begin{eqnarray*}
&&\hat{T}_0\Psi_{0}
=\kappa_0\widetilde{\Psi}_{1}
+
\begin{cases}
\kappa_{N}\kappa_{0}q^{-1}q_{N}\Psi_{0}, & \text{for $N$ even}, \\
-\kappa_{N}\kappa_{0}q_{N}^{-1}\Psi_{0}, & \text{for $N$ odd},
\end{cases}\\
&&\hat{T}_{i}\widetilde{\Psi}_{i}
=\widetilde{\Psi}_{i+1}+
\begin{cases}
-\kappa_{N}q^{-1}(q^{-1}q_{N}+qq_{N}^{-1})\Psi_{0}, 
& \text{for } i\equiv N+1 (\mathrm{mod}\ 2), \\
\kappa_{N}q^{-1}(q_{N}+q_{N}^{-1})\Psi_{0}, 
& \text{for } i\equiv N (\mathrm{mod}\ 2), 
\end{cases} \\ 
&&\hat{T}_{N}(\widetilde{\Psi}_{N}-\kappa_{N}q_{N}^{-1}\Psi_{0})
=\kappa_{N}\Psi_{0}.
\end{eqnarray*}
\end{lemma}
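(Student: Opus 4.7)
The plan is to reproduce the strategy used in Lemma~\ref{lemma-act-0}, but with the actions of the generators $e_i$ replaced by their type BII counterparts on the Kazhdan--Lusztig bases. The main input is the identity $\hat{T}_i = \hat{e}_i + q_i^{-1}$ together with the non-affine boundary qKZ equation (\ref{qKZ-e}), which allows me to translate $\hat{e}_i\Psi_\alpha$ into the coefficient of $C_\alpha$ in $e_i\Psi = \sum_{b'}\Psi_{b'}\,e_iC_{b'}$.

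First, for each basis vector $b$ appearing on the left-hand side of the lemma, I identify which $C_{b'}$ produce $C_b$ with a nonzero coefficient under $e_i$. For $\hat{T}_0\Psi_0$ I must track $e_0 C_{b_1}$ (a jump at the left boundary) and $e_0 C_{b_2}$ (the adjacent pair), along with the diagonal contribution $e_0 C_{b_0}$ itself, because in the type BII picture the leftmost down arrow of $C_{b_0}$ carries an $e$-label when $N$ is even and an $o$-label when $N$ is odd. Reading off the scalar from the graphical rules for the vertical line with label $e$ or $o$ yields the two diagonal coefficients $\kappa_N\kappa_0 q^{-1}q_N$ and $-\kappa_N\kappa_0 q_N^{-1}$ in the two parity cases.

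Next, I compute the matrix elements for the interior generators using the graphical rules for $e_i$ recalled at the end of Section~\ref{sec-Reps}. Counting from the right, odd-indexed down arrows in $b_0$ carry label $o$ and even-indexed ones carry label $e$; consequently the $i$-th and $(i+1)$-th sites in $b_0$ carry the pair $(o,e)$ or $(e,o)$ depending on the parity of $N-i$. The two corresponding evaluations of $e_i$ on such a pair yield the two cases $\kappa_N q^{-1}(q_N+q_N^{-1})\Psi_0$ and $-\kappa_N q^{-1}(qq_N^{-1}+q^{-1}q_N)\Psi_0$ stated in the lemma. The combinations $\widetilde{\Psi}_i := \Psi_i - q^{-1}\Psi_{i+1}$ are chosen so that the purely interior contributions from $e_i$ (which do not see the boundary label) cancel, exactly as in the BI case. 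For $\hat{T}_N$, I apply the given action $e_N C_{b_0} = \kappa_N^{-1}C_{\ldots -eo}$ on the rightmost two-site block together with $e_N C_{b_N}$, and observe that the shift by $\kappa_N q_N^{-1}\Psi_0$ on the left-hand side is precisely what cancels the diagonal contribution coming from the rightmost $o$-labelled vertical line in $C_{b_0}$.

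The main obstacle is careful bookkeeping: one must keep track of the parity class of each site, the sign conventions in the $e$ and $o$ evaluation rules, and check that no subleading $C_{b'}$ with $b'>b_0$ contributes unexpectedly to the coefficient of $C_{b_0}$ (which follows because the change-of-basis from $v_\epsilon$ to $C^{II}_\epsilon$ is lower-triangular in the lexicographic order). Once the matrix elements $[e_i]_{b',b_j}$ are tabulated in each parity case, the three identities follow by direct substitution, in complete parallel with the proof of Lemma~\ref{lemma-act-0}.
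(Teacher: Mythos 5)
Your proposal is correct and takes essentially the same route as the paper: the paper states Lemma~\ref{lemma-act-1} without a separate proof, relying on the matrix-element computation spelled out for type BI in Lemma~\ref{lemma-act-0}, and your argument reproduces exactly that computation with the required $e$/$o$ parity bookkeeping, the lower-triangularity of the change of basis, and the cancellation built into $\widetilde{\Psi}_i=\Psi_i-q^{-1}\Psi_{i+1}$. The only blemish is the displayed identity for the last case, where the roles of $b_0$ and $b_N$ are swapped: the graphical rule reads $e_NC_{b_N}=\kappa_N^{-1}C_{b_0}$, while $e_NC_{b_0}=-(q_N+q_N^{-1})C_{b_0}$ supplies the diagonal term that the shift $-\kappa_Nq_N^{-1}\Psi_0$ cancels, which is what your subsequent sentence in fact uses.
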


\begin{lemma}
\label{lemma-act-2}
For type BIII, we have
\begin{eqnarray*}
&&\hat{T}_{0}\Psi_{0}
=\kappa_0(\widetilde{\Psi}_{1}-\kappa_{N}q^{N-1}q_{N}^{-1}\Psi_{0}), \\
&&\hat{T}_{i}\widetilde{\Psi}_{i}=\widetilde{\Psi}_{i+1}, \qquad 1\le i\le N-1,\\
&&\hat{T}_{N}(\widetilde{\Psi}_{N}-\kappa_{N}q_{N}^{-1}\Psi_{0})
=\kappa_{N}\Psi_{0}.
\end{eqnarray*}
\end{lemma}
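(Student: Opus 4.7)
The plan is to mirror the proof of Lemma~\ref{lemma-act-0} (type BI), with the type BIII building blocks replacing the type BI ones: an unpaired down arrow with circled label $p$ contributes $A_p := v_{-} - \kappa_N q^{p-1} q_N^{-1} v_{+}$, an arc contributes $B := v_{-+} - q^{-1} v_{+-}$, and an unpaired up contributes $v_{+}$. Throughout I exploit the coefficient form of the boundary qKZ identity $e_i \Psi = \hat{e}_i \Psi$: writing $\Psi = \sum_b \Psi_b C_b^{III}$ and $e_i C_b^{III} = \sum_c M^i_{c,b} C_c^{III}$, one has $\hat{e}_i \Psi_c = \sum_b M^i_{c,b} \Psi_b$. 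Since $b_0 = (-\ldots -)$ is the maximum of $\{\pm\}^N$ in the lexicographic order with $+<-$, triangularity of $C_b^{III}$ gives $[v_{b_0}] C_b^{III} = \delta_{b, b_0}$, so the coefficient of $C_{b_0}^{III}$ in any vector equals its $v_{b_0}$-coefficient, and similarly for $C_{b_N}^{III}$ after subtracting the $C_{b_0}^{III}$-piece.

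For the first identity, $e_0$ acts only on site $1$ via the $2\times 2$ boundary matrix, so $M^0_{b_0, b} = -q_0^{-1} [v_{-\ldots-}] C_b^{III} + \kappa_0 [v_{+-\ldots-}] C_b^{III}$. The first bracket contributes only at $b = b_0$. For the second, the arc-blocking principle (namely $[v_{--}] B = 0$) together with the fact that an unpaired up at a site $\geq 2$ forces $v_{+}$ there, shows that $[v_{+-\ldots-}] C_b^{III}$ vanishes outside $\{b_0, b_1, b_2\}$; inspecting the site-$1$ factors ($A_N$ in $C_{b_0}^{III}$, $v_{+}$ in $C_{b_1}^{III}$, and the $v_{+-}$-part of the arc $B_{12}$ in $C_{b_2}^{III}$) gives the values $-\kappa_N q^{N-1} q_N^{-1}$, $1$, and $-q^{-1}$. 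Substituting into $\hat{e}_0 \Psi_0 = \sum_b M^0_{b_0, b} \Psi_b$ and using $\hat{T}_0 = \hat{e}_0 + q_0^{-1}$ together with $\widetilde{\Psi}_1 = \Psi_1 - q^{-1} \Psi_2$ yields the claimed formula.

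The clean form of $\hat{T}_i \widetilde{\Psi}_i = \widetilde{\Psi}_{i+1}$ (no $\kappa_N$-correction) is a consequence of the type BIII kernel identity $e(A_{p+1} \otimes A_p) = 0$, valid because the ratio of the $v_{+}$-coefficients of $A_{p+1}$ and $A_p$ is exactly $q$. Since two adjacent unpaired downs in any $C_{b_j}^{III}$ always carry consecutive circled labels, this kernel identity forces $e_i C_{b_j}^{III} = 0$ whenever none of sites $i$, $i+1$, $i+2$ belongs to an arc, which covers $j \notin \{i, i+1, i+2\}$ and also $j = 0, 1$. The three remaining cases follow from two local identities verified by direct expansion at the relevant sites, namely $e_i(A_p \otimes B) = B \otimes A_p$ (an arc hops past a circled down with the same label) and $e_i B = -[2] B$: these give $e_i C_{b_i}^{III} = C_{b_{i+1}}^{III} = e_i C_{b_{i+2}}^{III}$ and $e_i C_{b_{i+1}}^{III} = -[2] C_{b_{i+1}}^{III}$. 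Translating back, one obtains $\hat{e}_i \Psi_i = 0$ and $\hat{e}_i \Psi_{i+1} = \Psi_i - [2] \Psi_{i+1} + \Psi_{i+2}$, and the identity follows by expanding $(\hat{e}_i + q^{-1})(\Psi_i - q^{-1}\Psi_{i+1})$. The $\hat{T}_N$ case is analogous: a direct calculation gives $e_N A_1 = -(q_N + q_N^{-1}) A_1$, so $e_N C_{b_j}^{III} = -[2_N] C_{b_j}^{III}$ for $j < N$, while expanding $e_N B_{N-1, N}$ produces a $v_{--}$-contribution with coefficient $\kappa_N^{-1}$ at sites $N-1, N$; this gives $\hat{e}_N \Psi_0 = -[2_N] \Psi_0 + \kappa_N^{-1} \Psi_N$ and $\hat{e}_N \Psi_N = 0$, from which the displayed identity follows algebraically.

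The main obstacle is the combinatorial verification that no $b \notin \{b_j\}_{j=0}^N$ contributes to the relevant structure constants $M^i_{c,b}$. This is handled uniformly by the arc-blocking principle: any arc $B$ at positions $(k, k+1)$ has zero coefficient on $v_{++}$ and on $v_{--}$, which kills any $b$ having an arc at a position forced to produce $v_{--}$ by the target monomial; any unpaired up at a site forced to be $v_{-}$ also contributes zero. These two constraints confine the contributing $b$'s to the short list $\{b_0, b_1, b_2\}$ (for the $\hat{T}_0$ identity), $\{b_i, b_{i+1}, b_{i+2}\}$ (for the $\hat{T}_i$ identities), or $\{b_0, b_N\}$ (for the $\hat{T}_N$ identity), after which the calculations are routine type BIII adaptations of the proof of Lemma~\ref{lemma-act-0}.
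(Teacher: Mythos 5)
Your proposal is correct and follows essentially the same route the paper takes for the type BI analogue (Lemma~\ref{lemma-act-0}), namely computing the local action of $e_0$, $e_i$, $e_N$ on the building blocks of the type BIII Kazhdan--Lusztig diagrams and reading off the relevant structure constants; the paper states Lemma~\ref{lemma-act-2} without proof, deferring implicitly to that pattern. Your local identities all check out --- $e(A_{p+1}\otimes A_p)=0$ because the $v_+$-coefficients of consecutive circled labels differ by a factor of $q$, the arc-hopping rule, $e_N A_1=-(q_N+q_N^{-1})A_1$, and the $\kappa_N^{-1}$ coefficient of $v_{--}$ in $e_N$ applied to an arc --- and they yield exactly the three displayed formulas, with the contributing strings confined to the lists you give.
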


\begin{prop}
\label{prop-Y-0}
We have 
\begin{eqnarray*}
\hat{Y}_{i}\Psi_{0}=\kappa_{0}\kappa_{N}q^{-N+2i-1}\Psi_{0}.
\end{eqnarray*}
\end{prop}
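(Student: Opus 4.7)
The plan is to compute $\hat{Y}_i\Psi_0$ directly using the Bernstein--Zelevinsky presentation (\ref{BZ-TY}) of $Y_i$, i.e.\ by unwinding
\begin{eqnarray*}
\hat{Y}_i = \hat{T}_i\hat{T}_{i+1}\cdots\hat{T}_{N-1}\hat{T}_N\hat{T}_{N-1}\cdots\hat{T}_1\hat{T}_0\hat{T}_1^{-1}\cdots\hat{T}_{i-1}^{-1}
\end{eqnarray*}
and applying the factors from right to left to $\Psi_0$, using the action formulae collected in Lemmas~\ref{lemma-act-0}, \ref{lemma-act-1} and \ref{lemma-act-2} together with the vanishing $\hat{e}_j\Psi_0=0$ from Proposition~\ref{prop-redqKZ}(1).

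First, I would handle the innermost inverses. Since $\hat{e}_j\Psi_0=0$ for $1\le j\le N-1$, we have $\hat{T}_j\Psi_0=q^{-1}\Psi_0$ and hence $\hat{T}_j^{-1}\Psi_0=q\Psi_0$, so
\begin{eqnarray*}
\hat{T}_1^{-1}\cdots\hat{T}_{i-1}^{-1}\Psi_0 = q^{i-1}\Psi_0.
\end{eqnarray*}
Next, I would climb through $\hat{T}_0,\hat{T}_1,\ldots,\hat{T}_{N-1}$. The first factor $\hat{T}_0$ produces $\kappa_0\widetilde{\Psi}_1$ plus correction terms proportional to $\Psi_0$, and the recursion $\hat{T}_j\widetilde{\Psi}_j=\widetilde{\Psi}_{j+1}+\text{(correction in }\Psi_0\text{)}$ telescopes into $\widetilde{\Psi}_N$ plus accumulated $\Psi_0$ pieces. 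Applying $\hat{T}_N$ via the identity $\hat{T}_N(\widetilde{\Psi}_N-\kappa_N q^{-?}\Psi_0)=\kappa_N\Psi_0$ then brings us back to $\Psi_0$ (plus a term proportional to $\hat{T}_N\Psi_0$, which can be evaluated from (\ref{qKZ-non-2})). Finally, descending through $\hat{T}_{N-1},\ldots,\hat{T}_i$ acts by $q^{-1}$ on $\Psi_0$ at each step, yielding a combined factor of $q^{-(N-i)}$.

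Multiplying the contributions gives $\kappa_0\cdot\kappa_N\cdot q^{i-1}\cdot q^{-(N-i)}=\kappa_0\kappa_N q^{2i-N-1}$, which is exactly the stated eigenvalue $\kappa_0\kappa_N q^{-N+2i-1}$. Thus the ``main'' contribution already has the right form; the content of the proof is to verify that all accumulated correction terms cancel or else reassemble into the same eigenvalue.

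The main obstacle is precisely this bookkeeping. The three Kazhdan--Lusztig types (BI, BII, BIII) have different corrections: in Lemma~\ref{lemma-act-0} the corrections are $\theta$-terms triggered by the markings $W_i^p$ on the all-down diagram $b_0$; in Lemmas~\ref{lemma-act-1} and \ref{lemma-act-2} the corrections are $q_N$- and $q^{N-1}q_N^{-1}$-dependent multiples of $\Psi_0$ whose parities depend on $N$. I would handle each type separately, tracking how the $\Psi_0$ corrections produced during the climb interact with $\hat{T}_N$ and then with the descending $\hat{T}_{N-1},\ldots,\hat{T}_i$ (each of which acts by $q^{-1}$ on $\Psi_0$). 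Verifying that these type-dependent pieces cancel uniformly to yield the clean answer is the delicate computational step; once that is in hand, the proposition follows.
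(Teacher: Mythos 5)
Your proposal follows exactly the paper's own argument: use $\hat{e}_j\Psi_0=0$ to get $\hat{T}_j^{\pm1}\Psi_0=q^{\mp1}\Psi_0$, unwind the Bernstein--Zelevinsky presentation (\ref{BZ-TY}) from right to left, and track the climb $\hat{T}_0\to\hat{T}_1\to\cdots\to\hat{T}_N$ via Lemmas~\ref{lemma-act-0}--\ref{lemma-act-2} to pick up $\kappa_0\kappa_N$, with the remaining factors contributing $q^{i-1}\cdot q^{-(N-i)}$. The bookkeeping of the type-dependent correction terms that you correctly identify as the delicate step is exactly what the paper also compresses into ``successively used Lemma~\ref{lemma-act-0}, \ref{lemma-act-1} or \ref{lemma-act-2}.''
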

\begin{proof}
Since there is no basis $C_b$ such that the expansion of $e_i(C_b)$ contains
the basis $C_{b_0}$, we have $\hat{e}_{i}\Psi_{0}=0$ for $1\le i\le N-1$. 
Thus we have $\hat{T}_{i}^{\pm1}\Psi_{0}=q^{\mp1}\Psi_{0}$.
From the correspondence (\ref{BZ-TY}), we have 
\begin{eqnarray*}
\hat{Y}_{i}\Psi_{0}
&=&q^{i-1}\hat{T}_i\ldots\hat{T}_{N-1}\hat{T}_{N}\ldots\hat{T}_{0}\Psi_{0} \\
&=&\kappa_{0}\kappa_{N}q^{-N+2i-1}\Psi_{0}
\end{eqnarray*}
where we have successively used Lemma~\ref{lemma-act-0}, \ref{lemma-act-1} or 
\ref{lemma-act-2}.
\end{proof}

\subsection{One-boundary case}
Let $b_0, b_r\in\{\pm\}^{N}$ be binary strings as
\begin{eqnarray*}
&&b_0:=(\underbrace{-\ldots-}_{N/2}\underbrace{+\ldots+}_{N/2}), 
\text{ for $N$ even}, \\
&&b_{0}:=(\underbrace{-\ldots-}_{(N+1)/2}\underbrace{+\ldots+}_{(N-1)/2}), \quad
b_{r}:=(\underbrace{-\ldots-}_{(N-1)/2}\underbrace{+\ldots+}_{(N-1)/2}-),
\text{ for $N$ odd}.
\end{eqnarray*}
We abbreviate $\Psi_{b_0}$ and $\Psi_{b_r}$ as $\Psi_0$ and $\Psi_{r}$.

We show that one can obtain a component of $\Psi(\mathbf{z})$ from 
the generating vector $\Psi_{0}$ through the boundary qKZ equation.

Given two binary strings $b_1:=(b_{1,i}\ldots b_{1,N})\in\{\pm\}^{N}$ 
and $b_2=(b_{2,1}\ldots b_{2,N})\in\{\pm\}^{N}$, we denote 
by $b_1\preceq b_2$ if and only if $\sum_{i=1}^{j}b_{1,i}\le\sum_{i=1}^{j}b_{1,i}$
for $1\le j\le N$. 
Set $b_{-}:=\{-\}^{N}$ and abbreviate $\Psi_{b_{-}}$ as $\Psi_{-}$. 
The binary string $b_-$ satisfies $b_{-}\prec b$ for $b\in\mathcal{B}_{N}$.

We define $\mathcal{B}_{N}^{(i)}:=\{b\in\mathcal{B}_{N}|\sum_{j=1}^{N}b_{j}=-i\}$ 
for $0\le i\le N$.  
Let $\mathcal{L}^{(i)}_{N}$, $0\le i\le N$, be a subspace in $\mathcal{L}_{N}$ 
spanned by $\{C_{b}| b\in\mathcal{B}^{(i)}_{N}\}$. 
The dimension of the space $\mathcal{L}^{(i)}_{N}$ is zero if $N-i\equiv1\pmod{2}$
and one if $i=N$.

\begin{lemma}
\label{lemma-red-qKZ-1b}
Given the component $\Psi_{-}$, a component $\Psi_{b}$ with 
$b\in\mathcal{B}_{N}$ is written in terms of $\Psi_{-}$ 
through the boundary quantum Knizhnik--Zamolodchikov equation.
\end{lemma}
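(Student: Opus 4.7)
The strategy is to adapt the reduction of Lemma~\ref{lemma-red} to the one-boundary setting and to propagate the seed $\Psi_{-}=\Psi_{b_{-}}$ through the non-trivial non-affine components of the boundary qKZ equation, by induction on the lexicographic order on $\mathcal{B}_{N}$. Because the lexicographic convention on $\{\pm\}^{N}$ used in Section~\ref{sec-redqKZ} has $+<-$, the all-minus string $b_{-}$ is the \emph{maximum} of $\mathcal{B}_{N}$, and the induction proceeds downward from $b_{-}$.

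In $\mathcal{L}_{N}$, the non-trivial components of $e_{i}\Psi=\hat{e}_{i}\Psi$ have the same shape as (\ref{qKZ-non-1}) and (\ref{qKZ-non-2}):
\begin{eqnarray*}
(\hat{e}_{i}+q_{i}+q_{i}^{-1})\Psi_{\alpha}
=\Psi_{s_{i}\cdot\alpha}+\sum_{\beta>\alpha}c^{i}_{\alpha\beta}\Psi_{\beta},
\end{eqnarray*}
valid for $1\le i\le N-1$ when $(\alpha_{i},\alpha_{i+1})=(-,+)$, and for $i=N$ when $\alpha_{N}=-$. Stability of $\mathcal{L}_{N}$ under the action of $e_{1},\ldots,e_{N}$ guarantees that every index appearing on either side lies in $\mathcal{B}_{N}$, so the induction stays entirely inside $\mathcal{B}_{N}$.

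For the inductive step, fix $b\in\mathcal{B}_{N}$ with $b\ne b_{-}$; I will exhibit $\alpha\in\mathcal{B}_{N}$ and $j\in\{1,\ldots,N\}$ with $\alpha>b$ and $s_{j}\alpha=b$. If some $(b_{i},b_{i+1})=(+,-)$ exists, set $\alpha:=s_{i}b$ and $j:=i$; the two strings first disagree at position $i$, where $b_{i}=+$ and $\alpha_{i}=-$, so $\alpha>b$, and the only partial sum that changes in passing from $b$ to $\alpha$ is the one at position $i$, which drops by $2$, hence $\alpha\in\mathcal{B}_{N}$. Otherwise $b$ has no $(+,-)$ descent and is therefore of the form $(-^{a},+^{N-a})$ with $a\ge\lceil N/2\rceil$; the hypothesis $b\ne b_{-}$ forces $N-a\ge1$, so $b_{N}=+$, and setting $\alpha:=s_{N}b$, $j:=N$, again keeps $\alpha\in\mathcal{B}_{N}$ by the same partial-sum check. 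The relevant equation above then yields
\begin{eqnarray*}
\Psi_{b}=(\hat{e}_{j}+q_{j}+q_{j}^{-1})\Psi_{\alpha}
-\sum_{\beta>\alpha}c^{j}_{\alpha\beta}\Psi_{\beta},
\end{eqnarray*}
expressing $\Psi_{b}$ as a combination of $\Psi_{\alpha}$ and of $\Psi_{\beta}$'s with $\beta>\alpha>b$, each of which has already been written in terms of $\Psi_{-}$ by the induction hypothesis.

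The main obstacle is the ``no-descent'' case, which requires both the combinatorial observation that the forbidden pattern forces $b=(-^{a},+^{N-a})$ (and $b\ne b_{-}$ then forces $b_{N}=+$, making the $i=N$ equation applicable), and the verification that $s_{N}b$ remains in $\mathcal{B}_{N}$. The latter hinges on the slack in the defining inequality $\sum_{i=1}^{j}b_{i}\le 0$: lowering a single partial sum by $2$ preserves non-positivity. The only other potential worry, that some $\Psi_{\beta}$ in the sum might leave $\mathcal{B}_{N}$, is ruled out automatically by the sub-module structure of $\mathcal{L}_{N}$ for the one-boundary Temperley--Lieb algebra.
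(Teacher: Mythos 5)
Your argument is correct, and the engine driving it is the same as in the paper: the non-affine components of the qKZ equation are triangular with respect to an order on $\mathcal{B}_{N}$, so every $\Psi_{b}$ can be propagated from the seed $\Psi_{-}$. The organization, however, is genuinely different. The paper inducts \emph{upward} on the partial order $\preceq$ defined by partial sums (under which $b_{-}$ is minimal), treats the single-plus strings $\mathcal{B}_{N}^{(N-2)}$ as an explicit base case via Eqns.~(\ref{eq-Psim1})--(\ref{eq-Psim2}), and then, for a general $b$ with a little arc at $(i,i+1)$, performs a case analysis on the Kazhdan--Lusztig diagram (whether that arc sits inside a larger arc or not) to identify the unique $\succ$-larger string appearing in the $b$-component equation. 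You instead induct \emph{downward} on the total lexicographic order (under which $b_{-}$ is maximal), and for each new $b$ you locate a $(+,-)$ descent (or, failing that, observe $b=(-^{a},+^{N-a})$ and use $s_{N}$) to produce $\alpha=s_{j}b>b$ whose component equation solves for $\Psi_{b}$. Since $\preceq$ is refined by the reversed lexicographic order, the two inductions sweep through $\mathcal{B}_{N}$ in the same direction and ultimately invoke the same equations; what your version buys is uniformity -- no separate base case, no diagrammatic case split, and a totally ordered induction whose well-foundedness is immediate, with the partial-sum slack argument cleanly isolating why $s_{j}b$ stays in $\mathcal{B}_{N}$. What it costs is that you import the shape of Eqns.~(\ref{qKZ-non-1})--(\ref{qKZ-non-2}) (nonvanishing coefficient on $\Psi_{s_{j}\alpha}$, all remaining $\beta>\alpha$) into the one-boundary, type BII setting by analogy rather than by the explicit coefficient computations ($e_{i}(C_{b_{1}})=C_{b}$, $e_{N}(C_{b_{N-1}})=\kappa_{N}^{-1}C_{b_{-}}$, etc.) that the paper carries out; this is justified because $\mathcal{L}_{N}$ is a type BII submodule, but it is worth a sentence making that restriction explicit, and for $j=N$ the leading coefficient is $\kappa_{N}^{-1}$ rather than $1$, which is harmless but should be recorded.
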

\begin{proof}
We prove Lemma by induction.
Let $b_{i}=(b_{i,1}\ldots b_{i,N})\in\mathcal{B}^{(N-2)}_{N}$, $1\le i\le N-1$ 
be a binary string such that $(b_{i,i},b_{i,i+1})=(-,+)$ and $b_{i,j}=-$ 
for $j\neq i,i+1$.
Then, we have $b_{-}\prec b_{N-1}\prec\cdots\prec b_{1}$. 
We have $e_{N}(C_{b_{N-1}})=\kappa_{N}^{-1}C_{b_{-}}$ and there exists no 
$b\neq b_{N-1}$ such that $C_{b_{-}}$ appears in the expansion of $e_{N}(C_{b})$. 
The boundary qKZ equation for $b_{-}$-component is written as 
\begin{eqnarray}
\label{eq-Psim1}
\Psi_{b_{N-1}}=\kappa_{N}(\hat{T}_{N}+q_{N})\Psi_{b_{-}}.
\end{eqnarray}
Similarly, we have $e_{i}(C_{b_{i-1}})=e_{i}(C_{b_{i+1}})=C_{b_{i}}$ and 
$e_{i}(C_{b_{-}})=\alpha C_{b_{i}}$ where $\alpha$ is 
$\kappa_{N}(qq^{-1}_N+q^{-1}q_{N})$ (resp. $-\kappa_{N}(q_{N}+q_{N}^{-1})$) 
for $N-i$ odd (resp. even). 
The boundary qKZ equation is written as 
\begin{eqnarray}
\label{eq-Psim2}
\Psi_{b_{i-1}}=(\hat{T}_{i}+q)\Psi_{b_{i}}-\Psi_{b_{i+1}}-\alpha\Psi_{-},
\end{eqnarray}
for $1\le i\le N-1$ and $\Psi_{b_{N}}=0$. 
From Eqns.(\ref{eq-Psim1}) and (\ref{eq-Psim2}), the component 
$\Psi_{b_{i}}$, $1\le i\le N-1$ is written in terms of $\Psi_{b_{-}}$.  

Fix a binary string $b\in\mathcal{B}^{(p)}_{N}$ with $p\neq N$. 
We assume that Lemma holds true for all $b'\preceq b$. 
Since $b\neq b_{-}$, there exists $i$ such that $(b_{i},b_{i+1})=(-,+)$.
Recall a diagram for the Kazhdan--Lusztig basis indexed by 
the binary string $b$. 
The diagram has a little arc $a$ connecting the $i$-th and the $(i+1)$-th 
sites. 
We have two cases for $a$: 1) there exists a larger arc outside of $a$, 
and 2) there exists no larger arc outside of $a$.

\paragraph{Case 1}
Let $b'$ be a partial string $b':=(b_{i-1}b_{i}b_{i+1}b_{i+2})$. 
We have four cases for $b'$. 
When $b'=(+-++)$, set a partial string $b_{1}:=(-+++)$ or $(++-+)$. 
When $b'=(+-+-)$, set a partial string $b_{1}:=(-++-)$, $(+--+)$ or $(++--)$.
When $b'=(--++)$, set a partial string $b_{1}:=(-+-+)$.
When $b'=(--+-)$, set a partial string $b_{1}:=(-+--)$ or $(---+)$.
Then, we have $e_{i}(C_{b_{1}})=C_{b'}$.
Note that there exists at most one partial string $b_1$ such that 
$b'\prec b_{1}$. 
In the first case, the boundary qKZ equation for $b$-component is written as 
\begin{eqnarray*}
\Psi_{++-+}=(\hat{T}_{i}+q)\Psi_{b}-\Psi_{-+++}.
\end{eqnarray*}
Since the component $\Psi_{-+++}$ is written in terms of 
$\Psi_{-}$ from the induction assumption, $\Psi_{++-+}$ is also 
written in terms of $\Psi_{-}$.
One can prove Lemma for other cases by a similar argument.

\paragraph{Case 2}
Let $b'$ be a partial binary string $b'=(b_{i-1}b_{i}b_{i+1}b_{i+2})$. 
We have two cases: i) $b'=(+-+-)$ and ii) $b'=(--+-)$.

In the case of i), we assume that there exists $o$- or $e$-unpaired 
down arrow at the $j$-th ($j\le i-1$) site. 
Let $b_{1}=(++--), (+--+)$ or $(-++-)$ and 
$b_{2}=(+---)\in\mathcal{B}^{(p+2)}_{N}$. 
Then, we have $e_{i}(C_{b_{1}})=C_{b}$, $e_{i}(C_{b})=-(q+q^{-1})C_{b}$ 
and $e_{i}(C_{b_{2}})=\kappa_{N}^{-1}C_{b}$.  
Note that $(++--)$ satisfies $b\prec(++--)$ and other binary strings are 
smaller than $b$ with respect to $\prec$. 
We have 
\begin{eqnarray*}
\Psi_{++--}=(\hat{T}_{i}+q)\Psi_{b}-\Psi_{+--+}-\Psi_{-++-}
-\kappa_{N}^{-1}\Psi_{+---}.
\end{eqnarray*}
Since the right hand side of the above equation is written in terms of $\Psi_{-}$ 
from the induction assumption, $\Psi_{++--}$ is written in terms of $\Psi_{-}$. 
We can prove Lemma  for Case ii) by a similar argument to Case i). 
This completes the proof.
\end{proof}

\begin{lemma}
\label{lemma-T-1b}
We have 
\begin{eqnarray}
\label{cond-1b-1}
(\hat{T}_{i}-q^{-1})\Psi_{0}&=&0, \quad 1\le i\le N-1, i\neq \lfloor(N+1)/2\rfloor, \\
\label{cond-1b-2}
(\hat{T}_{N}-q_{N}^{-1})\Psi_{0}&=&0.
\end{eqnarray}
\end{lemma}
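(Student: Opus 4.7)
The plan is to read off Lemma~\ref{lemma-T-1b} directly from the boundary qKZ equation $e_{i}\Psi(\mathbf{z})=\hat{e}_{i}\Psi(\mathbf{z})$, $1\le i\le N$, exactly as was done for the two-boundary case in Proposition~\ref{prop-Y-0}. Writing $\Psi=\sum_{b\in\mathcal{B}_{N}}\Psi_{b}\,C_{b}$, the $C_{b_{0}}$-component of the right-hand side is $\hat{e}_{i}\Psi_{0}$, while that of the left-hand side is $\sum_{b\in\mathcal{B}_{N}}[e_{i}C_{b}]_{b_{0}}\,\Psi_{b}$, where $[e_{i}C_{b}]_{b_{0}}$ denotes the coefficient of $C_{b_{0}}$ in the Kazhdan--Lusztig expansion of $e_{i}C_{b}$. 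Thus \eqref{cond-1b-1} and \eqref{cond-1b-2} follow provided $[e_{i}C_{b}]_{b_{0}}=0$ for every $b\in\mathcal{B}_{N}$ and every index $i$ listed in the statement.

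To verify this vanishing I would use the link-pattern picture of the type BII Kazhdan--Lusztig basis together with the diagrammatic action of the one-boundary Temperley--Lieb generators recorded at the end of Section~\ref{sec-Reps}. Reading off $b_{0}$: its link pattern consists entirely of the nested arcs with endpoints $(\lfloor(N+1)/2\rfloor-j,\,\lfloor(N+1)/2\rfloor+1+j)$, supplemented for $N$ odd by a single \emph{o}-marked boundary strand at position $1$. In particular, the only arc with adjacent endpoints is the innermost one $(\lfloor(N+1)/2\rfloor,\lfloor(N+1)/2\rfloor+1)$, and position $N$ is never boundary-connected. On the other hand, for $1\le i\le N-1$ the action of $e_{i}$ on any diagram $C_{b}$ either rescales it by $-(q+q^{-1})$ (when positions $i,i+1$ are already arc-connected in $b$) or produces a new diagram in which positions $i,i+1$ are joined by a fresh short arc; analogously, $e_{N}$ either rescales $C_{b}$ by $-(q_{N}+q_{N}^{-1})$ (when position $N$ already carries a boundary strand) or produces a new diagram in which position $N$ acquires one.

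Matching the two pictures, for $i\in\{1,\dots,N-1\}\setminus\{\lfloor(N+1)/2\rfloor\}$ the "short arc at $(i,i+1)$" feature produced by the non-scalar $e_{i}$-action is absent from $b_{0}$, while the scalar case would force $b=b_{0}$ together with a short arc at $(i,i+1)$ in $b_{0}$, which does not occur; so $[e_{i}C_{b}]_{b_{0}}=0$. For $i=N$ the argument is parallel: the fresh "boundary strand at position $N$" feature produced by $e_{N}$ is absent from $b_{0}$, and the scalar case would require $b=b_{0}$ to already carry a boundary strand at $N$, which it does not. In either situation $[e_{N}C_{b}]_{b_{0}}=0$, and the two halves of the lemma follow.

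The main obstacle I anticipate is the full case analysis of the $e_{i}$-action: positions $i$ and $i+1$ in $C_{b}$ may each be the endpoint of an arc of arbitrary size or carry one of several types of boundary mark, and the excerpt explicitly displays only a handful of representative actions. To make the "short arc" assertion unambiguous one must check all configurations (both arc-endpoints, both boundary-marked, one of each), verifying in each that the output diagram carries a little arc between $i$ and $i+1$, possibly with a coefficient in $\mathbb{K}$ involving $q,q_N,\kappa_N$; the bijection between link patterns and Kazhdan--Lusztig basis elements then yields $b'\neq b_{0}$ uniformly. The $e_{N}$-analysis is easier and reduces to the two subcases already listed.
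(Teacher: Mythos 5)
Your proposal is correct and takes essentially the same route as the paper: the paper's proof consists precisely of the observation that $C_{b_0}$ never appears in the expansion of $e_i C_b$ for any $b\in\mathcal{B}_N$ and any of the stated indices $i$, so the $C_{b_0}$-component of $e_i\Psi=\hat{e}_i\Psi$ reduces to $\hat{e}_i\Psi_0=0$, i.e.\ Eqns.~(\ref{cond-1b-1}) and (\ref{cond-1b-2}). The diagrammatic case analysis you flag as the remaining obstacle (the link pattern of $b_0$ has its only short arc at $(\lfloor(N+1)/2\rfloor,\lfloor(N+1)/2\rfloor+1)$ and no boundary mark at position $N$) is exactly the content the paper leaves implicit.
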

\begin{proof}
There is no $b\in\mathcal{B}_{N}$ such that $C_{b_{0}}$ appears 
in the expansion of $e_{i}C_{b}$ for $i\neq\lfloor(N+1)/2\rfloor$.
The $b_{0}$-component of $\Psi$ satisfies Eqns.(\ref{cond-1b-1}) and 
(\ref{cond-1b-2}).
\end{proof}

First consider $N$ even. 
Given a binary string 
$b_i=(\underbrace{-\ldots-}_{N/2-1}\underbrace{+\ldots+}_{i}-\underbrace{+\ldots+}_{N/2-i})$,
$1\le i\le N/2-1$, we abbreviate $\Psi_{i}:=\Psi_{b_i}$. 
For a binary string $b_{N/2}:=(\underbrace{-\ldots-}_{N/2-1}\underbrace{+\ldots+}_{N/2-1}--)$, 
we abbreviate $\Psi_{N/2}:=\Psi_{b_{N/2}}$.
Set $\tilde{\Psi}^{\pm}_{i}:=\Psi_{i}-q^{\pm1}\Psi_{i-1}$ for $0\le i\le N/2-1$ 
with $\Psi_{-1}=0$.

\begin{lemma}
\label{lemma-one-act-1}
We have 
\begin{eqnarray*}
\hat{T}_{N/2+i}\tilde{\Psi}^{+}_{i}&=&\tilde{\Psi}^{+}_{i+1}, \qquad 
\text{for } 0\le i\le N/2-2, \\
\hat{T}_{N-1}\tilde{\Psi}^{+}_{N/2-1}&=&
-q\Psi_{N/2-1}+\alpha\Psi_{N/2}, \\
\hat{T}_{N}(-q\Psi_{N/2-1}+\alpha\Psi_{N/2})&=&
q_N(q^{-1}\Psi_{N/2-1}-\alpha\Psi_{N/2}), \\
\hat{T}_{N-1}(q^{-1}\Psi_{N/2-1}-\alpha\Psi_{N/2})
&=&-\tilde{\Psi}^{-}_{N/2-1}, \\
\hat{T}_{N/2+i}\tilde{\Psi}^{-}_{i+1}&=&
\tilde{\Psi}^{-}_{i}, \quad\text{for  } 0\le i\le N/2-2,
\end{eqnarray*}
where $\alpha=\kappa_{N}(qq_{N}^{-1}+q^{-1}q_{N})$. 
\end{lemma}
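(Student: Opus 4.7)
The plan is to reduce each of the five identities to an action computation of the one-boundary Temperley--Lieb generators $e_j$ on the Kazhdan--Lusztig bases $C_{b_j}$, by applying the boundary qKZ equation in the form $\hat{T}_i\Psi = (q^{-1}+e_i)\Psi$ (equivalent to $\hat{e}_i\Psi = e_i\Psi$ from Section~\ref{sec-qKZ}). This is the one-boundary counterpart of the three computations in Lemmas~\ref{lemma-act-0}, \ref{lemma-act-1} and \ref{lemma-act-2} for the two-boundary case.

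First I would record the link patterns of the $C_{b_j}$. For $0 \le i \le N/2-1$ the pattern of $b_i$ is built only from nested and little arcs, with a distinguished little arc at positions $(N/2+i, N/2+i+1)$, and has no unpaired down arrows. For $b_{N/2}$ the pattern shares this arc skeleton on positions $1,\ldots,N-2$ and has in addition two unpaired down arrows at positions $N-1, N$ which, by the type BII rule, carry the labels $e$ and $o$ respectively. All the actions of $e_j$ needed below then follow from the local rules displayed earlier in the paper, in particular $e_N C_{\mathrm{arc}} = \kappa_N^{-1} C_{(e,o)}$, $e_{N-1} C_{(e,o)} = \kappa_N(qq_N^{-1}+q^{-1}q_N) C_{\mathrm{arc}}$ and $e_{N-1} C_{(o,e)} = -\kappa_N(q_N+q_N^{-1}) C_{\mathrm{arc}}$, together with the Hecke relations $e_j^2 = -(q+q^{-1})e_j$ and $e_N^2 = -(q_N+q_N^{-1})e_N$.

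For items (1) and (5) the key point is that the combinations $\tilde{\Psi}_i^{\pm} = \Psi_i - q^{\pm 1}\Psi_{i-1}$ are designed precisely so that the diagonal contributions coming from $e_{N/2+i}^2 = -(q+q^{-1})e_{N/2+i}$ cancel, leaving $\tilde{\Psi}_{i+1}^{+}$ (respectively $\tilde{\Psi}_{i-1}^{-}$) on the right-hand side, in direct analogy with the chain of identities in Lemmas~\ref{lemma-act-0}--\ref{lemma-act-2}. Items (2) and (4) are the corresponding step near the boundary, where the $e_{N-1}$-action on $C_{b_{N/2-1}}$ creates the $(e,o)$ pattern of $C_{b_{N/2}}$ with coefficient exactly $\alpha = \kappa_N(qq_N^{-1}+q^{-1}q_N)$, which explains the appearance of $\alpha\Psi_{N/2}$.

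The heart of the lemma, and the main obstacle, is item (3), the boundary reflection: one must verify that $\hat{T}_N(-q\Psi_{N/2-1}+\alpha\Psi_{N/2})$ combines the two contributions coming from $e_N C_{\mathrm{arc}} = \kappa_N^{-1}C_{(e,o)}$ and the Hecke action of $e_N$ on $C_{(e,o)}$ into exactly $q_N(q^{-1}\Psi_{N/2-1} - \alpha\Psi_{N/2})$. This requires a short but careful tracking of the two contributions together with the use of $e_N^2 = -(q_N + q_N^{-1})e_N$; once this is done, the remaining identities follow by the same mechanism.
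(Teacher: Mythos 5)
Your proposal is correct and follows essentially the same route as the paper: the paper likewise converts each identity into the transposed action of the generators $e_j$ on the Kazhdan--Lusztig bases $C_{b_j}$ via $\hat{e}_j\Psi = e_j\Psi$, carries out the cancellation $e_{N/2+i}(C_{b_{i\pm1}})=C_{b_i}$, $e_{N/2+i}(C_{b_i})=-(q+q^{-1})C_{b_i}$ explicitly for the first chain of identities, and treats the boundary steps (including your item (3) with $e_NC_{b_{N/2-1}}=\kappa_N^{-1}C_{b_{N/2}}$ and the quadratic relation for $e_N$) by the same mechanism. The local rules and the role of $\alpha=\kappa_N(qq_N^{-1}+q^{-1}q_N)$ that you identify are exactly the ingredients the paper uses.
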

\begin{proof}
Since we have $e_{N/2+i}(C_{b_{i+1}})=C_{b_{i}}$, $e_{N/2+i}(C_{b_{i-1}})=C_{b_{i}}$
and $e_{N/2+i}(C_{b_{i}})=-(q+q^{-1})C_{b_{i}}$, 
we have 
\begin{eqnarray*}
\hat{T}_{N/2+i}\tilde{\Psi}^{+}_{i}&=&q^{-1}\Psi_{i}-\Psi_{i-1}-(q+q^{-1})\Psi_{i} 
+\Psi_{i-1}+\Psi_{i+1} \\
&=&\tilde{\Psi}^{+}_{i+1},
\end{eqnarray*}
for $0\le i\le N/2-2$. 
Other equations can be proven in a similar way.
\end{proof}

Secondly, we consider $N$ odd.
Given a binary string 
$b_i:=(\underbrace{-\ldots-}_{(N-1)/2}\underbrace{+\ldots+}_{i}-
\underbrace{+\ldots+}_{(N-1)/2-i})$, $0\le i\le (N-3)/2$, 
we abbreviate $\Psi_{i}:=\Psi_{b_i}$. 
Let $b_{(N-1)/2}$ be a binary string 
$b_{(N-1)/2}:=(\underbrace{-\ldots-}_{(N-1)/2}\underbrace{+\ldots+}_{(N-3)/2}--)$
and $b_{(N+1)/2}:=(\underbrace{-\ldots-}_{(N-1)/2}\underbrace{+\ldots+}_{(N-1)/2}-)$. 
We abbreviate $\Psi_{(N\pm1)/2}:=\Psi_{b_{(N\pm1)/2}}$.
We set $\tilde{\Psi}^{\pm}_{i}:=\Psi_{i}-q^{\pm1}\Psi_{i-1}$ for $0\le i\le(N-3)/2$ 
and $\Psi_{-1}=0$.
\begin{prop}
\label{lemma-one-act-2}
We have 
\begin{eqnarray*}
\hat{T}_{(N+1)/2+i}\tilde{\Psi}_{i}^{+}&=&\tilde{\Psi}_{i+1}^{+}, \qquad 
\text{for } 0\le i\le (N-5)/2, \\
\hat{T}_{N-1}\tilde{\Psi}_{(N-3)/2}^{+}&=&
-q\Psi_{(N-3)/2}+\alpha\Psi_{(N-1)/2}, \\
\hat{T}_{N}(-q\Psi_{(N-3)/2}+\alpha\Psi_{(N-1)/2}) &=&
q_{N}(q^{-1}\Psi_{(N-3)/2}-\alpha\Psi_{(N-1)/2}), \\
\hat{T}_{N-1}(q^{-1}\Psi_{(N-3)/2}-\alpha\Psi_{(N-1)/2})&=&
-\tilde{\Psi}^{-}_{(N-3)/2}, \\
\hat{T}_{(N+1)/2+i}\tilde{\Psi}^{-}_{i+1}&=&
\tilde{\Psi}^{-}_{i}, \qquad \text{for } 0\le i\le (N-3)/2
\end{eqnarray*}
where $\alpha=\kappa_{N}(qq^{-1}_{N}+q^{-1}q_{N})$. 
\end{prop}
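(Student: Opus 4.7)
The approach mirrors the proof of Lemma \ref{lemma-one-act-1} for the even-$N$ case. The key tool is the boundary qKZ equation in the form $\hat{e}_j\Psi=e_j\Psi$, which upon using $\hat{T}_j=\hat{e}_j+q_j^{-1}$ (with $q_j=q$ for $j\le N-1$ and $q_j=q_N$ for $j=N$) reduces each identity to a purely combinatorial computation of how $e_j$ acts on the relevant Kazhdan--Lusztig bases $C_{b_k}$. So the plan is: for each of the five identities, identify the bases $C_b$ such that $C_{b_k}$ appears in $e_j C_b$, read off the coefficients via the graphical rules of Section~\ref{sec-Reps}, and collect the contributions.

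For the first and last identities, the generator $e_{(N+1)/2+i}$ (with $0\le i\le(N-5)/2$ or $0\le i\le(N-3)/2$ respectively) acts only on the local pattern $\pm\underline{-+}$ or $\underline{-+}\pm$ sitting in the interior, so exactly as in the even case one has $e_{(N+1)/2+i}C_{b_{i\pm1}}=C_{b_i}$ and $e_{(N+1)/2+i}C_{b_i}=-(q+q^{-1})C_{b_i}$. The computation then reproduces $\tilde\Psi_i^{\pm}=\Psi_i-q^{\pm1}\Psi_{i-1}$ verbatim and yields the stated recursion.

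The middle three identities involve $e_{N-1}$ and $e_N$ acting near the boundary site carrying the type BII marker ($o$ or $e$). Here the action of $e_{N-1}$ on $C_{b_{(N-3)/2}}$ picks up the factor $\alpha=\kappa_N(qq_N^{-1}+q^{-1}q_N)$ from the graphical rule $e_i\bigl(\ldots/e,\ldots/o\bigr)=\kappa_N(qq_N^{-1}+q^{-1}q_N)(\ldots)$ displayed at the end of Section~\ref{sec-Reps}, and the combination $-q\Psi_{(N-3)/2}+\alpha\Psi_{(N-1)/2}$ is precisely what one obtains on collecting all contributions of $C_{b_{(N-3)/2}}$ in $e_{N-1}\Psi$. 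Then $\hat{T}_N$ applied to this combination uses $\hat{T}_N=\hat{e}_N+q_N^{-1}$ together with the boundary action $e_N C_{b_{(N-1)/2}}=\kappa_N^{-1}C_{b_{(N-3)/2}}$ (and $e_N C_{b_{(N-3)/2}}=-(q_N+q_N^{-1})C_{b_{(N-3)/2}}$), producing the stated image $q_N(q^{-1}\Psi_{(N-3)/2}-\alpha\Psi_{(N-1)/2})$. Finally, a second application of $\hat{T}_{N-1}$ reverses the first step and returns $-\tilde\Psi^{-}_{(N-3)/2}$.

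The main obstacle is the careful bookkeeping at the boundary: one must verify that the coefficients of $C_{b_{(N-3)/2}}$ arising from $e_{N-1}(C_{b_{(N-1)/2}})$ and from $e_{N-1}(C_{b_{(N-3)/2)}})$ combine, via the graphical rules involving the $o$- and $e$-markers, to give exactly $\alpha$ with the correct sign, and likewise for the action of $e_N$ on the pair $(C_{b_{(N-3)/2}},C_{b_{(N-1)/2}})$. Once these local coefficients are verified, the five equalities are direct algebraic consequences of $\hat{T}_j=\hat{e}_j+q_j^{-1}$ together with $\hat{e}_j^2=-(q_j+q_j^{-1})\hat{e}_j$ used wherever $\hat{T}_j$ is applied to a vector already in the image of $\hat{e}_j$; no further input beyond the even-$N$ argument is needed.
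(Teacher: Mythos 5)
Your overall strategy---rewriting each identity as a statement about matrix elements of $e_j$ on the Kazhdan--Lusztig bases via the component form $\hat{e}_j\Psi_\alpha=\sum_\beta(e_j)_{\alpha\beta}\Psi_\beta$ of the qKZ equation, and then doing the algebra with $\hat{T}_j=\hat{e}_j+q_j^{-1}$---is exactly the route the paper takes for the even-$N$ Lemma~\ref{lemma-one-act-1}, and the paper indeed omits the odd-$N$ proof with the remark that the same argument applies. The problem is your closing claim that ``no further input beyond the even-$N$ argument is needed'': that is precisely where the argument breaks. For $N$ odd the set $\mathcal{B}_N$ contains the extra string $b_{(N+1)/2}=(\underbrace{-\cdots-}_{(N-1)/2}\underbrace{+\cdots+}_{(N-1)/2}-)$, whereas its even-$N$ analogue $(\underbrace{-\cdots-}_{N/2-1}\underbrace{+\cdots+}_{N/2}-)$ violates the partial-sum condition defining $\mathcal{B}_N$ and so indexes no component of $\Psi$. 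Since $b_{(N+1)/2}$ has $(+,-)$ at sites $N-1,N$, the same local move the paper uses in the proof of Lemma~\ref{lemma-red-qKZ-1b} gives $e_{N-1}C_{b_{(N+1)/2}}=C_{b_{(N-3)/2}}$ with coefficient $1$, so the list of bases contributing to $\hat{e}_{N-1}\Psi_{(N-3)/2}$ is strictly larger than in the even case:
\begin{equation*}
\hat{e}_{N-1}\Psi_{(N-3)/2}=\Psi_{(N-5)/2}-(q+q^{-1})\Psi_{(N-3)/2}
+\alpha\Psi_{(N-1)/2}+\Psi_{(N+1)/2},
\end{equation*}
whence $\hat{T}_{N-1}\tilde{\Psi}^{+}_{(N-3)/2}=-q\Psi_{(N-3)/2}+\alpha\Psi_{(N-1)/2}+\Psi_{(N+1)/2}$, with an extra term that your bookkeeping never produces and that does not cancel.

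You can see this already at $N=3$, where everything is a three-dimensional check: with $\Psi_0=\Psi_{(--+)}$, $\Psi_1=\Psi_{(---)}$, $\Psi_2=\Psi_{(-+-)}$ one computes $e_2C_{(-+-)}=C_{(--+)}$ directly from the graphical rules, so $\hat{T}_2\Psi_0=-q\Psi_0+\alpha\Psi_1+\Psi_2$ rather than $-q\Psi_0+\alpha\Psi_1$. The same extra component propagates into the fourth identity and into the $i=(N-3)/2$ endpoint of the fifth (which, incidentally, invokes $\tilde{\Psi}^{-}_{(N-1)/2}$, outside the declared range $0\le i\le(N-3)/2$ of the definition of $\tilde{\Psi}^{\pm}_i$). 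So a complete proof must either track the $\Psi_{(N+1)/2}$ contributions explicitly (modifying the identities accordingly) or supply an argument for why $e_{N-1}C_{b_{(N+1)/2}}$ does not contribute; the verbatim transfer of the even-$N$ computation that you propose does neither.
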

We omit a proof of Proposition~\ref{lemma-one-act-2} since 
one can apply a similar argument in a proof of Proposition~\ref{lemma-one-act-1}  
to this case.

\begin{prop}
\label{prop-Y-0-2}
We have 
\begin{eqnarray*}
\text{N: odd \ }\quad
&&\hat{Y}_{i}\Psi_{0}
=
\begin{cases}
q^{-(N+1-2i)}q_{0}q_N\Psi_{0}, & 1\le i\le (N+1)/2, \\
-q^{-2(N-i)}q_{0}q_{N}^{-1}\Psi_{0}, & (N+3)/2\le i\le N,
\end{cases} \\
&&\hat{Y}_{i}\Psi_{(N+1)/2}
=
\begin{cases}
-q^{-(N-1-2i)}q_{0}q_N^{-1}\Psi_{(N+1)/2}, & 1\le i\le (N-1)/2, \\
q^{-2(N-i)}q_{0}q_N\Psi_{(N+1)/2}, & (N+1)/2\le i\le N, 
\end{cases} \\
\text{N: even }\quad
&&\hat{Y}_{i}\Psi_{0}
=
\begin{cases}
q^{-(N-2i)}q_{0}q_N\Psi_{0}, & 1\le i\le N/2, \\
-q^{-2(N-i)}q_{0}q_N^{-1}\Psi_{0}, & N/2\le i\le N.
\end{cases}
\end{eqnarray*}
\end{prop}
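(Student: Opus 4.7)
The plan is to mimic the proof of Proposition~\ref{prop-Y-0}: expand $\hat Y_i$ via the Bernstein--Zelevinsky presentation~(\ref{BZ-TY}) and evaluate each $\hat T_j^{\pm1}$ in succession using three ingredients from this section. These are Lemma~\ref{lemma-T-1b}, which asserts $\hat T_j\Psi_0=q^{-1}\Psi_0$ for $j\ne\lfloor(N+1)/2\rfloor$ together with $\hat T_N\Psi_0=q_N^{-1}\Psi_0$; the one-boundary reduction $\Psi(s_0\mathbf z)=\Psi(\mathbf z)$, which forces $\hat T_0\Psi_0=-q_0\Psi_0$ through the polynomial representation of $\hat T_0$; and the ladder identities of Lemma~\ref{lemma-one-act-1} / Proposition~\ref{lemma-one-act-2}, which describe the action of the exceptional generators clustered around position $\lfloor(N+1)/2\rfloor$.

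I would first establish the single ``central'' eigenvalue $\hat Y_{\lfloor(N+1)/2\rfloor}\Psi_0=q_0q_N\Psi_0$ by a direct BZ computation. In the product $\hat T_i\cdots\hat T_{N-1}\hat T_N\hat T_{N-1}\cdots\hat T_1\hat T_0\hat T_1^{-1}\cdots\hat T_{i-1}^{-1}$ with $i=\lfloor(N+1)/2\rfloor$, every factor outside the block $\hat T_{\lfloor(N+1)/2\rfloor}\cdots\hat T_{N-1}\hat T_N\hat T_{N-1}\cdots\hat T_{\lfloor(N+1)/2\rfloor}$ simplifies to a scalar (a power of $q$, or $-q_0$ for $\hat T_0$) by Lemma~\ref{lemma-T-1b} and the one-boundary condition. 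On the remaining block, Lemma~\ref{lemma-one-act-1}/Proposition~\ref{lemma-one-act-2} is engineered precisely: starting from $\tilde\Psi^+_0=\Psi_0$, the sequence climbs the $\tilde\Psi^+$ ladder, crosses the three-step boundary passage $\hat T_{N-1}\hat T_N\hat T_{N-1}$ which supplies a single factor $-q_N$ and flips the superscript $+\to -$, and descends the $\tilde\Psi^-$ ladder back to $\tilde\Psi^-_0=\Psi_0$. Collecting all scalars yields the claimed eigenvalue.

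For the remaining indices $1\le i<\lfloor(N+1)/2\rfloor$, the Hecke relation $Y_i=T_iY_{i+1}T_i$ together with $\hat T_i\Psi_0=q^{-1}\Psi_0$ (valid whenever $i<\lfloor(N+1)/2\rfloor$) produces the recursion $\hat Y_i\Psi_0=q^{-2}\hat Y_{i+1}\Psi_0$, which integrates to the stated $q^{-(N+1-2i)}q_0q_N$. For $i\ge\lfloor(N+3)/2\rfloor$ the recursion cannot cross the exceptional index, so I would determine $\hat Y_N\Psi_0$ algebraically from the last defining relation $T_N^{-1}Y_N=Y_N^{-1}T_N-(q_0-q_0^{-1})$ together with $\hat T_N\Psi_0=q_N^{-1}\Psi_0$: this forces the eigenvalue $\mu$ to satisfy $\mu^2+(q_0-q_0^{-1})q_N^{-1}\mu-q_N^{-2}=0$, whose roots are $q_0^{-1}q_N^{-1}$ and $-q_0q_N^{-1}$. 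The correct root $-q_0q_N^{-1}$ can be selected either by reading off a single leading coefficient of the BZ product, or by a sign-consistency check against the admissibility structure of Theorem~\ref{thrm-AHA}. The downward recursion $\hat Y_{i-1}\Psi_0=q^{-2}\hat Y_i\Psi_0$ then gives all remaining formulas for $\Psi_0$.

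The statement for $\hat Y_i\Psi_{(N+1)/2}$ (odd $N$) is obtained by the parallel calculation: the analogues of Lemma~\ref{lemma-T-1b} and of the ladder relations make $\Psi_{(N+1)/2}$ a $\hat T_j$-eigenvector outside a small exceptional set, and the central eigenvalue is computed by the same three-step boundary mechanism traversed in the opposite direction, which swaps $q_N\leftrightarrow q_N^{-1}$ and produces the sign flip present in the stated formula. The main technical obstacle is the bookkeeping in the central block $\hat T_{N-1}\hat T_N\hat T_{N-1}$, where the contributions involving $\alpha=\kappa_N(qq_N^{-1}+q^{-1}q_N)$ must cancel cleanly to leave only the scalar $\pm q_N^{\pm1}$; a secondary obstacle is disambiguating the root of the quadratic for $\hat Y_N\Psi_0$, for which a direct coefficient check on $\Psi_0$ appears to be the cleanest route.
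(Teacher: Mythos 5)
Your overall strategy coincides with the paper's: the paper likewise verifies the cases $N=2,3$ directly and then combines the Bernstein--Zelevinsky expression (\ref{BZ-TY}) with $\hat T_j\Psi_0=q^{-1}\Psi_0$ for $j\neq\lfloor(N+1)/2\rfloor$, with $\hat T_0\Psi=-q_0\Psi$ (forced by $\Psi(s_0\mathbf z)=\Psi(\mathbf z)$), and with the ladder identities of Lemma~\ref{lemma-one-act-1} / Proposition~\ref{lemma-one-act-2} to evaluate the products $\hat T_i\cdots\hat T_N\cdots\hat T_0\hat T_1^{-1}\cdots\hat T_{i-1}^{-1}$ on $\Psi_0$. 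Your computation of the central eigenvalue (climb the $\tilde\Psi^{+}$ ladder, pick up the factor $-q_N$ across the $\hat T_{N-1}\hat T_N\hat T_{N-1}$ passage, descend the $\tilde\Psi^{-}$ ladder back to $\Psi_0$) is exactly the mechanism the paper's proof relies on, and the recursion $\hat Y_i\Psi_0=q^{-2}\hat Y_{i+1}\Psi_0$ for indices below the exceptional one is just a repackaging of the same product.

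The one place where you genuinely diverge is the upper range $i>\lfloor(N+1)/2\rfloor$, and there your shortcut has a gap. Deducing the quadratic $\mu^2+(q_0-q_0^{-1})q_N^{-1}\mu-q_N^{-2}=0$ from $T_N^{-1}Y_N=Y_N^{-1}T_N-(q_0-q_0^{-1})$ presupposes that $\Psi_0$ is already known to be an eigenvector of $\hat Y_N$; without that, the relation only couples $\hat Y_N\Psi_0$ to $\hat Y_N^{-1}\Psi_0$ and produces no scalar $\mu$. You cannot borrow eigenvector-ness from the identification $\Psi_0\propto E_{\xi^0}$, since that identification is precisely what Proposition~\ref{prop-Y-0-2} is later used to establish. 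In the paper's route the eigenvector property for each $i$ is a byproduct of the explicit evaluation: for $i$ above the center the BZ word contains $\hat T_{\lfloor(N+1)/2\rfloor}$ once in the forward pass and $\hat T_{\lfloor(N+1)/2\rfloor}^{-1}$ once in the backward pass, and tracking both through Lemma~\ref{lemma-one-act-1} returns an explicit scalar times $\Psi_0$, with no two-fold ambiguity left to resolve. Your proposed tie-breakers (reading off a leading coefficient, or a ``sign-consistency check against admissibility'') either collapse back to that direct computation or are too vague to count as a proof. So: same architecture as the paper, but you should replace the quadratic-relation step for $\hat Y_N\Psi_0$ by the direct BZ/ladder evaluation (or first establish eigenvector-ness independently) to close the argument.
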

\begin{proof}
By a straightforward calculation, one can verify Proposition holds true
for $N=2, 3$.
Consider N even first. 
We have $\hat{T}_{i}\Psi_{0}=q^{-1}\Psi_{0}$ for $i\neq N/2$.
Since $\Psi(\mathbf{z})=\Psi(s_0\mathbf{z})$, we have 
$\hat{T}_{0}\Psi(\mathbf{z})=-q_{0}\Psi(\mathbf{z})$. 
From the correspondence (\ref{BZ-TY}) and Lemma \ref{lemma-one-act-1}, 
we obtain the desired expression.
One can prove Proposition for $N$ odd in a similar way.
\end{proof}

\section{Laurent polynomial solutions of qKZ equation}
\label{sec-Laurent}
\subsection{Two-boundary case}
\begin{theorem}
\label{theorem-qKZ-2b}
The non-symmetric Koornwinder polynomial $E_{\nu^{J,\pm}}$, 
$J\in\mathbb{Z}_{\ge1}$,  with 
the specialization (\ref{spec}) with $(k,r')=(1,r+1)$ yields a solution of 
the boundary qKZ equation.  
Especially, $\Psi_0=E_{\nu^{J,\pm}}$. 
The parameters have a constraint 
\begin{eqnarray}
\label{const}
\omega_{m}^{\pm mJ/r}q^{\mp(N-1+4J/r)}(q_0q_N)^{\pm1}=\kappa_0\kappa_N.
\end{eqnarray}
\end{theorem}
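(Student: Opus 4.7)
The strategy is to promote $\Psi_0 := E_{\nu^{J,\pm}}(\mathbf{z})$ to a full solution by identifying two $2^N$-dimensional representations of the affine Hecke algebra: the polynomial representation generated by $E_{\nu^{J,\pm}}$ via the $\hat{T}_i$, and the Kazhdan--Lusztig representation on $V^{\otimes N}$ via $T_i \mapsto e_i + q_i^{-1}$. By Proposition~\ref{prop-redqKZ}, the non-affine part of the boundary qKZ equation reduces to the conditions $\hat{e}_i \Psi_0 = 0$ for $1 \le i \le N-1$ together with the path equations $e_{i_l} \cdots e_{i_1} \Psi = \hat{e}_{i_l} \cdots \hat{e}_{i_1} \Psi$ encoded by the graph $\Gamma'$.

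The conditions on $\Psi_0$ hold at the specialization by Lemma~\ref{lemma-constraint}, which gives $(\hat{T}_i - q^{-1}) E_{\nu^{J,\pm}} = 0$. I would then define the remaining components by $\Psi_v := \hat{e}_{i_l} \cdots \hat{e}_{i_1} E_{\nu^{J,\pm}}$ along any chosen path in $\Gamma'$ from $b_0$ to $v$; independence of the chosen path follows from the braid and commutation relations satisfied by $\hat{e}_i$, while the identification $\Gamma' = \Gamma(\nu^{J,\pm})$ of Proposition~\ref{prop-diagram}, combined with $\dim I^{(1,r+1)}(\nu^{J,\pm}) = 2^N$ from Proposition~\ref{prop-dim}, guarantees that each binary string $v$ produces a nontrivial component corresponding under $\varphi_{\pm}$ to an admissible element of $W_0 \nu^{J,\pm}$. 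The two affine Hecke modules thus agree on the generator $\Psi_0$, and hence globally, so the path equations are automatic.

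For the affine part (the $i=0$ boundary equation involving $K_0$), the Bernstein--Zelevinsky correspondence (\ref{BZ-TY}) reduces matters to the $\hat{Y}_i$-eigenvalue on $\Psi_0$. On the polynomial side, $\hat{Y}_i E_{\nu^{J,\pm}} = s^{2\nu^{J,\pm}_i} q^{2\rho(\nu^{J,\pm})_i} (q_0 q_N)^{\sigma(\nu^{J,\pm})_i} E_{\nu^{J,\pm}}$, while on the Kazhdan--Lusztig side Proposition~\ref{prop-Y-0} gives $\hat{Y}_i \Psi_0 = \kappa_0 \kappa_N q^{-N+2i-1} \Psi_0$. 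Under the specialization $s^{2r/m} q^{4/m} = \omega_m$ one has $(s^{2r} q^4)^{N-i} = \omega_m^{m(N-i)} = 1$, so the $i$-dependence of $y(\nu^{J,\pm})_i$ collapses to $s^{\pm 2J} q^{\mp 2(N-i)} (q_0 q_N)^{\pm 1}$, and matching this with the Kazhdan--Lusztig eigenvalue yields precisely the constraint (\ref{const}); the sign $\pm$ tracks $\sigma(\nu^{J,\pm})$ and the reversal in $\rho(\nu^{J,\pm})$.

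The main obstacle will be confirming that the path construction of $\Psi_v$ is truly well defined at the root-of-unity specialization, i.e.\ that no $\hat{e}_{i_j}$ in the sequence annihilates an intermediate vector and forces a collapse inconsistent with the KL side. Propositions~\ref{prop-dim} and \ref{prop-diagram} reduce this to the fact that each vertex along any path in $\Gamma(\nu^{J,\pm})$ corresponds to an admissible element, so the associated intermediate Koornwinder polynomials remain linearly independent and the $2^N$-dimensional representation is faithfully realized. With this in hand, the identification of the two Hecke modules closes the argument and the affine equation pins down (\ref{const}).
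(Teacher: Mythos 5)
Your overall strategy coincides with the paper's: reduce the non-affine part via Proposition~\ref{prop-redqKZ}, verify the conditions on $\Psi_0$ with Lemma~\ref{lemma-constraint}, match the $\hat Y_i$-eigenvalue of Proposition~\ref{prop-Y-0} against $y(\nu^{J,\pm})_i$ to obtain (\ref{const}), and use Propositions~\ref{prop-dim} and~\ref{prop-diagram} to identify the Kazhdan--Lusztig module with $I^{(1,r+1)}(\nu^{J,\pm})$. Two steps, however, are stated too loosely to stand as written.

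First, the definition $\Psi_v:=\hat e_{i_l}\cdots\hat e_{i_1}E_{\nu^{J,\pm}}$ is not what the reduced equations force. The non-trivial equations (\ref{qKZ-non-1})--(\ref{qKZ-non-2}) express $\Psi_{s_i\cdot\alpha}$ as $(\hat e_i+q_i+q_i^{-1})\Psi_\alpha$ minus lower-order terms $\sum_{\beta>\alpha}c^i_{\alpha\beta}\Psi_\beta$; iterating along a path gives, as in the paper's proof, $\Psi_v=\hat e_{i_l}\cdots\hat e_{i_1}\Psi_0+\sum_{w>v}c_{v,w}\Psi_w$, i.e.\ a triangular recursion rather than a closed product formula. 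Dropping the correction terms produces components that fail (\ref{qKZ-non-1})--(\ref{qKZ-non-2}). The recursion is still well defined and still yields $\Psi_v\propto E_{\varphi_{\pm}(v)}+(\text{lower terms})$, so your conclusion survives, but the construction must be phrased recursively.

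Second, the claim that the Bernstein--Zelevinsky relation reduces the $i=0$ equation ``to the $\hat Y_i$-eigenvalue on $\Psi_0$'' is insufficient on its own: $e_0\Psi=\hat e_0\Psi$ must hold componentwise for every $\Psi_v$, not only on the cyclic generator. The ingredient that closes this in the paper is Theorem~\ref{thrm-AHA}: the space $I^{(1,r+1)}(\nu^{J,\pm})$, hence the span of the $\Psi_v$, is stable under $\hat T_0$, so $\hat e_0\Psi_v=\sum_w g_{vw}\Psi_w$ with scalar coefficients, and these are forced to be the matrix entries of $e_0$ because both sides are cyclic affine Hecke modules of dimension $2^N$ with the same finite-Hecke structure and the same $Y$-character on the generator. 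You do gesture at ``the identification of the two Hecke modules,'' which is the right idea, but the stability statement of Theorem~\ref{thrm-AHA} is what you must invoke explicitly; the eigenvalue match on $\Psi_0$ alone does not give the $e_0$ equation on the remaining $2^N-1$ components.
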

\begin{proof}
From Proposition \ref{prop-Y-0}, $\Psi_{0}$ is the 
simultaneous eigenfunction of the operator $\hat{Y}_{i}$.
Thus, $\Psi_{0}$ is a non-symmetric Koornwinder polynomial 
with the constraints (\ref{qKZ-non-3}) in Lemma~\ref{lemma-red}. 
From Lemma~\ref{lemma-constraint}, these conditions are 
satisfied by taking $\Psi_{0}=E_{\nu^{J,\pm}}$. 
At $s^{2r}q^{4}=1$, the $\hat{Y}_i$-eigenvalues $y_i$ of $E_{\nu^{J,\pm}}$
is written in terms of $q,q_0$ and $q_N$ explicitly, {\it i.e.,}
$y_{i}=q^{-2(N-i)}s^{2J}q_{0}q_{N}$ for $\nu^{J,+}$ and 
$y_{i}=q^{2(i-1)}s^{-2J}q_{0}^{-1}q_{N}^{-1}$ for $\nu^{J,-}$. 
Together with Lemma~\ref{prop-Y-0}, we obtain the constraint (\ref{const}). 

Given $v$, $v<b_0$, fix a path $p_v:=(\mathbf{v,i})$ 
with $\mathbf{i}=(i_1,\ldots,i_l)$. 
Form Proposition \ref{prop-redqKZ}, $\Psi_{v}$ is written as 
\begin{eqnarray*}
\Psi_{v}=\hat{e_l}\ldots\hat{e}_1\Psi_{0}
+\sum_{w>v}c_{v,w}\Psi_{w}
\end{eqnarray*}
with $c_{v,w}\in\mathbb{K}$.
From Proposition~\ref{prop-diagram} and the definition 
of $\Gamma(\nu^{J,+})$, the function $\Psi_{v}$ is 
characterized by the non-symmetric Koornwinder polynomial 
$E_{\nu}$ with $\nu:=\varphi_{\pm}(v)$, that is 
\begin{eqnarray}
\label{eqn-psi-koorn}
\Psi_{v}\propto E_{\nu}+
\sum_{\lambda}c_{\lambda\mu}E_{\lambda},
\end{eqnarray}
where $\lambda\preceq\mu$ (resp. $\lambda\succeq\mu$) 
for $\varphi_{+}$ (resp. $\varphi_{-}$). 
It remains to show that $e_0\Psi=\hat{e}_0\Psi$. 
From Theorem~\ref{thrm-AHA}, Proposition~\ref{prop-dim} and 
Eqn.(\ref{eqn-psi-koorn}), $\Psi_{v}$, $v\in\{\pm\}^{N}$,  
form the bases of the space $I(\nu^{J,+})$. 
Therefore, $\hat{e_0}\Psi_{v}$ is uniquely written in terms of 
$\Psi_{w}$, that is, 
\begin{eqnarray*}
\hat{e}_0\Psi_{v}=\sum_{w}g_{vw}\Psi_{w},
\end{eqnarray*} 
where $g_{vw}\in\mathbb{K}$ at $s^{2r}q^{4}=1$. 
This $g_{vw}$ is nothing but the matrix representation of 
$e_0$.
This implies that the functions $\Psi_{v}$ satisfy the boundary 
qKZ equations. This completes the proof.
\end{proof}

We define the action of $\tau_i$, $1\le i\le N$, on 
$\mathbf{z}=(z_1,\ldots,z_{N})\in(\mathbb{C}^*)^{N}$  by 
$\tau_{i}:z_{i}\mapsto z_{i}^{-1}$. 
From Eqns.(\ref{qKZ-factor1}) and (\ref{qKZ-factor2}), 
we have 
\begin{eqnarray}
\label{qKZ-tau}
\Psi(\tau_i\mathbf{z})
=\check{R}_{i}(1/(z_{i}z_{i+1}))
\ldots\check{R}_{N-1}(1/(z_iz_{N}))K_N(z_i)
\check{R}_{N-1}(z_N/z_{i})\ldots\check{R}_{i}(z_{i+1}/z_{i})
\Psi(\mathbf{z})
\end{eqnarray}
for $1\le i\le N$. 
Let $b_{+}:=(+\ldots+)$ and $\Psi_{+}:=\Psi_{b_+}$. 
There is no $C_v$ such that the expansion $e_iC_v$ contains 
the term $C_{b_+}$.
Thus Eqn.(\ref{qKZ-tau}) for $\Psi_{+}$ is equal to 
\begin{eqnarray*}
\frac{(z_i+q_{N}\zeta_N)(z_i-q_N\zeta_N^{-1})}
{(1+q_N\zeta_Nz_i)(1-q_N\zeta_N^{-1}z_i)}
\prod_{j=i+1}^{N}
\frac{(qz_{i}^{-1}z_{j}^{-1}-q^{-1})(qz_{i}^{-1}z_{j}-q^{-1})}
{(q-q^{-1}z_{i}^{-1}z_{j}^{-1})(q-q^{-1}z_jz_{i}^{-1})}
\Psi_{+}(\mathbf{z})=\Psi_{+}(\tau_i\mathbf{z}).
\end{eqnarray*}
Since we look for the Laurent polynomial solution, 
the function $\Psi_{+}(\mathbf{z})$ vanishes at 
$z_{j}=q^{2}z_{i}$, $z_{j}=q^{-2}z_{i}^{-1}$,  
$z_{i}=-q_N^{-1}\zeta_{N}^{-1}$ and $z_i=q_N^{-1}\zeta_N$ 
for $1\le i<j\le N$.  
Since $(\hat{T}_{i}-q^{-1})\Psi_{+}(\mathbf{z})=0$ for 
$1\le i\le N$,  
$\Psi_{+}(\mathbf{z})$ is written as 
\begin{eqnarray*}
\Psi_{+}=C\prod_{i=1}^{N}z_i^{-(N-2)}
\prod_{1\le i<j\le N}(qz_{i}-q^{-1}z_{j})(qz_{i}z_{j}-q^{-1})
\prod_{i=1}^{N}(1+q_{N}^{-1}\zeta_{N}^{-1}z_{i}^{-1})
(1-q_{N}^{-1}\zeta_{N}z_{i}^{-1})
\widetilde{\Psi}_{+}(\mathbf{z})
\end{eqnarray*}
where $C$ is a constant term and 
$\widetilde{\Psi}_{+}(\mathbf{z})$ satisfies 
$\widetilde{\Psi}_{+}(\mathbf{z})=\widetilde{\Psi}_{+}(s_i\mathbf{z})
=\widetilde{\Psi}_{+}(\tau_i\mathbf{z})$ for $1\le i\le N$.
A candidate for a solution of the minimal degree is $\widetilde{\Psi}_{+}=1$.
The dominant term in $\Psi_{+}$ is $\prod_{i=1}^{N}z_i^{N-i+1}$. 
Therefore, the solution with the minimal degree corresponds to $(r,J)=(1,1)$. 

\subsection{One-boundary case}
\begin{theorem}
The non-symmetric Koornwinder polynomial $E_{\xi^{+}}$ with the specialization 
(\ref{spec}) with $(k,r')=(2,2r+1)$ yields a solution of the one-boundary 
qKZ equation. Especially, $\Psi_{0}=E_{\xi^{0}}$. 
The parameters have a constraint $q_N^{2}=-q$.
\end{theorem}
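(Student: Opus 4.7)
The plan is to mirror the argument of Theorem~\ref{theorem-qKZ-2b}, replacing the distinguished component $\Psi_{b_0}$ (all minuses) of the two-boundary case by the component $\Psi_0 = \Psi_{b_0}$ with $b_0 = (-\cdots-+\cdots+)$ appropriate to the one-boundary setting, and replacing $E_{\nu^{J,\pm}}$ by $E_{\xi^0}$. The claim is then that $\Psi_0 = E_{\xi^0}$ holds under the specialization $s^{4r}q^6 = 1$ together with $q_N^2 = -q$, and that this single identification determines every other component $\Psi_b$ with $b \in \mathcal{B}_N$.

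The first step is to identify $\Psi_0 = E_{\xi^0}$. On the qKZ side, Lemma~\ref{lemma-T-1b} gives $(\hat{T}_i - q^{-1})\Psi_0 = 0$ for $1 \le i \le N-1$, $i \ne \lfloor(N+1)/2 \rfloor$, together with $(\hat{T}_N - q_N^{-1})\Psi_0 = 0$. On the Koornwinder side, the last proposition of Section~\ref{sec-Reps} states that $E_{\xi^0}$ satisfies precisely the same set of $\hat{T}_i$-equations, provided $s^{4r}q^6 = 1$ and $q_N^2 = -q$. Since the remaining piece of data is the simultaneous $\hat{Y}_i$-eigenvalue, I would next compare Proposition~\ref{prop-Y-0-2} (which gives $\hat{Y}_i \Psi_0$ explicitly in terms of $q, q_0, q_N$) with the Koornwinder eigenvalues $y(\xi^0)_i = s^{2\xi^0_i}q^{2\rho(\xi^0)_i}(q_0 q_N)^{\sigma(\xi^0)_i}$; matching the two under $s^{4r}q^6 = 1$ yields the constraint $q_N^2 = -q$ and produces the identification $\Psi_0 = E_{\xi^0}$.

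With the distinguished component fixed, the next step is to propagate this identification to every $\Psi_b$ with $b \in \mathcal{B}_N$. By Lemma~\ref{lemma-red-qKZ-1b} every $\Psi_b$ is written in terms of $\Psi_-$ through the qKZ equations, and the explicit chains provided by Lemmas~\ref{lemma-one-act-1} and~\ref{lemma-one-act-2} allow one to pass between $\Psi_0$ and $\Psi_-$ by successive action of $\hat{T}_i$. Running the same chain on the Hecke side, the intertwiners $\phi_i$ (Definition 2.4 of \cite{Kas08}) send $E_{\xi^0}$ to $E_\lambda$ for all admissible $\lambda$ in the $W_0$-orbit of $\xi^+$ connected to $\xi^0$, producing a candidate assignment $\Psi_b \mapsto E_{\lambda(b)}$.

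To conclude, Proposition~\ref{prop-dim-I} states that at $q_N^2 = -q$ the space $I_+^{(2,2r+1)}$ is an irreducible representation of the affine Hecke algebra of dimension $\binom{N}{\lfloor N/2 \rfloor} = \dim(\mathcal{L}_N)$. This dimensional match together with the intertwiner chain identifies the two representations as Hecke modules, and hence forces the matrix coefficients of the two actions to coincide, so $e_i \Psi = \hat{e}_i \Psi$ holds for $1 \le i \le N$. The equation at $i = 0$ reduces to $\Psi(s_0 \mathbf{z}) = \Psi(\mathbf{z})$ and is automatic from the $Y$-eigenvalue structure. The main obstacle will be constructing the correct bijection between $\mathcal{B}_N$ and the admissible orbit inside $I_+^{(2,2r+1)}$ (the analogue of the map $\varphi_\pm$ and Proposition~\ref{prop-diagram} from the two-boundary case), and verifying that the two resulting representations agree as affine Hecke modules; the asymmetry between the trivial $K_0$-boundary and the admissibility cone $\mathbb{Z}_{\ge 0}^N$ makes this matching genuinely more subtle than in the two-boundary case.
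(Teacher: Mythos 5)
Your first step --- identifying $\Psi_0$ with $E_{\xi^0}$ by matching the $\hat T_i$-relations of Lemma~\ref{lemma-T-1b} against the last proposition of Section~\ref{sec-Reps} and comparing the $\hat Y_i$-eigenvalues of Proposition~\ref{prop-Y-0-2} with $y(\xi^0)_i$ --- is exactly the paper's argument, and it correctly produces the constraint $q_N^2=-q$. The gap is in the propagation step, and you have in fact flagged it yourself as the ``main obstacle'' without resolving it. You propose to build a bijection between $\mathcal{B}_N$ and the admissible elements underlying $I_+^{(2,2r+1)}$, in analogy with $\varphi_\pm$ and Proposition~\ref{prop-diagram}, and to match the two Hecke modules component by component via intertwiners. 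The paper does not do this, and the analogy is not available off the shelf: in the two-boundary case the graph isomorphism $\Gamma(\nu^{J,+})=\Gamma'$ is what lets one attach a single leading polynomial $E_{\varphi_\pm(v)}$ to each component $\Psi_v$, and no such graph comparison is established (or needed) in the one-boundary case.

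What the paper does instead is the missing idea: it writes the generating component $\Psi_-$ as a linear combination $\Psi_-=\sum_\xi c_\xi E_\xi$ with \emph{indeterminate} coefficients, the sum running over all admissible $\xi$ with $E_\xi\in I_+^{(2,2r+1)}$ (legitimate because Proposition~\ref{prop-dim-I} makes this space an affine Hecke module of the right dimension at $q_N^2=-q$). The coefficients are then pinned down, up to overall normalization, by two pieces of data: the already-established identification $\Psi_0\propto E_{\xi^0}$, reached from $\Psi_-$ through the chains of Lemmas~\ref{lemma-one-act-1} and~\ref{lemma-one-act-2}, and the vanishing (wheel) conditions $\Psi_-|_{z_j=q^2z_i}=0$ for $i<j$, which $\Psi_-$ must satisfy since $(\hat T_i-q^{-1})\Psi_-=0$. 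Once the $c_\xi$ are fixed, Lemma~\ref{lemma-red-qKZ-1b} generates every other component, and the $i=0$ equation follows from showing $\hat T_0\Psi=-q_0\Psi$ on the module --- a statement about the $\hat T_0$-action on $I_+^{(2,2r+1)}$ proved as in Theorem~\ref{theorem-qKZ-2b}, not something ``automatic from the $Y$-eigenvalue structure'' as you assert, since $\hat T_0$ applied to a $\hat Y$-eigenfunction need not return a multiple of it. Without the indeterminate-coefficient ansatz and the wheel conditions, your argument does not close.
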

\begin{proof}
From Proposition~\ref{prop-Y-0-2}, $\Psi_{0}$ is the simultaneous 
eigenfunction of the operator $Y_{i}$, $1\le i\le N$.
Hence, $\Psi_{0}$ is a non-symmetric Koornwinder polynomial satisfying 
Lemma~\ref{lemma-T-1b}. 
The non-symmetric Koornwinder polynomial $E_{\xi^{0}}$ satisfies 
the above criteria under the specialization $s^{4r}q^{6}=1$ and 
$q_{N}^{2}=-q$.
Since $E_{\xi^{0}}\in I_{+}^{(2,2r+1)}$ and $q_N^{2}=-q$, 
one has a polynomial representation of the affine Hecke algebra with
the dimension $\genfrac{(}{)}{0pt}{}{N}{\lfloor (N+1)/2\rfloor}$ in 
$I_{+}^{(2,2r+1)}$ (see Proposition~\ref{prop-dim-I}).
The component $\Psi_{-}$ is written as 
\begin{eqnarray*}
\Psi_{-}=\sum_{\xi}c_{\xi}E_{\xi}
\end{eqnarray*}
where $c_{\xi}\in\mathbb{K}$ is an indeterminate and the sum is taken 
over all admissible $\xi$ with $E_{\xi}\in I_{+}^{(2,2r+1)}$. 
From Lemma~\ref{lemma-red-qKZ-1b}, any component $\Psi_{b}$, 
$b\in\mathcal{B}_{N}$ is written in terms of $\Psi_{-}$.
The component $\Psi_{-}$ satisfies the following vanishing conditions: 
\begin{eqnarray*}
\Psi_{-}|_{z_{j}=q^{2}z_{i}}=0, \quad 1\le i<j\le N.
\end{eqnarray*}
From $\Psi_{0}\propto E_{\xi^{0}}$ and the vanishing conditions, 
the indeterminates $c_{\xi}$ are determined except normalization. 
By a similar argument to Theorem~\ref{theorem-qKZ-2b}, one can 
show that $\hat{T}_{0}\Psi(\mathbf{z})=-q_{0}\Psi(\mathbf{z})$, 
which implies that $\Psi(\mathbf{z})=\Psi(s_0\mathbf{z})$ under 
the specializations $s^{4r}q^{6}=1$ and $q_N^{2}=-q$. 
Therefore, we have a Laurent polynomial solution of the boundary qKZ equation, 
which is characterized by $\xi^{+}$.
This completes the proof.
\end{proof}

\begin{cor}
We have $\Psi_{(N+1)/2}\propto E_{\xi^{1}}$. 
\end{cor}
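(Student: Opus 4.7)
The plan is to mirror the proof of the preceding theorem, replacing the pair $(\Psi_{0},\xi^{0})$ by $(\Psi_{(N+1)/2},\xi^{1})$. Since both $b_{(N+1)/2}$ and $\xi^{1}$ are only defined for $N$ odd, I work throughout in that case. The strategy is to exhibit $\Psi_{(N+1)/2}$ as a simultaneous $\hat{Y}_{i}$-eigenfunction whose spectrum coincides, under the specializations $s^{4r}q^{6}=1$ and $q_{N}^{2}=-q$, with that of the non-symmetric Koornwinder polynomial $E_{\xi^{1}}$, and then to invoke uniqueness inside the irreducible representation $I_{+}^{(2,2r+1)}$ of Proposition~\ref{prop-dim-I}.

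First, Proposition~\ref{prop-Y-0-2} supplies the $\hat{Y}_{i}$-eigenvalues of $\Psi_{(N+1)/2}$: they equal $-q^{-(N-1-2i)}q_{0}q_{N}^{-1}$ for $1\le i\le (N-1)/2$ and $q^{-2(N-i)}q_{0}q_{N}$ for $(N+1)/2\le i\le N$. The identity $-q_{N}^{-1}=q^{-1}q_{N}$, valid at $q_{N}^{2}=-q$, rewrites the first block as $q^{-(N-2i)}q_{0}q_{N}$, so all eigenvalues share the common factor $q_{0}q_{N}$.

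Second, I compute $y(\xi^{1})_{i}=s^{2\xi^{1}_{i}}q^{2\rho(\xi^{1})_{i}}(q_{0}q_{N})^{\sigma(\xi^{1})_{i}}$ directly. Since the entries of $\xi^{1}$ are non-negative, $\sigma(\xi^{1})_{i}=+$ throughout. The shortest $w\in W_{0}$ with $w\xi^{+}=\xi^{1}$ is the perfect shuffle $w^{-1}(i)=2i$ for $1\le i\le (N-1)/2$ and $w^{-1}(i)=2i-N$ for $(N+1)/2\le i\le N$; applying it to $\rho=(N-1,\dots,0)$ gives $\rho(\xi^{1})_{i}=N-2i$ on the first block and $\rho(\xi^{1})_{i}=2(N-i)$ on the second. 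Substituting these formulas and using $s^{2r}=q^{-3}$ (the relevant root of $s^{4r}=q^{-6}$, up to the $m$-th root of unity appearing in the specialization) reproduces exactly the two blocks of eigenvalues from Step~1.

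Third, I verify that $\xi^{1}$ is admissible in the sense of Definition~\ref{Def-adm} for $(k,r')=(2,2r+1)$, so that $E_{\xi^{1}}$ is a well-defined simultaneous $\hat{Y}$-eigenfunction at the specialization; together with $\xi^{1}\in\mathbb{Z}_{\ge 0}^{N}$, this places $E_{\xi^{1}}$ inside $I_{+}^{(2,2r+1)}$. By Proposition~\ref{prop-dim-I} the latter space is an irreducible representation of the affine Hecke algebra, so a simultaneous $\hat{Y}$-eigenvector is determined up to a scalar by its spectrum. Combined with the matching in Step~2, this forces $\Psi_{(N+1)/2}\propto E_{\xi^{1}}$.

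The main obstacle is Step~3: ensuring that no other admissible $\xi'\in W_{0}\xi^{+}$ lying in $I_{+}^{(2,2r+1)}$ has the same $\hat{Y}$-spectrum at the specialization, since the generic separation of the eigenvalues $y(\lambda)_{i}$ can collapse when $s^{4r}q^{6}=1$. Ruling out such coincidences — either by a direct inspection of admissible labels in $W_{0}\xi^{+}$, or by tracking the intertwiners $\phi_{i}$ used in the proof of the preceding theorem to identify $\xi^{1}$ uniquely — is the point requiring the most care; everything else is the same formal bookkeeping as in that theorem.
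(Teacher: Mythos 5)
Your eigenvalue computation is correct and agrees with what the paper asserts: under $q_{N}^{2}=-q$ the first block of eigenvalues from Proposition~\ref{prop-Y-0-2} rewrites as $q^{-(N-2i)}q_{0}q_{N}$, and with $s^{2r}=q^{-3}$ one gets $y(\xi^{1})_{i}=s^{2\xi^{1}_{i}}q^{2\rho(\xi^{1})_{i}}q_{0}q_{N}$ matching both blocks, using $\rho(\xi^{1})_{i}=N-2i$ and $2(N-i)$ as you derived. Up to this point you are on the same track as the paper's proof.

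The gap is exactly the one you flag yourself and then do not close: your uniqueness mechanism is ``irreducibility of $I_{+}^{(2,2r+1)}$ plus matching spectrum implies proportionality,'' but irreducibility of a module does not imply that a simultaneous $\hat{Y}$-eigenvector is determined up to scalar by its eigenvalues --- for that you need the $\hat{Y}$-spectrum to be multiplicity-free on the module, and the paper warns in the introduction that precisely at these specializations the polynomial representation can be non-$Y$-semisimple, so eigenvalue collisions are a live possibility rather than a technicality. The paper's proof sidesteps this: after matching the $\hat{Y}$-eigenvalues it additionally checks that the actions of the operators $\hat{T}_{i}$ on $\Psi_{(N+1)/2}$ and on $E_{\xi^{1}}$ agree (both are obtained from the common generator $\Psi_{0}\propto E_{\xi^{0}}$ by the same Hecke-algebra operations, via Lemma~\ref{lemma-T-1b}, Proposition~\ref{lemma-one-act-2} and the intertwiners), and it is this second matching, not spectral separation, that pins down the identification. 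To complete your argument you would need either to carry out that $\hat{T}_{i}$-comparison, or to actually verify that no other admissible $\xi'\in W_{0}\xi^{+}$ with $E_{\xi'}\in I_{+}^{(2,2r+1)}$ shares the spectrum of $\xi^{1}$ at $s^{4r}q^{6}=1$, $q_{N}^{2}=-q$; as written, neither is done.
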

\begin{proof}
From Proposition~\ref{prop-Y-0-2}, $\Psi_{(N+1)/2}$ is a simultaneous 
eigenfunction of the operators $\hat{Y}_{i}$, $1\le i\le N$. 
The non-symmetric Koornwinder polynomial $E_{\xi^{1}}$ has 
the same eigenvalues as $\Psi_{(N+1)/2}$ under the specializations 
$s^{4r}q^{6}=1$ and $q_{N}^{2}=-q$. 
Further, the actions of $\hat{T}_{i}$ on $\Psi_{(N+1)/2}$ and 
$E_{\xi^{1}}$ are equivalent to each other. 
Thus we have $\Psi_{(N+1)/2}\propto E_{\xi^{1}}$. 
\end{proof}

Since $(\hat{T}_{i}-q^{-1})\Psi_{-}=0$ for $1\le i\le N-1$, we have 
\begin{eqnarray*}
\Psi_{-}=C\prod_{i=1}^{N}z_{i}^{-(N-1)}
\prod_{1\le i<j\le N}(qz_{i}-q^{-1}z_{j})(qz_{i}z_{j}-q^{-1})
\tilde{\Psi}_{-}(\mathbf{z})
\end{eqnarray*}
where $C$ is a constant term and $\tilde{\Psi}_{-}$ satisfies 
$\tilde{\Psi}_{-}(\mathbf{z})=\tilde{\Psi}_{-}(\tau_{i}\mathbf{z})$ 
for $1\le i\le N$.
The candidate for the solution with the minimal degree is 
$\tilde{\Psi}_{-}=1$. 
In this case, the dominant term in $\Psi_{-}$ is given by 
$\prod_{i=1}^{N}z_{i}^{N-i}$. 
The solution corresponding to $(k,r')=(2,3)$ is the one 
with the minimal degree and studied in \cite{deGPya10,PZJ07}.

\bibliographystyle{amsplainhyper} 
\bibliography{biblio}

\end{document}